\DeclarePairedDelimiter{\ceil}{\lceil}{\rceil}
\definecolor{DarkBlue}{rgb}{0.1,0.1,0.5}
\definecolor{DarkGreen}{rgb}{0.1,0.5,0.1}
\renewcommand*{\backref}[1]{}
\renewcommand*{\backrefalt}[4]{%
    \ifcase #1 (Not cited.)%
    \or        (Cited on page~#2)%
    \else      (Cited on pages~#2)%
    \fi}
\newtheorem{lemma}{Lemma}
\newtheorem{claim}{Claim}
\newtheorem{definition}{Definition}
\theoremstyle{definition}
\newcommand{\A}{\mathcal A}
\newcommand{\G}{{\mathcal G}}
\newcommand{\I}{{\mathcal I}}
\newcommand{\R}{\mathcal R}
\newcommand{\T}{\mathcal T}
\newcommand{\W}{\mathcal{W}}
\newcommand{\E}{\mathcal{E}}
\newcommand{\mL}{\mathcal{L}}
\newcommand{\ALG}{\textsc{Alg}}
\newcommand{\EF}{{\textrm{EF}}}
\newcommand{\linear}{\textsc{Ef-Linear}}
\newcommand{\PPAD}{{\rm PPAD} }
\newcommand{\PLS}{{\rm PLS} }
\newcommand{\plc}{\textsc{Ef-PiecewiseLinear}}
\newcommand{\tup}{(\pi, L,U)}
\begin{document}
	\title{{\bfseries Fully Polynomial-Time Approximation Schemes for \\ Fair Rent Division}}
	\author{Eshwar Ram Arunachaleswaran\thanks{Indian Institute of Science. {\tt eshwarram.arunachaleswaran@gmail.com}} \qquad Siddharth Barman\thanks{Indian Institute of Science. {\tt barman@iisc.ac.in}} \qquad Nidhi Rathi\thanks{Indian Institute of Science. {\tt nidhirathi@iisc.ac.in}}}
	\date{}
	\maketitle

\begin{abstract}
We study the problem of fair rent division that entails splitting the rent and allocating the rooms of an apartment among roommates (agents) in a \emph{fair} manner. In this setup, a distribution of the rent and an accompanying allocation is said to be fair if it is envy free, i.e., under the imposed rents, no agent has a strictly stronger preference for any other agent's room. The cardinal preferences of the agents are expressed via functions which specify the utilities of the agents for the rooms for every possible room rent/price. While envy-free solutions are guaranteed to exist under reasonably general utility functions, efficient algorithms for finding them were known only for \emph{quasilinear utilities}. This work addresses this notable gap and develops approximation algorithms for fair rent division with minimal assumptions on the utility functions. 

Specifically, we show that if the agents have continuous, monotone decreasing, and piecewise-linear utilities, then the fair rent-division problem admits a fully polynomial-time approximation scheme (FPTAS). That is, we develop algorithms that find allocations and prices of the rooms such that for each agent $a$ the utility of the room assigned to it is within a factor of $(1 + \varepsilon)$ of the utility of the room most preferred by $a$. Here, $\varepsilon>0$ is an approximation parameter, and the running time of the algorithms is polynomial in $1/\varepsilon$ and the input size. In addition, we show that the methods developed in this work provide efficient, truthful mechanisms for special cases of the rent-division problem. Envy-free solutions correspond to equilibria of a two-sided matching market with monetary transfers; hence, this work also provides efficient algorithms for finding approximate equilibria in such markets. We complement the algorithmic results by proving that the fair rent division problem (under continuous, monotone decreasing, and piecewise-linear utilities) lies in the intersection of the complexity classes \PPAD and \PLS. 

\end{abstract}

\section{Introduction}
\label{section:intro}

Fair division addresses the fundamental problem of allocating goods among agents with equal entitlements, but distinct preferences. 
Such resource-allocation settings have been studied over multiple decades in economics, mathematics, and computer science; see, e.g.,~\cite{brams1996fair}, \cite{brandt2016handbook} and \cite{moulin2016handbook}. In this line of work, a classic problem---with direct practical implications\footnote{For example, the website Spliddit, \url{http://www.spliddit.org/}, has been used over thirty thousand times for fair rent division (as of June 2018) \cite{Gal2017}.}---is to allocate rooms (indivisible goods) among agents while also assigning the rent (setting prices) in a \emph{fair} manner.  


The standard notion of fairness in this setting is \emph{envy-freeness} (introduced in \cite{foley1967resource} and studied in \cite{varian1974equity} and \cite{stromquist1980cut}) which requires that, under the imposed rents, each agent prefers the room allocated to it over that of any other agent.  The preferences of the agents are expressed via functions (one for every agent-room pair) which specify the utilities of the agents for the rooms at every possible room rent/price. Hence, for the rent division problem, we say that an assignment of the rooms and the rent is envy free (i.e., fair), if the utility that each agent $a$ derives from the room allocated to it (under the imposed rents) is at least as high as $a$'s utility for any other room. 

Prior work has established that envy-free rent divisions are guaranteed to exist under general, but well-behaved, utility functions. In particular, Sun and Yang \cite{sun2001fair} have shown that if all the utility functions in a rent-division instance are continuous, monotone decreasing, and bounded, then an envy-free solution will necessarily exist. This result relies on a fixed-point argument--namely, the Knaster, Kuratowski, and Mazurkiewicz (KKM) Lemma \cite{knaster1929beweis}. Similar existence guarantees have been established by Svensson \cite{svensson1983large} and Alkan et al. \cite{alkan1991fair}.\footnote{A key tool in the work of Alkan et al. \cite{alkan1991fair} is a perturbation lemma (\cref{lemma:original-perturb}), which is utilized in this work as well.} 
 
The special case of \emph{quasilinear utilities} has also received attention in the rent-division context; see, e.g., \cite{aragones1995derivation}, \cite{abdulkadirouglu2004room}, \cite{Gal2017}, and~\cite{procacciafair}. In this utility model each agent $a$ has a base value for every room $r$, and $a$'s utility for $r$ at price $p_r$ is equal to the base value minus $p_r$. For quasilinear utilities, the work of Aragones \cite{aragones1995derivation} provides a combinatorial proof of existence along with an efficient algorithm for finding envy-free solutions. 

It is relevant to note that while the existence result holds under fairly general utilities, efficient algorithms were known only for the quasilinear case. This work addresses this notable gap and develops approximation algorithms for fair rent division with essentially minimal assumptions on the utility functions. In particular, our results hold as long as the utility functions are continuous, monotone decreasing, and piecewise linear. Note that, for discontinuous and unbounded functions the existence of an envy-free solution cannot be guaranteed (see \cref{section:counter} for examples). Also, the assumption of piecewise linearity ensures that the underlying utility functions can be explicitly provided as input. 

Indeed, the utility functions considered in this work  are not confined to be concave (or convex), and can be used to heterogeneously expresses agents' preferences at different price ranges, e.g., we can model agents who have quasilinear utilities and a fixed budget by considering piecewise-linear functions that experience a sharp drop when the price reaches the budget; Procaccia et al.~\cite{procacciafair} provide a specialized algorithm to address these budget constraints. By contrast, this setting can be modeled as a special case of the utilities considered in this work.  

The rent division problem can be stated abstractly in terms of dividing indivisible goods (the rooms) along with money (the rents) among unit-demand agents (i.e., each agent wishes to acquire at most one item). This perspective is adapted in the work of Svensson \cite{svensson1983large}, Alkan et. al. \cite{alkan1991fair}, Sun and Yang \cite{sun2001fair}, and Aragones \cite{aragones1995derivation}, who also consider relevant  variants of the rent-division problem, such as characterizing \emph{optimal prices} and \emph{truthful mechanisms}. This paper addresses these variants and, in particular, focuses on (i) finding an envy-free solution with nonnegative prices and nonnegative utilities along with (ii) finding an envy-free solution which splits a given total rent.

This work shows that both of these problems admit a fully polynomial-time approximation scheme (FPTAS). Formally, the developed algorithms find allocations of the rooms and rents such that for each agent $a$ the utility of the room assigned to it is within a factor of $(1+\varepsilon)$ of the utility of the room most preferred by $a$. Here, $\varepsilon >0$ is an approximation parameter and the running time of the developed algorithms is polynomial in $1/\varepsilon$ and the input size. Overall, we show that a natural, approximate analogue of envy freeness can be achieved efficiently for the broad class of utilities mentioned above. The following list summarizes our contributions. \\


\noindent 
{\bf Our Results and Techniques}
Throughout, we study the problem of fair rent division under continuous, monotone decreasing, and piecewise-linear utility functions. We will also conform to the standard assumption that every agent's utility for each room is nonnegative when the room rent is zero. 
\begin{itemize}
\item We design an algorithm for finding an approximately envy-free solution in which the rents and the utilities of the agents are nonnegative (\cref{theorem:main-result}). This approximation guarantee asserts that for each agent $a$ the utility of the room assigned to it is within a factor of $(1+\varepsilon)$ of the utility of the room most preferred by $a$. The runtime of our approximation algorithm is polynomial in $1/\varepsilon$ and the input size. 

\item We also develop an FPTAS for the problem of computing an envy-free solution in which the sum of prices (room rents) is equal to a specified total rent (\cref{theorem:fixed-cost}). In contrast to the previous result, the solution obtained for this problem might impose negative prices, i.e., require \emph{transfers} among agents. Note that there exist rent-division instances in which every envy-free solution, which splits the total rent, levies negative rents on some room.

Under this total-rent constraint, nonnegativity of utilities under fair solutions cannot be guaranteed either. To circumvent this issue a natural scaling assumption is adopted in prior work on quasilinear utilities, see, e.g., \cite{brams2001competitive}. We show that, even in our setting, if the given rent-division instance satisfies this assumption, then we can find an approximately envy-free solution which not only splits the given total rent, but also yields nonnegative utilities (\cref{theorem:ir}).

The two results mentioned above are obtained by first considering \emph{structured instances} in which the constituent slopes of all the piecewise-linear utilities are integer powers of $(1+\varepsilon)$. We develop an algorithm that finds (exact) envy-free solutions of such instances in time that is polynomial in $1/\varepsilon$ and the input size (\cref{theorem:ef-rounded}). We then show that any given rent-division instance $\mathcal{I}$ can be \emph{rounded} to obtain a structured instance $\overline{\mathcal{I}}$ such that an envy-free solution of $\overline{\mathcal{I}}$ is an approximately envy-free solution of $\mathcal{I}$ (\cref{lemma:apx-guarantee}).

\item Sun and Yang~\cite{sun2003general} proved a somewhat surprising result that every rent-division instance admits an \emph{optimal solution} which is  simultaneously envy free, \emph{efficient}, and \emph{nonmanipulable}. The optimal solution is efficient in the sense that the utility profile it induces weakly Pareto dominates the utility profile of any other envy-free solution with nonnegative rents; the optimal solution itself imposes nonnegative rents. Here, nonmanipulability refers to the property that any algorithm that select the optimal solution as its outcome is guaranteed to be dominant strategy incentive compatible (DSIC). We show that for structured instances optimal prices can be computed efficiently. Therefore, for such instances we obtain a DSIC mechanism for the fair rent-division problem.  Our result is based on a novel characterization of optimal solutions, which not only provides an alternate proof of existence of such solutions, but also leads to an efficient algorithm.\footnote{The result of Green and Laffont~\cite{green1979incentives} rules out universal existence of DSIC mechanisms that are both Pareto efficient and budget balanced (i.e., split a given total rent). Hence, in the DSIC result of Sun and Yang~\cite{sun2003general}---and its algorithmic version obtained in this work---the sum of rents cannot be fixed a priori.}

\item The developed algorithm also provides an efficient method to find (exact) envy-free solutions when the number of agents is fixed (\cref{subsection:special}). Note that in this special case the piecewise-linear utilities can still be intricate. We additionally show that if in a given rent-division instance the number of distinct slopes (across utility functions) is a fixed constant, then an envy-free solution can be computed in polynomial time (\cref{subsection:special}).  
 
\item We complement the algorithmic results by proving that the \emph{total problem} of finding an envy-free solution of a rent-division instance (with continuous, monotone decreasing, and piecewise-linear utilities) is contained in the complexity class \PPAD (Polynomial Parity Arguments on Directed graphs) as well as \PLS (Polynomial Local Search); see \cref{theorem:complexity}. Though, prior work has established the existence of fair solutions using a fixed-point argument---in particular, the KKM Lemma~\cite{sun2001fair}---the containment of the (exact) fair rent division in \PPAD is not an immediate consequence of this proof. This follows the fact that the computational version of the KKM lemma would entail discretization of the solution space (as is required in the case of related problems such as Sperner's Lemma~\cite{papadimitriou1994complexity} and Envy-Free Cake Cutting~\cite{deng2012algorithmic}) and, hence, we would get an approximate solution of the reduced problem. Therefore, in and of itself, such a reduction would imply that approximating rent division is in {\rm PPAD}---this result would not rule out the possibility that the exact version is harder. We bypass this issue by establishing a reduction from fair rent division to the problem of computing an exact Nash equilibrium in polymatrix games. Given that finding a Nash equilibrium is such games is known to be {\rm PPAD}-complete~\cite{daskalakis2006game}, we get the desired containment. This reduction might be of independent interest, since it provides an alternate proof of existence of envy-free solutions using Nash's theorem (specifically for polymatrix games), rather than a more involved fixed-point argument. 

Another conceptually interesting part of the complexity analysis is the containment of fair rent division in the complexity class {\rm PLS}. This containment relies on a potential argument which is developed in this paper. Overall, these results render fair rent division as one of the few ``natural'' problems that lie in $\PPAD \cap \PLS$ and for which a polynomial-time algorithm is not known.\footnote{Another problem with the same complexity status is \emph{Colorful Carath\'{e}odory}~\cite{meunier2017rainbow}. Showing that these problems are complete for a semantic subclass of \PPAD $\cap$ \PLS remains an interesting open question.}
\end{itemize}

As mentioned previously, we develop an exact algorithm for finding envy-free solutions for problem instances in which the slopes of all the piecewise-linear utilities are integer powers of $(1+\varepsilon)$. We also show that general rent-division instances can be rounded to such structured instances while approximately maintaining  envy freeness of solutions. This, overall, provides an FPTAS for fair rent division. 


Our exact algorithm for finding an envy-free solution (\cref{algorithm:envy-free}) begins by employing a homotopy idea. Specifically, we pick a large  threshold $M$ and for each utility function $v$ construct a surrogate function $\widehat{v}$ such that $v$ and $\widehat{v}$ have the same function value for all prices less than $M$. For prices greater than $M$, we set $\widehat{v}$ to be quasilinear. Recall that an envy-free solution with a nonnegative price vector can be computed efficiently if the utilities are quasilinear~\cite{aragones1995derivation}. Therefore, we can find a nonnegative envy-free price vector $p^0$ for the quasilinear ends of $\widehat{v}$s. Since a uniform, additive shift in prices maintains envy freeness under quasilinear utilities, we can add $M$ to each component of $p^0$ and ensure that it provides an initial envy-free solution for the utilities $\widehat{v}$s. 




The key part of the algorithm is to iteratively reduce the prices while maintaining envy freeness under $\widehat{v}$s. We continue to perform such a price reduction till all the rents are less than $M$. At this point, the solution is envy free with respect to both the constructed utilities $\widehat{v}$s and the given utilities $v$s.  Alkan et al. \cite{alkan1991fair} used a perturbation lemma (\cref{lemma:original-perturb}) to accomplish an arbitrarily small reduction in prices. By contrast, our algorithm performs this update by solving linear programs.\footnote{In and of itself, the perturbation lemma does not lead to a finite-time algorithm for finding fair solutions; Alkan \cite{alkan1989existence} provides an instance wherein the perturbations do not converge. Note that this lemma is not directly instantiated in our algorithms, however it is used to prove the correctness of the developed methods.}


Our results rely on establishing interesting geometric properties of the set of prices that induce envy-free solutions. Though this set is nonconvex, it consists   of a sequence of polytopes which successively intersect; see \cref{figure:non-convex}. These intersections necessarily include the price vectors at which the sum of prices is minimized in each polytope. At a high level, our algorithm obtains the price reduction by performing a ``walk'' over these polytopes. Each step of the walk entails minimizing the sum of prices over the current polytope (to reach an intersection point) and, subsequently, switching to a new polytope. The minimization step corresponds to solving a linear program and the switching step is implemented by solving a maximum-weight perfect matching problem (see \cref{algorithm:envy-free} for details).  

\begin{figure}[h]
	\begin{center}
		\includegraphics[scale=.5]{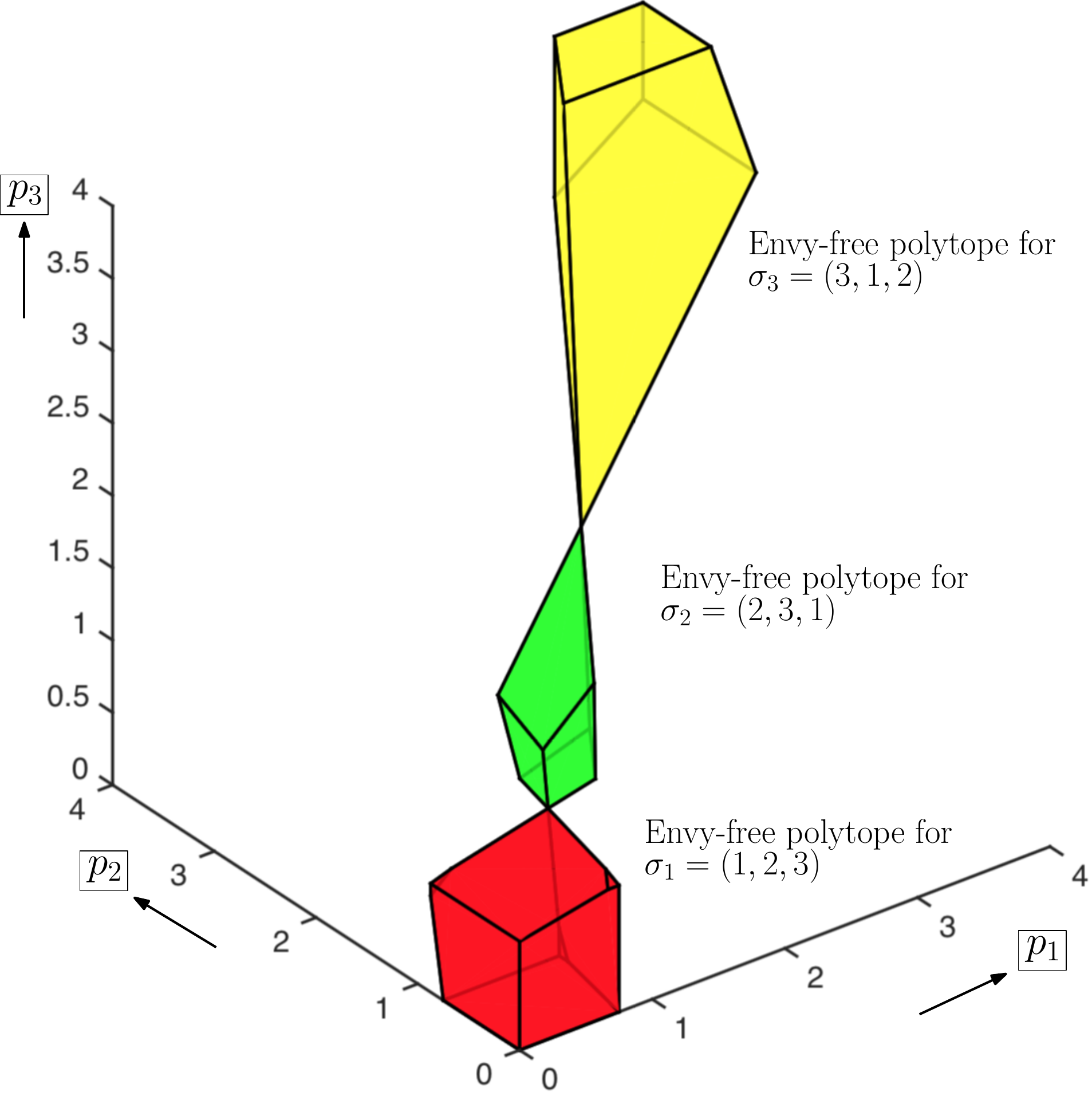}
	\end{center}
\caption{The figure depicts (a) nonconvexity of the set of prices that induce envy-free solutions, (b) chaining of polytopes, and (c) intersection of the polytopes contains the vertex that minimizes the sum of prices. Numerical details of this example appear in Appendix~\ref{section: Non-convexity}.}
\label{figure:non-convex}
\end{figure}

Another contribution of this work is to show that there exists a potential which decreases as the walk though the polytopes progresses. This potential argument not only bounds the runtime of the developed algorithm, but also enables us to show that the rent-division problem is contained in {\rm PLS}. 

It is relevant to note that the resulting algorithm is rather simple, even though the analysis is somewhat intricate---the algorithm primarily involves finding maximum-weight perfect matchings and solving linear programs. As mentioned previously, rent-division algorithms have been widely used in practice. A notable example is the popular website Spliddit,\footnote{\url{http://www.spliddit.org}. See also \url{https://www.nytimes.com/interactive/2014/science/rent-division-calculator.html}, \url{http://acritch.com/rent}, and \url{https://www.splitwise.com/calculators/rent}.} the scale of which highlights that efficient methods for fair rent division---like the ones developed in this work---have a potential for direct impact. \\


\noindent 
{\bf Further Related Work}
The work of Su \cite{su1999rental} establishes the existence of envy-free solutions in an ordinal version of rent-division problem. In this setup, each agent has a preference (over the rooms) for every possible division of the total rent. This existence result requires that the preferences constitute a closed set and satisfy the assumption that each agent is \emph{miserly}, i.e., prefers a free room (a room with zero rent) to a non-free room. Note that in this setting the preferences can be nonmonotone and might not admit a succinct, explicit representation. 

This ordinal setup is incomparable with the cardinal utility model considered in this work; in particular, the miserly assumption does not hold for arbitrary  quasilinear utilities. Furthermore, the ordinal version of the rent-division problem is {\rm PPAD}-complete.\footnote{A reduction along the lines of the one given in Deng et al. \cite{deng2012algorithmic} establishes this hardness result.} By comparison, we show that cardinal version considered in this paper is contained in the complexity class $\PPAD \cap \PLS$. \\

\noindent 
{\bf Competitive Equilibria of Two-Sided Matching Markets:} Fair rent division is mathematically equivalent to a two-sided matching market with monetary transfers. In this market formulation each agent is assumed to be unit demand and each agent's utility is a function of the indivisible good she receives as well as the amount of money she pays. Quinzii~\cite{quinzii1984core} established the existence of competitive equilibria under general (non-quasilinear) utilities in this market framework.  Furthermore, Demange and Gale~\cite{demange1985strategy} showed that equilibria in these markets admit a lattice structure.  A constructive proof of existence of such equilibria was established by Alkan~\cite{alkan1989existence}; in particular, Alkan's algorithm~\cite{alkan1989existence} finds an equilibrium in finite time, though it does not admit a polynomial running time bound. The quasilinear version of such two-sided matching markets has also been studied in prior work~\cite{gale1960theory, shapley1971assignment, demange1982strategyproofness}.

An envy-free solution corresponds to an equilibria of such a market, since envy freeness implies that each agent is maximizing its utility under the imposed prices. Therefore, this work also provides novel results for efficiently finding approximate equilibria of such two-sided matching markets under a broad class of utility functions. 


\section{Notation and Preliminaries}
\label{section:notation}

\noindent
{\bf Problem Instance} A \emph{rent-division instance} is a tuple $\langle \mathcal{A}, \mathcal{R}, \{ v_a(r, \cdot) \}_{a\in \mathcal{A}, r \in \mathcal{R}} \  \rangle$ wherein $\mathcal{A} =[n]$  denotes the set of $n$ agents and $\mathcal{R} =[n]$ denotes the set of $n$ rooms. The cardinal preference of each agent $a \in \mathcal{A}$ for every room $r \in  \mathcal{R}$ is specified via a utility function $v_a(r, \cdot)$; specifically, agent $a$'s utility for room $r$ at price (rent) $p_r \in \mathbb{R}$ is $v_a(r, p_r) \in \mathbb{R}$. 

Throughout, we will assume that the utility functions are continuous, monotone decreasing, and piecewise linear. Each function $v_a(r, \cdot) $ is given as an input via its constituent linear pieces. In particular,  $v_a(r, \cdot)$ is specified using its base value $v_a(r,0)$, a set of increasing \emph{break points}, $b_1 = 0, b_2, b_3, \ldots, b_t \in \mathbb{R}_+$, and the magnitude of the slopes $\{ \lambda^a_{r,i} \in \mathbb{R}_+ \}_{i \in [t]}$. Here, the utility in the interval $[b_{i}, b_{i+1}]$ is a linear function with slope $-\lambda^a_{r,i}$. Therefore, if price $x \in \mathbb{R}_+$ is contained in, say, the $i$th interval, $x \in [b_i, b_{i+1}]$, then the agent's utility for the room at $x$ is  
\begin{align*}
v_a(r, x) & \coloneqq v_a(r,0) - \sum_{j=1}^{i-1} \lambda^a_{r,j}  (b_{j+1} - b_j) \ - \lambda^a_{r,i} (x - b_i) 
\end{align*}

For ease of presentation, we will drop the dependency on $i$ and use $\lambda^a_r$ to denote $\lambda^a_{r,i}$ when the interval $i$ is clear from context.

Rent division has been extensively studied for the special case of {quasilinear utilities}. Here, every agent $a$ has a base value $v_a(r,0)$ for each room $r$, and the utility functions are quasilinear in the prices, $v_a(r, p_r) \coloneqq v_a(r,0) - p_r$. An efficient algorithm for quasilinear utilities  was developed in~\cite{aragones1995derivation}. Our main algorithm solves the quasilinear case as a subroutine; for completeness, we outline a polynomial-time algorithm for fair rent division under quasilinear utilities in \cref{section:quasi}. \\

\noindent
{\bf Allocations and Envy-Free Solutions} An \emph{allocation} refers to a bijection $\pi: \mathcal{A} \mapsto \mathcal{R}$ where room $\pi(a) \in \mathcal{R}$ is assigned to agent $a \in \mathcal{A}$. Furthermore, a \emph{solution}, $(\pi, p)$, to a rent-division instance is an allocation $\pi$ along with a price vector $p=(p_1, p_2, \ldots, p_n)$ for the $n$ rooms. The utility that agent $a$ achieves under the solution $(\pi, p)$ is $v_a(\pi(a), p_{\pi(a)})$. We will primarily consider solutions wherein the rents are nonnegative, $p_r \geq 0$ for all $r$; exceptions to this convention will be explicitly mentioned. 

Recall that {envy freeness} requires that each agent prefers its own ``share'' over that of any other agent. Formally, for a rent division instance $\langle \mathcal{A}, \mathcal{R}, \{ v_a(r, \cdot) \}_{a, r} \rangle$, a solution $(\pi, p)$ is said to be \emph{envy free} ($\EF$) iff, under the imposed rents, every agent prefers the room assigned to it over any other room, i.e., for each $a \in \mathcal{A}$ and every room $r\in R$ we have $v_a(\pi(a), p_{\pi(a)}) \geq v_a(r, p_r)$.

A direct characterization of envy-free solutions is obtained by considering a bipartite graph $\mathcal{F}(p) :=(\mathcal{A} \cup \mathcal{R}, F)$ where edge $(a,r)$ is included in the edge set $F$ iff $r$ is a maximum utility room for agent $a$, i.e., $(a,r) \in F$ iff $v_a(r, p_r) \geq v_a(r', p_{r'})$ for all $r' \in \mathcal{R}$. Note that $(\pi, p)$ is an $\EF$ solution iff $\pi$ is a perfect matching in $\mathcal{F}(p)$. We will refer to $\mathcal{F}(p)$ as the \emph{first-choice graph} at price vector $p \in \mathbb{R}^n$.  

Another useful notion is that of a \emph{linear domain}, which, for a given price vector $p$, specifies a containment-wise maximal region which contains $p$ and wherein the slopes of the utility functions do not change.    

\begin{definition}[Linear Domain]
For a price vector $p \in \mathbb{R}^n$, a linear domain $(L, U) \in \mathbb{R}^n \times \mathbb{R}^n$ is a maximal region $L < p \leq U$ (i.e., $L_r < p_r \leq U_r$ for all $r$) wherein for all agents $a$ and rooms $r$ the utility function $v_a(r, \cdot)$ has a constant slope (left derivative). 
\end{definition}

Note that breakpoints of any utility function lie on the boundary of linear domains, and never in their interior. Since the inequality defining the lower bound $L$ is strict, if for a room $r$ the price $p_r$ is equal to a break point, say $b_i$, then we have $U_r = b_i$ and $L_r < b_i$. Furthermore, the linear domain of a given price vector $p$ can be computed efficiently by finding, for each $p_r$, the breakpoints (among all the breakpoints of $v_a(r, \cdot)$s) that provide the best lower and upper bounds for $p_r$.  

Our algorithms work with a weighted version of the first-choice graph. In particular, for a price vector $p$ with first-choice graph $\mathcal{F}(p)$ and linear domain $(L,U)$, we will use $\mathcal{F}^w(p) :=(\mathcal{A} \cup \mathcal{R}, F, w)$ to denote a bipartite graph with the same edge set, $F$, as $\mathcal{F}(p)$ and edge weights $w_{(a,r)} := \log \lambda^a_r$ for all $(a,r) \in F$. Here, $\lambda^a_r>0$ is the (fixed) slope magnitude of the utility function $v_a(r, \cdot)$ in the linear domain $(L, U)$. 

By definition, only the ``first-choice edges'' are present in $\mathcal{F}^w(p)$ and have weights associated with them. These edge weights (i.e., the logarithm of the slopes) can be negative, since the slopes' magnitudes $\lambda^a_r >0$ are not necessarily greater than one. \\

\noindent
{\bf Approximate Solutions} This paper develops algorithms for efficiently computing solutions that are approximately envy free, i.e., for finding solutions wherein for each agent $a$ the utility of the room assigned to it is multiplicatively close to the utility of the room most preferred by $a$. 

Our algorithm starts with high room rents and iteratively decreases them till an approximately envy free solution is found. Since the rents are high, agents' utilities can be negative during the intermediate steps of the algorithm. The following definition includes this case of negative utilities and thereby provides a unified way to state the approximation guarantees. 

\begin{definition}[Approximately Envy-Free Solutions]
\label{definition:apx-ef}
For a rent-division instance $\langle \mathcal{A}, \mathcal{R}, \{ v_a(r, \cdot) \}_{a,r} \rangle$ and parameter $\varepsilon >0$, a solution $(\pi, p)$ is said to be $\varepsilon$-approximately envy free ($\varepsilon$-$\EF$) iff
\begin{itemize}
\item[(i)] For each agent $a$ that attains nonnegative utility under the solution (i.e., $v_a(\pi(a), p_{\pi(a)}) \geq 0$) the following inequality holds for every room $r \in \mathcal{R}$: $(1+ \varepsilon) v_a(\pi(a), p_{\pi(a)}) \geq v_a(r, p_r)$.
\item[(ii)] For each agent $a$ that attains negative utility under the solution (i.e., $v_a(\pi(a), p_{\pi(a)}) < 0$) the following inequality holds for every room $r \in \mathcal{R}$: $v_a(\pi(a), p_\pi(a)) \geq (1+\varepsilon) v_a(r, p_r)$.
\end{itemize}
\end{definition}

If the base values of the utility functions are positive, $v_a(r,0) >0$, then our algorithm will necessarily find an approximate solution wherein the agents attain nonnegative utilities. Hence, in this standard case, condition $(i)$ of the above definition will hold for every agent under the computed solution, i.e., the solution found by the algorithm will satisfy the usual $(1+\varepsilon)$-approximation guarantee. 

Furthermore, for an $\varepsilon$-$\EF$ solution, $(\pi, p)$, it must be the case that if agent $a$'s utility for the assigned room $\pi(a)$ is negative, then $a$'s utility for every room is negative at the price vector $p$.

\subsection{The Perturbation Lemma of Alkan et al.~\cite{alkan1991fair}}

The following lemma of Alkan et al.~\cite{alkan1991fair} asserts that starting with any fair solution one can always find another envy-free solution with strictly lower prices for all the rooms. In this work we use this lemma for proving the correctness of the developed algorithms although our algorithms do not directly instantiate this result; in fact, in and of itself, the perturbation lemma does not lead to a finite-time algorithm for finding fair solutions; Alkan~\cite{alkan1989existence} provides an instance wherein the perturbations do not converge.


In the following lemma the rents under the two solutions can be negative.

\begin{lemma}[Perturbation Lemma~\cite{alkan1991fair}]
\label{lemma:original-perturb}
Let $(\pi, p)$ be an envy-free solution of a rent-division division instance $\mathcal{I}$. Then, for any small enough $\delta >0$, there exists another envy-free solution $(\sigma, q)$ of $\mathcal{I}$ such that, for all rooms $r$, we have $p_r - \delta \leq q_r < p_r$.   
\end{lemma}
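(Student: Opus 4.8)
The plan is to prove the Perturbation Lemma by a continuity/fixed-point-free argument that exploits the structure of the first-choice graph $\mathcal{F}(p)$ together with monotonicity of the utilities. Starting from the envy-free solution $(\pi, p)$, I want to lower \emph{all} the prices slightly. The key difficulty is that naively decreasing every price by the same amount $\delta$ need not preserve envy-freeness: the new first-choice graph $\mathcal{F}(q)$ might no longer contain a perfect matching, because decreasing a price makes that room more attractive to \emph{every} agent, but by differing amounts depending on the local slopes $\lambda^a_r$. So the reduction vector must be chosen adaptively.

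\medskip

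First I would set up the combinatorial skeleton. Fix the allocation $\pi$ and consider, for a small parameter $t \geq 0$, a price vector $q(t)$ obtained by decreasing prices along some direction; the goal is to show that for all small enough $t$ one can choose $q(t)$ with $p_r - \delta \le q_r(t) < p_r$ for every $r$ and with $(\sigma, q(t))$ envy-free for a (possibly new) allocation $\sigma$. The natural approach: work within the linear domain $(L,U)$ of $p$, so all slopes $\lambda^a_r$ are constant, and look for a reduction vector $\mathbf{d} = (d_1,\dots,d_n)$ with all $d_r > 0$ such that at $q = p - t\,\mathbf{d}$ the first-choice graph still admits a perfect matching. An agent $a$ currently matched to $\pi(a)$ stays ``happy'' as long as the drop in her utility for $\pi(a)$, namely $\lambda^a_{\pi(a)} d_{\pi(a)}\, t$, does not exceed the drop for every competing room $r$, namely $\lambda^a_r d_r\, t$, beyond the current slack $v_a(\pi(a),p_{\pi(a)}) - v_a(r,p_r) \ge 0$. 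When the slack is strictly positive this holds automatically for small $t$; the binding constraints come from the \emph{tight} edges, i.e. edges of $\mathcal{F}(p)$ other than the matching edges. For those, preservation reduces to the linear-algebraic condition $\lambda^a_{\pi(a)} d_{\pi(a)} \le \lambda^a_r d_r$ for every tight edge $(a,r)$.

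\medskip

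The heart of the argument is then to show such a vector $\mathbf{d}$ with strictly positive entries exists — equivalently, that the system $\lambda^a_{\pi(a)} d_{\pi(a)} \le \lambda^a_r d_r$ over tight edges $(a,r)$, together with $d_r > 0$, is feasible. In multiplicative/logarithmic coordinates $\hat d_r := \log d_r$ this becomes a system of difference constraints $\hat d_{\pi(a)} - \hat d_r \le \log\lambda^a_r - \log\lambda^a_{\pi(a)}$, i.e. a shortest-path-type feasibility question on a directed graph whose arcs are exactly the tight (non-matching) first-choice edges, with weights read off from $\mathcal{F}^w(p)$. Such a system is feasible iff there is no negative-weight cycle; and here a negative cycle would correspond to an alternating cycle of first-choice edges along which the product of slope ratios is unfavorable — but an alternating cycle in $\mathcal{F}(p)$ lets us \emph{rematch} agents along it, and envy-freeness of $(\pi,p)$ forces all such cycles to have zero total log-weight (otherwise one could strictly improve, contradicting that $\pi$ is already an optimal/first-choice matching). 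So no negative cycle exists, the difference system is feasible, and we obtain $\mathbf d > 0$. (If the rematching changes $\pi$ to some $\sigma$, we simply carry the argument with $\sigma$; the statement allows a different allocation.) Having fixed $\mathbf d$, choose $t>0$ small enough that (a) $q = p - t\mathbf d$ stays in the closure of the linear domain, (b) $t\, d_r \le \delta$ for all $r$, and (c) all the strictly-positive slacks remain positive; then $(\sigma, q)$ is envy-free with $p_r - \delta \le q_r < p_r$, as required.

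\medskip

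I expect the main obstacle to be the rigorous justification that the relevant alternating cycles have zero net log-weight — that is, extracting from the assumed envy-freeness of $(\pi,p)$ exactly the ``no negative cycle'' condition needed for feasibility of the difference system, and handling the case where edges are tight but the matching along a cycle must be switched. A secondary technical point is book-keeping at breakpoints: if some $p_r$ equals a breakpoint $b_i$, the linear domain has $U_r = b_i$ and the \emph{left} derivative governs behavior for prices just below $p_r$, so decreasing $p_r$ is exactly the right direction and the slopes used are the correct ones — this needs to be stated carefully but causes no real trouble since we only ever move prices \emph{down}.
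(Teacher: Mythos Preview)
Your overall strategy---reduce to a system of multiplicative (equivalently, logarithmic difference) constraints indexed by the tight edges of $\mathcal{F}(p)$ and argue feasibility via the absence of negative cycles---is essentially the route the paper takes, but there is a sign slip and, more importantly, a genuine gap in how you establish the no-negative-cycle condition.

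The slip: when prices drop, utilities \emph{rise} (the slope magnitude $\lambda^a_r$ is positive), so your ``drop in utility'' language is inverted. For agent $a$ to continue weakly preferring her assigned room after the reduction you need the gain at the assigned room to dominate the gain at every tight alternative, i.e.\ $\lambda^a_{\sigma(a)}\, d_{\sigma(a)} \ge \lambda^a_r\, d_r$, not $\le$. In log coordinates the correct constraint is $\hat d_r - \hat d_{\sigma(a)} \le w_{(a,\sigma(a))} - w_{(a,r)}$.

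The gap: you assert that ``envy-freeness of $(\pi,p)$ forces all such cycles to have zero total log-weight (otherwise one could strictly improve, contradicting that $\pi$ is already an optimal/first-choice matching).'' This is false. Envy-freeness only says $\pi$ is \emph{some} perfect matching in $\mathcal{F}(p)$; it says nothing about the edge weights $w_{(a,r)}=\log\lambda^a_r$, and different perfect matchings in $\mathcal{F}^w(p)$ generically have different weights. An alternating cycle of tight edges can therefore have strictly positive net log-weight, which produces a negative cycle in your constraint graph and makes the difference system for $\pi$ infeasible. The remedy---and this is exactly the content of the paper's \cref{lemma:perturb}---is to first replace $\pi$ by a \emph{maximum weight} perfect matching $\sigma$ in $\mathcal{F}^w(p)$; maximality of $\sigma$ is precisely the statement that no alternating cycle improves weight, hence no negative cycle, hence your feasibility argument goes through. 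The paper packages this step as a multiplicative analogue of the quasilinear envy-free characterization (\cref{lemma:ef-quasi}), obtaining $\sigma$ and the positive vector $d$ simultaneously. Your parenthetical ``if the rematching changes $\pi$ to some $\sigma$, we simply carry the argument with $\sigma$'' gestures at this, but switching along a single offending cycle is not enough---other negative cycles may remain---so you must commit up front to a maximum-weight $\sigma$ rather than treat the switch as an afterthought.
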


Next we state a version of this lemma which identifies the allocations (bijections) which are realizable in the perturbed solutions. This variant will be used in the analysis of our algorithms---a proof of Lemma~\ref{lemma:perturb} is given in \cref{appendix:perturbation} for completeness.

Recall that $\mathcal{F}^w(p)$ denotes the weighted, first-choice graph at price vector $p$. 

\begin{restatable}{lemma}{LemmaStatementPerturb}
\label{lemma:perturb}
Let $(\pi, p)$ be an envy-free solution of a rent-division instance $\mathcal{I}$ and $\sigma$ be any maximum weight perfect matching in $\mathcal{F}^w(p)$. Then, for any small enough $\delta >0$, there exists another envy-free solution $(\sigma, q)$ of $\mathcal{I}$ such that, for all rooms $r$, we have $ p_r - \delta \leq q_r < p_r $.   
\end{restatable}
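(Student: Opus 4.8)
\textbf{Proof proposal for \cref{lemma:perturb}.}

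The plan is to leverage the original perturbation lemma (\cref{lemma:original-perturb}) together with a structural fact: the set of allocations $\sigma$ that can be realized as the matching in a perturbed envy-free solution $(\sigma,q)$ with $q$ arbitrarily close to (and strictly below) $p$ is exactly the set of maximum-weight perfect matchings in $\mathcal{F}^w(p)$. First I would invoke \cref{lemma:original-perturb} to obtain, for each small $\delta>0$, \emph{some} envy-free solution $(\tau_\delta, q^\delta)$ with $p_r - \delta \le q^\delta_r < p_r$ for all $r$. The work is then to (a) understand which bijections $\tau_\delta$ arise, and (b) show that we may take $\tau_\delta = \sigma$ for our chosen maximum-weight matching $\sigma$.

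The key step is a first-order / local analysis of how utilities change as prices drop from $p$ to $q^\delta$. Fix the linear domain $(L,U)$ containing $p$; for $\delta$ small enough, $q^\delta$ lies in the same linear domain (possibly on a lower-dimensional face, but with the same left-derivative slopes $\lambda^a_r$). Write $q^\delta_r = p_r - \delta_r$ with $0 < \delta_r \le \delta$. Then agent $a$'s utility for room $r$ changes by $+\lambda^a_r \delta_r$. For $(\tau_\delta, q^\delta)$ to be envy-free, $\tau_\delta(a)$ must be a maximizer of $v_a(r, q^\delta_r)$ over $r$. Among the rooms that were first choices at $p$ (i.e.\ the edges of $\mathcal{F}(p)$), room $r$ remains a first choice at $q^\delta$ only if its utility gain $\lambda^a_r \delta_r$ is at least that of every other first-choice room of $a$; rooms that were \emph{not} first choices at $p$ had a strict utility deficit there, so for $\delta$ small they cannot become first choices. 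Hence $\tau_\delta$ must be a perfect matching using only edges of $\mathcal{F}(p)$, and moreover the ``tie-breaking'' among first-choice edges is governed by the quantities $\lambda^a_r \delta_r$. I would then argue that the price-drop vector $(\delta_r)_r$, viewed multiplicatively via logarithms, forces $\tau_\delta$ to be a maximum-weight perfect matching with respect to the weights $w_{(a,r)} = \log\lambda^a_r$: summing the envy-free (first-choice) conditions along the edges of any perfect matching $\pi'$ in $\mathcal{F}(p)$ and comparing with $\tau_\delta$ gives that $\tau_\delta$ maximizes $\sum_{a} \log \lambda^a_{\tau_\delta(a)}$, i.e.\ is a max-weight matching in $\mathcal{F}^w(p)$.

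The remaining, and I expect main, obstacle is the converse: given our prescribed max-weight matching $\sigma$, \emph{construct} a price-drop realizing it (rather than merely observing that whatever $\tau_\delta$ appears is max-weight). For this I would set up the perturbation directly: look for a descent direction, i.e.\ a vector $\mu \in \mathbb{R}^n_{>0}$ and step size $\delta>0$ with $q = p - \delta\mu$ such that $\sigma$ is envy-free at $q$. The envy-freeness constraints along $\sigma$'s edges versus all other first-choice edges translate into linear inequalities in $\mu$ of the form $\lambda^a_{\sigma(a)}\mu_{\sigma(a)} \ge \lambda^a_r \mu_r$ for each first-choice edge $(a,r)$; feasibility of this system (with $\mu > 0$) is exactly where the max-weight property of $\sigma$ is used — by LP duality / complementary slackness for the assignment problem, optimality of $\sigma$ yields node potentials that can be exponentiated to produce a valid $\mu$. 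One then picks $\delta$ small enough to stay within the linear domain and to keep non-first-choice rooms strictly suboptimal, which is possible since those gaps are strictly positive at $p$. Checking that such a $\mu$ indeed exists precisely when $\sigma$ is max-weight, and handling the boundary case where $p$ sits on breakpoints (so some $U_r = p_r$ and the relevant slope is the left derivative), are the technical points that need care; the rest is routine. I would defer the full details to \cref{appendix:perturbation} as the statement indicates.
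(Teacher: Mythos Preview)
Your proposal is correct, and the part that actually proves the lemma---what you call the ``converse,'' constructing a descent direction $\mu>0$ satisfying $\lambda^a_{\sigma(a)}\mu_{\sigma(a)}\ge\lambda^a_r\mu_r$ for all first-choice edges $(a,r)$ via LP duality for the assignment problem and then exponentiating the node potentials---is essentially the paper's argument. The paper packages this step as a separate ``multiplicative'' lemma (\cref{multiplicative quasi}): it puts weights $W_{(a,r)}=\lambda^a_r$ on first-choice edges and $0$ elsewhere, takes logarithms to reduce to the quasilinear envy-free existence result (\cref{lemma:ef-quasi}, which is exactly the assignment LP duality you invoke), and reads off $d_r=\exp(-p_r)$; then $q=p-d$ after scaling. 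So your route and the paper's coincide on the substantive step.

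Two minor remarks. First, your opening detour---invoking \cref{lemma:original-perturb} to obtain some $(\tau_\delta,q^\delta)$ and then arguing $\tau_\delta$ must be max-weight---is not needed for the stated lemma, which only asks for the construction given $\sigma$; the paper skips it entirely. It is a correct (and pleasant) structural observation, but you can drop it without loss. Second, you should note, as the paper does implicitly via \cref{lemma:ef-quasi}, that the dual potentials you obtain work simultaneously for \emph{every} max-weight perfect matching (all such matchings are envy-free under the same quasilinear prices), so the same $\mu$ serves any prescribed $\sigma$; this is what justifies the ``any'' in the lemma statement rather than ``some.''
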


Here, $\sigma$ is any maximum weight \emph{perfect} matching in $\mathcal{F}^w(p)$.\footnote{Given that $(\pi, p)$ is $\EF$, $\mathcal{F}^w(p)$ admits a perfect matching (in particular, $\pi$), hence $\sigma$ is well defined.} Since the edge weights can be negative, $\sigma$ is not necessarily a maximum weight matching in the bipartite graph.


\subsection{Optimal Prices}
\label{section:def-opt-prices}
For a rent division instance $\mathcal{I}$, write $\mathcal{E}(\mathcal{I})$ to denote the set of nonnegative prices that induce an envy-free solutions, $\mathcal{E}(\mathcal{I}) := \{ p \in \mathbb{R}_+^n \mid \text{ there exists } \pi \text{ s.t. } (\pi, p) \text{ is } \EF \text{ for } \mathcal{I} \}$. 

The perturbation lemma implies that there are no isolated points in $\mathcal{E}(\mathcal{I})$. The work of Sun and Yang~\cite{sun2003general} provides further insight into the geometry of this set by establishing that $\mathcal{E}(\mathcal{I})$, in fact, contains an \emph{optimal price vector} $p^*$: specifically, Sun and Yang~\cite{sun2003general} show that there exists $p^* \in \mathcal{E}(\mathcal{I})$ such that the componentwise inequality $p^* \leq p$ holds for all $ p \in \mathcal{E}(\mathcal{I})$. 

This, in particular, implies that the total rent imposed under $p^*$ is the lowest among all envy-free solutions with nonnegative prices. One can also prove that (among all fair solutions with nonnegative prices) the sum of agents' utilities is maximized at the envy-free solution $(\pi^*, p^*)$; here, $\pi^*$ is the allocation associated with $p^*$. 

Furthermore, Sun and Yang~\cite{sun2003general} established a notable property of optimal prices in the context of strategic agents: an algorithm  that selects $(\pi^*, p^*)$ as the outcome is guaranteed to be {dominant strategy incentive compatible} (DSIC). 

In \cref{section:optimal-price} we show that if all the slopes of a rent division instance are integer powers of $(1+\varepsilon)$, for parameter $\varepsilon >0$, then we can efficiently find the optimal price. Hence, for such instances not only do we obtain an algorithm for finding $\EF$ solutions, but also an efficient, DSIC mechanism for fair rent division.

\section{Main Results}
\label{section:MainResults}
This section presents the statements of our main results. \\

\noindent
{\bf Exact Algorithm for Structured Instances:} We develop an exact algorithm for rent-division instances in which the slopes of all the piecewise-linear utilities are integer powers of $(1+\varepsilon)$.

\begin{restatable}{theorem}{TheoremEfRounded}
\label{theorem:ef-rounded}
For any given rent-division instance $\overline{\mathcal{I}} = \left\langle \mathcal{A}, \mathcal{R}, \{ \overline{v}_a(r, \cdot) \}_{a,r} \right\rangle$, wherein the utility functions satisfy the powers-of-$(1+\varepsilon)$ property, \cref{algorithm:envy-free} computes an envy-free solution in time that is polynomial in $1/\varepsilon$ and the input size. \\
\end{restatable}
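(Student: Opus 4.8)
The plan is to establish correctness and the runtime bound for \cref{algorithm:envy-free} by tracking a homotopy from a high-price quasilinear solution down to a solution with all rents below the threshold $M$, and by exhibiting a monotone potential that controls the number of iterations. First I would fix a threshold $M$ large enough that it exceeds every breakpoint appearing in any $\overline v_a(r,\cdot)$, and define the surrogate functions $\widehat v_a(r,\cdot)$ that agree with $\overline v_a(r,\cdot)$ on $[0,M]$ and are quasilinear (slope $-1$, say, or slope equal to the last linear piece) beyond $M$. Using the known polynomial-time algorithm for quasilinear rent division (\cref{section:quasi}), compute a nonnegative envy-free price vector $p^0$ for the quasilinear tails, then shift every coordinate up by $M$; since additive shifts preserve envy-freeness under quasilinear utilities, $(\pi^0, p^0 + M\mathbf 1)$ is envy free for the $\widehat v$s, and all coordinates lie in the quasilinear region. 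This is the initial solution.

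The core of the argument is the invariant maintained at the start of each iteration: the current price vector $p$ induces an envy-free solution for the $\widehat v$s, and $p$ lies in a known linear domain $(L,U)$. In each iteration the algorithm (i) solves a linear program that minimizes $\sum_r p_r$ over the current polytope of envy-free prices within the linear domain — this polytope is defined by the first-choice constraints $v_a(r,p_r) \ge v_a(r',p_{r'})$, which are linear inside a fixed linear domain, together with $L \le p \le U$; and (ii) at the resulting vertex $p'$, recomputes the first-choice graph $\mathcal F(p')$, finds a maximum-weight perfect matching $\sigma$ in $\mathcal F^w(p')$, and uses $\sigma$ (with the updated linear domain obtained by crossing the active breakpoint) to continue the walk into the adjacent polytope. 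I would prove that step (ii) is well defined by invoking \cref{lemma:perturb}: since $p'$ is envy free and $\sigma$ is a max-weight perfect matching in $\mathcal F^w(p')$, there is a strictly-lower envy-free solution $(\sigma, q)$ with $q$ arbitrarily close to $p'$, so the new polytope is nonempty and the walk can proceed; this also shows that the algorithm terminates only when the minimum of $\sum_r p_r$ over the reachable region is attained, which — again by \cref{lemma:perturb} applied at a purported stopping point with $\max_r p_r \ge M$ — forces that termination happens only once all $p_r < M$. At that point $p$ is envy free for the $\widehat v$s, hence (since $\widehat v = \overline v$ on $[0,M]$ and all prices are below $M$) envy free for $\overline{\mathcal I}$, with nonnegative prices as required.

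The main obstacle — and the place I expect the real work to be — is bounding the number of iterations by a polynomial in $1/\varepsilon$ and the input size. The powers-of-$(1+\varepsilon)$ hypothesis is what makes this possible: the edge weights $w_{(a,r)} = \log \lambda^a_r$ are integer multiples of $\log(1+\varepsilon)$, so the weight of any perfect matching in $\mathcal F^w(p)$ is $\log(1+\varepsilon)$ times an integer, and this integer is bounded in absolute value by $n$ times the maximum exponent, which is polynomial in the input size (number of bits) and in $\log(1/\varepsilon)$-scaled quantities. I would define the potential $\Phi$ at the current state to be (essentially) the weight of the max-weight perfect matching $\sigma$ in the current $\mathcal F^w(p)$, combined with a secondary term counting remaining breakpoint crossings, and argue that each switching step either strictly increases the matching weight (because moving to the adjacent linear domain across a breakpoint changes a slope in a direction that the max-weight matching exploits) or strictly decreases the number of rooms still priced at or above their relevant breakpoint; standard lexicographic-potential bookkeeping then bounds the total number of iterations. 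The delicate points are (a) verifying that the matching weight genuinely changes monotonically when a breakpoint is crossed — here monotone-decreasing utilities and the structure of linear domains must be used to show the relevant slope magnitude decreases (or the price region contracts) in the right way — and (b) checking that each LP solve and each max-weight matching computation is polynomial-time, which is routine, and that the bit-complexity of the iterates stays polynomial, which follows since every vertex of a polytope with polynomially-bounded description has polynomially-bounded encoding. Assembling these pieces gives the stated polynomial running time.
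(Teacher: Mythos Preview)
Your overall architecture is right—homotopy from a quasilinear start, LP plus maximum-weight matching in each step, a two-part potential—but the potential argument is inverted, and that is a genuine gap.

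In the paper's analysis the matching weight $w(\pi^i)$ strictly increases \emph{within a fixed linear domain}, not when a breakpoint is crossed. The mechanism is \cref{lemma:wts-drop}: if the domain does not change after solving the LP, then $p^i$ is strictly interior to the box $L^{i-1}<p^i$ and $p^i>\mathbf 0$, so if $\pi^i$ were still a maximum-weight perfect matching in $\mathcal F^w(p^i)$ the perturbation lemma would produce a feasible $q$ with $\sum_r q_r < \sum_r p^i_r$, contradicting optimality of $p^i$. Hence $w(\pi^{i+1})>w(\pi^i)$. Because all slopes are powers of $(1+\varepsilon)$, these weights are integer multiples of $\log(1+\varepsilon)$ lying in a range of length $O(\beta n/\varepsilon)\cdot\log(1+\varepsilon)$, which bounds the number of iterations in any single linear domain by $O(\beta n/\varepsilon)$.

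What bounds the number of \emph{domain changes} is entirely separate and has nothing to do with matching weight: \cref{lemma:price-drop-sale-sale} shows prices are componentwise non-increasing, so each breakpoint of each $\widehat v_a(r,\cdot)$ is crossed at most once, giving at most $\ell$ domain changes in total. Your sketch reverses these two roles—claiming the weight moves monotonically across breakpoints and that a breakpoint counter handles the within-domain steps—and the claim that ``the relevant slope magnitude decreases'' at a crossing is neither needed nor true in general.

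Two smaller points: the algorithm's termination condition is that some $p_r$ reaches $0$, and the fact that all prices are then below $M$ is a consequence (argued via the choice of $M$), not the stopping rule itself; and the order of operations in each iteration is matching-then-LP (compute $\pi^i$ as a max-weight perfect matching in $\mathcal F^w(p^{i-1})$, then solve the LP for $\pi^i$), which is what makes \cref{lemma:wts-drop} go through.
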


\noindent
{\bf Approximation Algorithms:}
\begin{restatable}{theorem}{TheoremMainResult}
\label{theorem:main-result}
For any given rent-division instance (in which the utility functions are continuous, monotone decreasing, and piecewise linear) an $\varepsilon$-$\EF$ solution can be computed in time polynomial in $1/\varepsilon$ and the input size. 
\end{restatable}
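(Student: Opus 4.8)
The plan is to reduce the general problem to the structured (powers-of-$(1+\varepsilon)$) case handled by \cref{theorem:ef-rounded}, by a rounding argument, and then invoke \cref{lemma:apx-guarantee}. Concretely, given an arbitrary rent-division instance $\mathcal{I} = \langle \mathcal{A}, \mathcal{R}, \{ v_a(r,\cdot)\}_{a,r}\rangle$ with continuous, monotone decreasing, piecewise-linear utilities, I first construct a \emph{rounded instance} $\overline{\mathcal{I}} = \langle \mathcal{A}, \mathcal{R}, \{\overline{v}_a(r,\cdot)\}_{a,r}\rangle$ on the same agents and rooms, keeping the same break points $b_1 = 0, b_2, \dots, b_t$ and the same base values $v_a(r,0)$ for each utility function, but replacing each slope magnitude $\lambda^a_{r,i}$ by $\overline{\lambda}^a_{r,i} \coloneqq (1+\varepsilon')^{\lceil \log_{1+\varepsilon'} \lambda^a_{r,i}\rceil}$, rounding each slope \emph{up} to the nearest integer power of $(1+\varepsilon')$, for a suitably chosen $\varepsilon'$ polynomially related to $\varepsilon$ (e.g. $\varepsilon' = \varepsilon/c$ for a small constant $c$, to absorb the accumulation of error across break points). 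This preserves continuity, monotonicity, and piecewise linearity, so $\overline{\mathcal{I}}$ is a legitimate structured instance, and its bit-complexity is polynomial in the input size of $\mathcal{I}$ and in $1/\varepsilon$.

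Next I would apply \cref{theorem:ef-rounded} to $\overline{\mathcal{I}}$: \cref{algorithm:envy-free} computes an exact envy-free solution $(\pi, p)$ of $\overline{\mathcal{I}}$ in time polynomial in $1/\varepsilon'$ (hence in $1/\varepsilon$) and the input size. The remaining task is to argue that $(\pi, p)$ is an $\varepsilon$-$\EF$ solution of the original instance $\mathcal{I}$, which is exactly the content of \cref{lemma:apx-guarantee} (the rounding step) — so the bulk of the work is to set up the rounding so that this lemma applies. The key quantitative estimate behind that lemma is that for any room $r$ and price $x \geq 0$, since $\overline{v}_a(r,\cdot)$ and $v_a(r,\cdot)$ agree at $x = 0$ and the slope of $\overline{v}_a(r,\cdot)$ on each interval is within a factor $(1+\varepsilon')$ of that of $v_a(r,\cdot)$, the accumulated utilities satisfy a two-sided multiplicative/additive comparison; one then has to translate the exact inequality $\overline{v}_a(\pi(a), p_{\pi(a)}) \ge \overline{v}_a(r, p_r)$ for all $r$ into the approximate inequalities of \cref{definition:apx-ef} for $v_a$, handling separately the case where the attained utility is nonnegative and the case where it is negative (the latter is why \cref{definition:apx-ef} has two clauses). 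I would choose the rounding direction (up vs.\ down) and the constant $c$ so that the sign of the attained utility and the direction of the approximation error are compatible in both clauses.

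The main obstacle I expect is controlling the error across many linear pieces: because a price $x$ may lie far to the right of $b_1 = 0$, the difference $\overline{v}_a(r,x) - v_a(r,x) = \sum_{j} (\lambda^a_{r,j} - \overline{\lambda}^a_{r,j})(b_{j+1}-b_j)$ telescopes over all intervals up to $x$, and a naive bound would make the error scale with the total price drop rather than with the utility value itself; this is precisely why a purely additive rounding would not suffice and why matching slopes \emph{multiplicatively} is the right move — the per-piece error is then proportional to the per-piece utility drop, so summing gives $\overline{v}_a(r,x) \le (1+\varepsilon') v_a(r,x)$ when $v_a(r,x) \ge 0$ (and the reverse-type bound when it is negative, using that $v_a(r,x) = v_a(r,0) - (\text{total drop})$ and the drop terms are each scaled by at most $(1+\varepsilon')$). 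Once this per-function comparison is in hand, combining it with the exactness of the solution for $\overline{\mathcal{I}}$ and optimizing $\varepsilon'$ as a function of $\varepsilon$ yields the claim; the runtime bound is immediate from \cref{theorem:ef-rounded} plus the polynomial blow-up of the rounding. I would also remark that when all base values are positive the computed solution has nonnegative utilities, so clause~(i) of \cref{definition:apx-ef} applies to every agent and we recover the textbook $(1+\varepsilon)$-approximation, as already noted after \cref{definition:apx-ef}.
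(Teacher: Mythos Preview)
Your high-level plan --- round to a powers-of-$(1+\varepsilon)$ instance, solve exactly via \cref{theorem:ef-rounded}, then invoke \cref{lemma:apx-guarantee} --- is exactly the paper's. The gap is in the rounding construction itself, and it is a real one.

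With your scheme (keep breakpoints and base value, round every slope up to a power of $1+\varepsilon'$), the rounded and original functions agree only at $x=0$. Your own telescoping shows $|\overline{v}_a(r,x) - v_a(r,x)| \le \varepsilon' D$, where $D = v_a(r,0) - v_a(r,x)$ is the \emph{total drop}; this is proportional to $D$, not to the utility value $v_a(r,x) = H - D$. When $H$ and $D$ are both large but $v = H - D$ is small, the error $\varepsilon' D$ dwarfs $v$, and no constant $c$ in $\varepsilon' = \varepsilon/c$ fixes this. Concretely: take $v_a(1,x) = 1000 - 1.005x$, $\varepsilon' = 0.01$, so $\overline{\lambda} = 1.01$ and $\overline{v}_a(1,990) = 0.1$ while $v_a(1,990) = 5.05$. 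If agent $a$ is assigned some other room with $\overline{v} \approx v \approx 0.1$ there, envy-freeness in $\overline{\mathcal{I}}$ holds, yet in $\mathcal{I}$ the agent envies room $1$ by a factor of $50$. Your sentence ``summing gives $\overline{v}_a(r,x) \le (1+\varepsilon') v_a(r,x)$ when $v_a(r,x)\ge 0$'' is either trivially true (since $\overline{v}\le v$) or, if you meant the error is $\le \varepsilon' v$, false.

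The paper's rounding is more delicate and is precisely designed to kill this accumulation: each original piece $[x_1,x_2]$ is replaced by \emph{two} linear sub-pieces, with slopes $\overline{\lambda}/(1+\varepsilon)$ and $\overline{\lambda}$, and the interior breakpoint $x^*$ is chosen so that $\overline{v}$ matches $v$ at \emph{both} endpoints $x_1,x_2$. Thus $\overline{v}$ agrees with $v$ at every original breakpoint, the comparison is local to each piece, and one gets $v\le \overline{v}\le B + \overline{\lambda}(x_2 - x)$ with $B = v(x_2)$. The paper additionally inserts the zero-crossing $z^a_r$ as a breakpoint, so that within any piece $B$ and $H$ share the sign of the utility; this is what lets you pull out the $(1+\varepsilon)$ factor multiplicatively and is why \cref{definition:apx-ef} splits into two clauses. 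Once you use this construction, \cref{lemma:apx-guarantee} goes through with $\varepsilon' = \varepsilon$ and no constant $c$ is needed.
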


As mentioned previously, there exist rent-division instances wherein negative rents and negative utilities cannot be avoided under any fair distribution of a given rent $C$. Specifically, in the following theorem the computed solution might impose negative rents and utilities. 

\begin{restatable}{theorem}{TheoremFixedCost}
\label{theorem:fixed-cost}
Given a rent-division instance $\mathcal{I}$ along with a total rent $C \in \mathbb{R}$, we can find---in time that is polynomial in $1/\varepsilon$ and the input size---an allocation $\pi$ and a price vector $p \in \mathbb{R}^n$ such that $(\pi, p)$ is an $\varepsilon$-$\EF$ solution of $\mathcal{I}$ and $\sum_r p_r = C$. 
\end{restatable}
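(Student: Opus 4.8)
The plan is to reduce \cref{theorem:fixed-cost} to \cref{theorem:main-result} by a reduction at the level of utility functions, so that an $\varepsilon$-$\EF$ solution of a suitably modified instance, after an additive shift of prices, becomes an $\varepsilon$-$\EF$ solution of $\mathcal I$ whose prices sum to the prescribed total rent $C$. The first step is to observe that, for quasilinear utilities, a uniform additive shift of all prices preserves envy-freeness exactly (this is already used in the excerpt for the homotopy initialization), and more generally preserves first-choice graphs. Since \cref{theorem:main-result} produces a solution with nonnegative prices but no control on $\sum_r p_r$, I need to introduce enough ``quasilinear room'' in the utility functions at large prices (and, symmetrically, allow the algorithm's prices to be shifted \emph{down}, possibly below zero) so that the sum of prices can be steered to exactly $C$.

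Concretely, I would proceed as follows. First I would handle the normalization: given $\mathcal I$ with target rent $C$, pick a large threshold $M$ (polynomially bounded, using the breakpoints and base values of $\mathcal I$) and, exactly as in the construction preceding \cref{theorem:ef-rounded}, replace each $v_a(r,\cdot)$ by a surrogate $\widehat v_a(r,\cdot)$ that agrees with $v_a(r,\cdot)$ for prices below $M$ and is quasilinear (slope $-1$) above $M$. The surrogates are continuous, monotone decreasing, and piecewise linear, so \cref{theorem:main-result} applies and returns an $\varepsilon$-$\EF$ solution $(\pi,p)$ of the surrogate instance with $p\ge 0$. Next I would show that $(\pi,p)$ can be taken with all coordinates at least $M$: by the perturbation lemma (\cref{lemma:original-perturb}) together with the quasilinear shift property, if some $p_r<M$ we may add a common constant to every price to lift the whole vector into the region where all surrogates are quasilinear, which preserves the $\varepsilon$-$\EF$ property because in that region $\widehat v_a(r,\cdot)$ is quasilinear and uniform shifts leave all pairwise utility differences unchanged. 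Then I would perform the final calibrating shift: let $s \coloneqq \sum_r p_r$ and set $q_r \coloneqq p_r - (s - C)/n$, so that $\sum_r q_r = C$. Since all coordinates of $p$ are $\ge M$ and $(s-C)/n$ is polynomially bounded (again choosing $M$ large enough that $q$ still lies in the quasilinear region, or—if $C$ is very small—simply allowing $q$ to drop below $M$ but tracking that the instance $\mathcal I$ itself is quasilinear there once we also extended $v_a(r,\cdot)$ to negative prices linearly), the vector $q$ lies entirely in a region where $\widehat v_a(r,\cdot)=v_a(r,\cdot)+\text{const}$ differs from the original only by terms that cancel in every comparison, so $(\pi,q)$ is $\varepsilon$-$\EF$ for $\mathcal I$ with $\sum_r q_r = C$.

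The main obstacle, and the step needing the most care, is the interplay between the nonnegativity guarantee of \cref{theorem:main-result} and the fact that $C$ may be \emph{smaller} than the sum of prices any nonnegative envy-free solution can achieve (indeed the excerpt explicitly notes that some instances force negative rents for any split of $C$). This is exactly why the theorem statement allows $p\in\mathbb R^n$ rather than $p\in\mathbb R_+^n$. To handle it I would first extend each $v_a(r,\cdot)$ to all of $\mathbb R$ by declaring it quasilinear (slope $-1$, or slope matching the leftmost piece) for $x<0$; then the surrogate instance is quasilinear for both very small and very large prices, and the two ``shift'' operations above—one to push into the high quasilinear region, then one (possibly large, possibly negative) to hit $\sum_r q_r=C$—both live in regions where uniform shifts are exactly envy-free-preserving. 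The bookkeeping is to verify (i) that the final $q$ avoids all breakpoints of $\mathcal I$ in the ``interesting'' middle band, or rather that wherever $q$ lands the relevant pieces are the same for $\widehat v$ and $v$ up to an additive constant, and (ii) that $M$ and the shift magnitudes remain polynomial in the input size and $1/\varepsilon$, so the reduction preserves the running time bound. Everything else is a direct invocation of \cref{theorem:main-result} and the elementary shift-invariance of envy-freeness under quasilinear utilities.
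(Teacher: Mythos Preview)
Your reduction has a genuine gap: the shift-invariance argument only works when the entire shift stays inside a single quasilinear region, but the target price vector need not lie in either of your quasilinear tails. Concretely, suppose the original utilities have many breakpoints in $[0,M]$ and $C$ is, say, $nM/2$. Then any envy-free (or $\varepsilon$-$\EF$) solution with $\sum_r p_r = C$ must have some prices strictly inside $(0,M)$, where the $v_a(r,\cdot)$ are genuinely piecewise linear with heterogeneous slopes. Your Step~2 (``add a common constant to lift the whole vector into the quasilinear region'') already fails: the $(\pi,p)$ returned by \cref{theorem:main-result} has some $p_r=0$, hence sits in the middle band, and uniformly shifting it upward across breakpoints changes the pairwise utility differences and can destroy envy-freeness. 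The same obstruction kills Step~3: even if you somehow start in the high quasilinear region, shifting down to hit $\sum_r q_r=C$ forces you to cross the middle band, and once a single coordinate enters that band the shift is no longer envy-preserving. Your bookkeeping item~(i) is not a verification step but an assumption that is simply false for generic $C$.

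The paper's proof avoids this by not reducing to \cref{theorem:main-result} at all. Instead it reruns \cref{algorithm:envy-free} on the rounded instance $\overline{\mathcal I}$ with two modifications: the nonnegativity constraint $x\ge\mathbb 0$ in each LP is replaced by $\sum_r x_r\ge C$, and the while-loop terminates when $\sum_r p^i_r=C$ rather than when some $p^i_r=0$. Because \cref{lemma:ef-invariant}, \cref{lemma:wts-drop}, and \cref{lemma:price-drop-sale-sale} do not use the nonnegativity constraint, the potential/runtime analysis of \cref{theorem:ef-rounded} carries over verbatim, and the algorithm walks the chain of envy-free polytopes down until the hyperplane $\sum_r x_r=C$ is hit---which it must, since $\sum_r p^i_r$ decreases monotonically from a value above $nM$. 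The output is then exactly $\EF$ for $\overline{\mathcal I}$, hence $\varepsilon$-$\EF$ for $\mathcal I$ by \cref{lemma:apx-guarantee}. The point is that the algorithm itself traverses the non-quasilinear region while maintaining envy-freeness; there is no way to shortcut this with an additive shift.
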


The issue of negative utilities is circumvented in prior work (on quasilinear utilities) by adopting the following assumption: $\sum_r z^a_r \geq C$, here $z^a_r$ is the price at which the utility of agent $a$ for room $r$ reduces to zero and $C$ is the total rent. We prove that, even in case of piecewise-linear utilities, under this assumption one can find approximately envy-free solutions with nonnegative utilities.

\begin{restatable}{theorem}{TheoremIr}
\label{theorem:ir}
Let $\mathcal{I}$ be a rent division instance wherein the inequality $\sum_r z^a_r \geq C$ holds for all agents $a$ and parameter $C \in \mathbb{R}$. Then, we can find---in time that is polynomial in $1/\varepsilon$ and the input size---an allocation $\pi$ and a price vector $p \in \mathbb{R}^n$ such that (i) $(\pi, p)$ is an $\varepsilon$-$\EF$ solution of $\mathcal{I}$, (ii) $\sum_r p_r = C$, and (iii) the utilities of all the agents under $(\pi, p)$ are nonnegative. \\
\end{restatable}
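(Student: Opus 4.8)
The plan is to reduce Theorem~\ref{theorem:ir} to Theorem~\ref{theorem:fixed-cost} by showing that, under the scaling assumption $\sum_r z^a_r \ge C$, the $\varepsilon$-$\EF$ solution produced for the total-rent constraint automatically yields nonnegative utilities. First I would run the algorithm of Theorem~\ref{theorem:fixed-cost} on the instance $\mathcal{I}$ with the given total rent $C$, obtaining an allocation $\pi$ and a price vector $p \in \mathbb{R}^n$ with $\sum_r p_r = C$ such that $(\pi,p)$ is $\varepsilon$-$\EF$. It then remains to argue property (iii), that $v_a(\pi(a), p_{\pi(a)}) \ge 0$ for every agent $a$. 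The crucial observation (recorded in the excerpt right after Definition~\ref{definition:apx-ef}) is that in an $\varepsilon$-$\EF$ solution, if some agent $a$ has negative utility for its assigned room, then $a$ has negative utility for \emph{every} room at $p$; so it suffices to rule out the existence of an agent whose utility is negative for all rooms.

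The main step is a counting/averaging argument exploiting monotonicity and the definition of $z^a_r$. Suppose for contradiction that agent $a$ has $v_a(r, p_r) < 0$ for all rooms $r$. Since each $v_a(r,\cdot)$ is continuous and monotone decreasing with $v_a(r,0) \ge 0$, the zero-crossing price $z^a_r$ is well defined, and $v_a(r, p_r) < 0$ is equivalent to $p_r > z^a_r$. Summing over all rooms gives $C = \sum_r p_r > \sum_r z^a_r \ge C$, a contradiction. Hence no agent can have negative utility for every room, and therefore (by the observation above) every agent attains nonnegative utility under $(\pi,p)$; in particular condition (i) of Definition~\ref{definition:apx-ef} is the operative one, so $(\pi,p)$ is a genuine $(1+\varepsilon)$-approximately envy-free solution with $\sum_r p_r = C$ and nonnegative utilities, proving (i), (ii), and (iii).

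I expect the only subtlety — and the place where a careful argument is needed — to be the passage from "$a$'s assigned room has negative utility" to "$a$'s utility is negative for all rooms." This is exactly where one uses that $(\pi,p)$ is $\varepsilon$-$\EF$ in the sense of Definition~\ref{definition:apx-ef}: in case (ii) of that definition one has $v_a(\pi(a), p_{\pi(a)}) \ge (1+\varepsilon) v_a(r, p_r)$ for all $r$, and since $v_a(\pi(a),p_{\pi(a)}) < 0$ and $(1+\varepsilon) > 0$ this forces $v_a(r,p_r) < v_a(\pi(a),p_{\pi(a)})/(1+\varepsilon) < 0$ for every $r$. Everything else — well-definedness of $z^a_r$ from continuity, monotonicity, and $v_a(r,0)\ge 0$; the equivalence $v_a(r,p_r)<0 \iff p_r > z^a_r$ — is routine, and the running time is inherited directly from Theorem~\ref{theorem:fixed-cost}.
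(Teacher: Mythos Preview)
Your proposal is correct and matches the paper's proof essentially line for line: both invoke Theorem~\ref{theorem:fixed-cost} for (i) and (ii), then use the observation after Definition~\ref{definition:apx-ef} to reduce (iii) to ruling out an agent with negative utility for every room, and finish via the scaling assumption. The only cosmetic difference is that the paper phrases the last step as ``since $\sum_r z^a_r \ge C = \sum_r p_r$, some room $r(a)$ has $p_{r(a)} \le z^a_{r(a)}$, hence $v_a(r(a),p_{r(a)}) \ge 0$,'' whereas you sum the strict inequalities $p_r > z^a_r$ directly; these are the same pigeonhole argument.
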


\noindent
{\bf DSIC Mechanism for Structured Instances:} We also design an algorithm for finding optimal solutions of instances in which the slopes of all the utilities are integer powers of $(1+\varepsilon)$.  Hence, using the result of Sun and Yang~\cite{sun2003general}, we obtain a DSIC mechanism for this setting.

\begin{restatable}{theorem}{TheoremPostpro} 
\label{theorem:postpro}
Given any rent-division instance $\overline{\mathcal{I}} = \left\langle \mathcal{A}, \mathcal{R}, \{ \overline{v}_a(r, \cdot) \}_{a,r} \right\rangle$, wherein the utility functions satisfy the powers-of-$(1+\varepsilon)$ property, \cref{algorithm:opt-price} finds an optimal envy-free solution with runtime polynomial in $1/\varepsilon$ and the input size. \\
\end{restatable}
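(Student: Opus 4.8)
The plan is to characterize the optimal (componentwise-minimal) envy-free price vector $p^*$ for a structured instance $\overline{\mathcal{I}}$ and then show that this characterization is algorithmically constructive. Recall from \cref{algorithm:envy-free} (whose correctness is \cref{theorem:ef-rounded}) that we can compute \emph{some} envy-free solution $(\pi, p)$ of $\overline{\mathcal{I}}$ in time polynomial in $1/\varepsilon$ and the input size. The first step is to observe that $\mathcal{E}(\overline{\mathcal{I}})$ is nonconvex but is a union of polytopes that chain together (\cref{figure:non-convex}), and that $p^*$ lies in this set by the result of Sun and Yang~\cite{sun2003general}. The idea is to run a downward ``walk'' analogous to the main algorithm but continuing past the first point where all rents drop below $M$: from the current envy-free price vector, repeatedly (a) decrease prices as much as possible while staying envy free within the current linear domain and first-choice structure—this is again a linear program, now minimizing $\sum_r p_r$ (or, more carefully, pushing toward the componentwise infimum) over the current polytope—and (b) when progress stalls, switch to a new first-choice graph / linear domain via a maximum-weight perfect matching computation (using \cref{lemma:perturb} to certify that the matching switch preserves envy freeness under an arbitrarily small further decrease). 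The key structural claim to prove is that this process is monotone, terminates, and converges exactly to $p^*$ rather than merely to a local minimum.

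The main technical content is a \textbf{novel characterization of optimal prices}: a vertex $p$ of some polytope in $\mathcal{E}(\overline{\mathcal{I}})$ equals $p^*$ if and only if no price can be decreased while remaining in $\mathcal{E}(\overline{\mathcal{I}})$, and—crucially—this ``no further decrease anywhere'' condition can be checked locally using the weighted first-choice graph $\mathcal{F}^w(p)$. Concretely, the plan is to show that if $p \neq p^*$ then, after possibly switching to a maximum-weight perfect matching $\sigma$ in $\mathcal{F}^w(p)$, there is a direction of strict componentwise decrease that keeps us envy free; the argument mirrors the inductive/perturbation reasoning already used for \cref{theorem:ef-rounded}, combined with the fact that in a structured instance the slopes are powers of $(1+\varepsilon)$, so the logarithmic edge weights $w_{(a,r)} = \log\lambda^a_r$ are integer multiples of $\log(1+\varepsilon)$ and the relevant matching/potential quantities are discretely valued. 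This discreteness is what upgrades ``the walk converges'' to ``the walk reaches $p^*$ in polynomially many steps,'' because the same potential function used to bound the runtime of \cref{algorithm:envy-free} (and to place rent division in $\PLS$) strictly decreases at each switch and is bounded. Once $p^*$ and its allocation $\pi^*$ are in hand, invoking Sun and Yang~\cite{sun2003general} immediately yields that the mechanism outputting $(\pi^*, p^*)$ is DSIC, giving the theorem.

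I would organize the write-up as: (1) state the characterization lemma for $p^*$ in terms of $\mathcal{F}^w(p)$ and linear domains; (2) prove it, reusing \cref{lemma:perturb} and the LP/matching machinery from the proof of \cref{theorem:ef-rounded}; (3) present \cref{algorithm:opt-price} as: run \cref{algorithm:envy-free} to get an initial $(\pi,p)$, then iterate LP-minimization of $\sum_r p_r$ over the current polytope and maximum-weight perfect matching switches until the characterization certifies optimality; (4) bound the number of iterations via the potential argument and the powers-of-$(1+\varepsilon)$ discreteness, and bound per-iteration cost by LP-solving and matching; (5) conclude DSIC via~\cite{sun2003general}.

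The step I expect to be the main obstacle is (2), specifically proving that the \emph{local} optimality condition (no decrease stays feasible \emph{within the current polytope after a max-weight matching switch}) implies \emph{global} optimality ($p = p^*$). A priori the chained-polytope structure could harbor a price vector that is locally stuck but not componentwise-minimal; ruling this out requires showing that $\mathcal{E}(\overline{\mathcal{I}})$ has a lattice-like ``downward-closed to $p^*$'' property and that the max-weight matching switch always exposes the right descent direction when one exists. I anticipate this is where the powers-of-$(1+\varepsilon)$ hypothesis and the perturbation lemma must be combined most carefully—plausibly via an exchange argument on the symmetric difference of $\pi$ (current) and $\pi^*$ (optimal) inside $\mathcal{F}^w(p)$, showing an alternating cycle of positive weight would contradict $\sigma$ being maximum weight, hence $p$ already dominates $p^*$ componentwise and therefore equals it.
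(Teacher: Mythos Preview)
Your proposal has a genuine gap, and it is precisely the one you flag in your last paragraph---but your suggested fix (an exchange argument on $\pi$ versus $\pi^*$ inside $\mathcal{F}^w(p)$) does not address the actual obstruction.

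The failure mode is this. Once \cref{algorithm:envy-free} terminates, some room $\rho$ has $p_\rho = 0$. Your plan is to continue the same ``LP-minimize $\sum_r p_r$, then switch to a max-weight matching and invoke \cref{lemma:perturb}'' loop. But \cref{lemma:perturb} produces a vector $q$ with $q_r < p_r$ for \emph{every} $r$; in particular $q_\rho < 0$, which violates the nonnegativity constraint. So the perturbation lemma gives you no feasible descent direction on the full instance, and the walk stalls even when $p \neq p^*$. The matching switch does not help here: the problem is not that the current allocation blocks progress, it is that a \emph{subset} of coordinates is already pinned at zero and only the remaining coordinates can move. Nothing in your outline identifies that subset or explains how to decrease only those coordinates while preserving envy freeness globally.

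The paper's approach is structurally different. It introduces a directed graph $\mathcal{W}_{(\pi,p)}$ built from the matching edges (reversed) and the non-matching tight edges, and proves (\cref{lemma:char}) that a room $r$ is already at its optimal price $p^*_r$ iff $r$ is reachable in $\mathcal{W}_{(\pi,p)}$ from some zero-priced room. This gives an efficiently computable partition $E_p \cup E_p^c$ of the rooms into ``done'' and ``not done.'' \cref{algorithm:opt-price} then \emph{freezes} the rooms and agents in $E_p$, forms the sub-instance on $E_p^c$, and reruns \cref{algorithm:envy-free} on that sub-instance with extra linear constraints preventing agents in $\pi^{-1}(E_p)$ from envying rooms in $E_p^c$. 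A four-case check shows the patched solution is envy free for the whole instance, and each outer iteration strictly shrinks $E_p^c$ (either a price in $E_p^c$ hits zero, or a new tight edge enters $E_p^c$ from $E_p$), so at most $n$ outer iterations suffice. The powers-of-$(1+\varepsilon)$ hypothesis is used only inside the subroutine calls to \cref{algorithm:envy-free}, not in the optimality characterization itself.

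In short, the missing idea is the reachability characterization of $E_p$ and the reduction to a sub-instance on $E_p^c$; your global-walk-plus-matching-switch scheme has no mechanism to isolate the coordinates that can still move.
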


\noindent
{\bf Complexity of Fair Rent Division:} $\plc$ refers to the total search problem of finding an envy-free solution of any rent-division instance with continuous, monotone decreasing, and piecewise-linear utilities. We show that 

\begin{restatable}{theorem}{TheoremComplexity}
\label{theorem:complexity}
$\plc$ is in $\PPAD \cap \PLS$.
\end{restatable}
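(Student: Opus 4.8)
The plan is to handle the two containments separately, since they rely on fundamentally different ideas. For PPAD-membership, the strategy is to reduce $\plc$ to the problem of computing an exact Nash equilibrium of a polymatrix game, which is known to be PPAD-complete~\cite{daskalakis2006game}. The subtle point — emphasized in the introduction — is that one cannot simply invoke the KKM-based existence proof of Sun and Yang~\cite{sun2001fair}, since a direct computational version of KKM would require discretizing the price space and would therefore only place the \emph{approximate} problem in PPAD. So the first and main task is to design a polymatrix game whose Nash equilibria encode \emph{exact} envy-free solutions. A natural construction has one player per agent $a$ whose pure strategies correspond to the rooms $\mathcal{R}$ (capturing the allocation $\pi$), together with auxiliary ``price'' players whose strategies encode the prices of rooms; the bilinear payoff structure of a polymatrix game must be set up so that, in equilibrium, (i) each agent is matched to a first-choice room and (ii) the prices are mutually consistent with that first-choice structure. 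Because the utilities are continuous, monotone decreasing, and piecewise linear, the ``best response'' conditions translate into piecewise-linear inequalities; the challenge is expressing these via pairwise bilinear payoff matrices with a fixed (polynomial-size) strategy set rather than a continuum of price choices — presumably by working with a carefully chosen finite set of candidate price values that is provably rich enough (e.g., breakpoints together with intersection prices of linear pieces), and then arguing any equilibrium of the finite game can be ``completed'' to a genuine envy-free solution. I would carry this out by first recalling the polymatrix-game framework, then defining the players and payoffs, then proving the two directions: every envy-free solution yields a Nash equilibrium, and every Nash equilibrium yields (or can be efficiently converted into) an envy-free solution.

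For PLS-membership, the plan is to exhibit a local-search formulation: a set of candidate solutions with polynomial-time-computable neighborhoods and a polynomial-time-computable objective (potential) function such that every local optimum is an envy-free solution. The natural candidate-solution space is the set of pairs $(\pi, p)$ where $p$ ranges over vertices of the polytopes described in \cref{figure:non-convex} (price vectors minimizing the sum of prices within a linear domain, together with an associated maximum-weight perfect matching in $\mathcal{F}^w(p)$), and the neighborhood is exactly one step of the ``walk'' used by \cref{algorithm:envy-free}: minimize the sum of prices over the current polytope, then switch to the adjacent polytope via a maximum-weight perfect matching computation. The key ingredient — flagged in the introduction as ``another contribution of this work'' — is the potential function that strictly decreases along each step of this walk; I would invoke that potential here (it is developed in the paper) and argue that (a) each neighborhood move is polynomial-time computable, (b) the potential is polynomially bounded, and (c) a configuration from which no improving move exists must already correspond to a price vector with all rents below the threshold $M$, hence to an envy-free solution of the original instance. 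The verification that ``no improving move'' implies envy-freeness is where the geometry of the chained polytopes (every intersection contains the sum-minimizing vertex) does the real work.

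The hard part will be the PPAD reduction — specifically, discretizing the price space finitely while guaranteeing that \emph{exact} envy-freeness is preserved, and encoding the resulting best-response conditions as bilinear payoffs of a polymatrix game of polynomial size. The PLS part is more routine given the potential function and the structural lemmas already established in the paper; its only delicate point is checking that local optimality forces termination of the homotopy/walk (i.e., that one cannot be stuck at a local optimum with some rent still exceeding $M$), which should follow from the perturbation lemma (\cref{lemma:perturb}) guaranteeing a strictly price-decreasing envy-free move whenever prices are not yet at the optimal vector.
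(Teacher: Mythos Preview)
Your PLS sketch is broadly on track, though the paper's candidate solutions are finite tuples $(\pi,L,U)$ (an allocation together with a linear domain) rather than pairs $(\pi,p)$; the associated price vector is then recovered by solving the LP over $\mathcal{E}(\pi,L,U)$. The potential is a two-level function (linear-domain index plus normalized matching weight) that \emph{increases} along the walk, with local optima being precisely the configurations where some price hits zero. But the overall shape of your argument matches.

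The PPAD part, however, has a real gap. Your plan is to discretize the price space to a ``carefully chosen finite set of candidate price values'' and encode those as pure strategies of price players. This is exactly the approach the introduction warns against: any such discretization would only certify an \emph{approximate} envy-free solution, and you would be stuck arguing that the finite grid is ``rich enough'' to recover exactness --- which it generally is not for piecewise-linear utilities with arbitrary breakpoints and slopes.

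The paper avoids discretization entirely via two ideas you are missing. First, it reduces $\plc$ to $\linear$ (rent division with \emph{purely linear} utilities) by the trivial observation that beyond the last breakpoint $K$ every utility is linear; an envy-free solution of the linear tail, shifted by $K$, is envy free for the original instance. Second, for $\linear$ it builds a polymatrix game with a single ``landlord'' player (player $0$) whose \emph{mixed strategy} $x^0\in\Delta^n$ encodes the continuous price vector via $p_r=M(1-3n\,x^0_r)$. Because the utilities are linear, the expected payoff of agent-player $a$ for action $r$ against $x^0$ is exactly $v_a(r,p_r)$, so best responses coincide with first choices. A gadget player $(n{+}1)$ and an extra room force $x^0_r\le 1/3n$ (hence $p_r\ge 0$) and make the matrix of agent strategies doubly stochastic at equilibrium, from which a permutation $\pi$ is extracted. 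No finite price grid is ever needed; continuity of prices is absorbed into the continuity of mixed strategies.
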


\section{Exact Algorithm for Structured Instances}
\label{section:rounded}

This section considers instances wherein the constituent slopes of all the utility functions are integer powers of $(1 + \varepsilon)$, for a fixed $\varepsilon >0$. Specifically, in such instances, for all agents $a$, rooms $r$, and pieces $i$, we have $\lambda^a_{r,i} = (1 + \varepsilon)^k$, for some integer $k$ (which can be negative and depends on $a$, $r$, and $i$). In the next section we will extend the analysis to general instances. 


Write $\overline{v}_a(r, \cdot)$ to denote piecewise-linear utility functions wherein the slopes satisfy this powers-of-$(1+\varepsilon$) property, for a fixed $\varepsilon >0$. We will show that, given a rent division instance $\langle \mathcal{A}, \mathcal{R}, \{ \overline{v}_a(r, \cdot) \}_{a,r} \rangle$, Algorithm~\ref{algorithm:envy-free} ($\ALG$) finds an envy-free solution in time polynomial in $1/\varepsilon$ and the bit complexity of the input. 


Below we will prove that $\ALG$ can, in fact, find an envy-free solution for an arbitrary  rent-division instance $\mathcal{I}=\langle \mathcal{A}, \mathcal{R}, \{ {v}_a(r, \cdot) \}_{a,r} \rangle$. It is the runtime analysis of the algorithm that requires the powers-of-$(1+\varepsilon)$ property. 

Given instance $\mathcal{I}$, $\ALG$ begins by considering modified utility functions, $\widehat{v}_a(r, \cdot)$s, that match ${v}_a(r, \cdot)$s till a large  threshold $M$ and for prices higher than $M$ the modified utility functions are quasilinear. In particular, let $M \in \mathbbm{R}_+$ be such that ${v}_a(r, M) < {v}_a(r',0)$, for all agents $a \in \mathcal{A}$ and rooms $r, r' \in \mathcal{R}$. Since the utility functions are monotone decreasing, such an $M$ exists and can be computed efficiently.\footnote{We can conservatively set $M =\frac{\max_{a,r} {v}_a(r,0) - \min_{a',r'} {v}_{a'}(r',0) }{ \min_{a, r, i} \lambda^a_{r,i} }$} 

For all agents $a$ and rooms $r$, define  
 \begin{align}
 \label{equation:bar-to-hat}
    \widehat{v}_a(r, x ) := 
    \begin{cases}
     {v}_a(r, x)  & \text{ for }  x \leq M \\
        {v}_a(r, M) - (x - M) & \text { for } x >M
    \end{cases}
\end{align}

{	
\begin{algorithm}[ht]
		{
		{\bf Input:} A rent-division instance $\mathcal{I} = \left\langle \mathcal{A}, \mathcal{R}, \{ {v}_a(r, \cdot) \}_{a,r} \right\rangle$ with continuous, monotone decreasing, piecewise-linear utility functions \\ 
		{\bf Output:} An envy-free solution of $\mathcal{I}$
		\caption{$\ALG$: Algorithm for fair rent division}
		\label{algorithm:envy-free}
		\begin{algorithmic}[1]
			\STATE For each agent $a \in \mathcal{A}$ and room $r \in \mathcal{R}$, construct valuation $\widehat{v}_a(r, \cdot)$ as detailed in equation (\ref{equation:bar-to-hat})
			\STATE Compute an envy-free solution $(\pi^0, p^0)$ for $\widehat{\mathcal{I}} := \left\langle \mathcal{A}, \mathcal{R}, \{ \widehat{v}_a(r, \cdot) \}_{a,r} \right\rangle$ with $p^0 \geq M \mathbb{1}$  \label{step:hat-instance} \\
			\COMMENT{Such a solution can be found efficiently using, say, the algorithm given in \cref{section:quasi}} 
			\STATE Initialize $i \leftarrow 0$ and let $(L^0,U^0) \in \mathbbm{R}^n \times \mathbbm{R}^n$ be the linear domain containing $p^0$
			\WHILE{$p^i_r >0$ for all rooms $r \in \mathcal{R}$}
			\STATE Update $i \leftarrow i +1$
			\STATE Set $\pi^i$ to be the maximum weight \emph{perfect} matching in $\mathcal{F}^w(p^{i-1})$
			\STATE Set $p^i$ to be the optimal solution of the following linear program
			\begin{align*}
			\min_{x \in \mathbbm{R}^n} & \ \  \sum_r x_r \tag{LP$(L^{i-1}, U^{i-1}, \pi^i)$}\\  
			\text{subject to} & \ \ \ U^{i-1} \geq x \geq L^{i-1}\\ 
			& \ \ \ \widehat{v}_a( \pi^i(a) , x_{\pi^i(a)}) \geq \widehat{v}_a(r, x_r) \quad \text{ for all } a \in \mathcal{A}  \text{ and } r \in \mathcal{R} \\
			& \ \ \ x  \geq \mathbb{0} 
			\end{align*}
			\STATE Set $(L^i,U^i)$ to be the linear domain of $p^i$ \\
			\COMMENT{The linear domain changes iff one of the inequalities in $p^i \geq L^{i-1}$ becomes tight}
			\ENDWHILE
			\STATE Return $(\pi^i, p^i)$
		\end{algorithmic}
	}
	\end{algorithm}
}

Recall that envy-free solution for quasilinear utilities can be computed efficiently. Therefore, we can find a fair solution $(\pi^0, p^0)$ for the quasilinear functions $\{{v}_a(r, M) - (x - M)\}_{a,r}$ . Furthermore, by adding $M$ to the (nonnegative) price of each room we can ensure that $(\pi^0, p^0)$ is not only envy free, but also satisfies $p^0_r \geq M$ for all $r$.

By construction, $(\pi^0, p^0)$ is $\EF$ for $\widehat{\mathcal{I}} = \langle \mathcal{A}, \mathcal{R}, \{ \widehat{v}_a(r, \cdot) \}_{a, r} \rangle$. $\ALG$ starts with such a solution and iteratively reduces the prices, while maintaining envy freeness with respect to $\widehat{v}_a(r, \cdot)$s. The algorithm terminates when it finds an envy-free solution $(\pi, p)$ wherein the rent for some room has been reduced to zero. One can show that at this price vector $p$ the rents of all the rooms have to be less than $M$ (details of this argument appear in the proof of Theorem~\ref{theorem:ef-rounded}). Since all the components of $p$ are less than $M$, at this price vector the utilities under the functions $\{ \widehat{v}_a(r, \cdot)\}_{a, r}$ and $\{ {v}_a(r, \cdot)\}_{a, r}$ are equal. Therefore, $(\pi, p)$ is $\EF$ with respect to ${v}_a(r, \cdot)$s as well. Intuitively, this establishes the correctness of the algorithm. Below we analyze the runtime of $\ALG$ and, overall, show that this algorithm efficiently finds an envy-free solution. 


The next few lemmas establish useful properties of the intermediate solutions, $(\pi^i, p^i)$s, computed by the algorithm. 

\begin{lemma}
\label{lemma:ef-invariant}
Given a rent-division instance $\mathcal{I}$, let $\widehat{\mathcal{I}}$ denote the modified instance obtained in Step~\ref{step:hat-instance} of $\ALG$ and $(\pi^i, p^i)$ be the solution computed in the $i$th iteration of the algorithm. Then, $(\pi^i, p^i)$ is an envy-free solution for $\widehat{\mathcal{I}}$. 
\end{lemma}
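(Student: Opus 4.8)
The plan is to argue by induction on the iteration count $i$. The base case $i=0$ is immediate: by construction (Step~\ref{step:hat-instance}), $(\pi^0, p^0)$ is an envy-free solution for $\widehat{\mathcal I}$ with $p^0 \geq M\mathbb 1$, so there is nothing to prove. For the inductive step, suppose $(\pi^{i-1}, p^{i-1})$ is envy free for $\widehat{\mathcal I}$; I want to show the same for $(\pi^i, p^i)$. The key observation is that the linear program $\mathrm{LP}(L^{i-1}, U^{i-1}, \pi^i)$ solved in Step~7 has as its feasibility constraints exactly the envy-freeness inequalities $\widehat v_a(\pi^i(a), x_{\pi^i(a)}) \geq \widehat v_a(r, x_r)$ for all $a,r$, together with the box constraint $U^{i-1} \geq x \geq L^{i-1}$ and nonnegativity $x \geq \mathbb 0$. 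So $p^i$, being the optimal solution, is in particular feasible, and hence $(\pi^i, p^i)$ satisfies all envy-freeness inequalities with respect to $\widehat v_a(r,\cdot)$. That is precisely the definition of $(\pi^i, p^i)$ being envy free for $\widehat{\mathcal I}$.

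The one thing that actually needs care — and which I expect to be the main (though modest) obstacle — is showing that $\mathrm{LP}(L^{i-1}, U^{i-1}, \pi^i)$ is \emph{feasible}, so that $p^i$ is well defined. Here I would use the variant perturbation lemma (\cref{lemma:perturb}): since $(\pi^{i-1}, p^{i-1})$ is envy free for $\widehat{\mathcal I}$ (inductive hypothesis), and $\pi^i$ is chosen as a maximum weight perfect matching in $\mathcal F^w(p^{i-1})$, Lemma~\ref{lemma:perturb} guarantees that for all sufficiently small $\delta > 0$ there is an envy-free solution $(\pi^i, q)$ of $\widehat{\mathcal I}$ with $p^{i-1}_r - \delta \leq q_r < p^{i-1}_r$ for all rooms $r$. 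For $\delta$ small enough this $q$ lies in the linear domain $(L^{i-1}, U^{i-1})$ of $p^{i-1}$ — because $q < p^{i-1} \leq U^{i-1}$ and, since the linear domain is an open-below region and $p^{i-1}$ is interior to it in each coordinate that is not at a lower breakpoint, a small enough decrease keeps $q_r > L^{i-1}_r$. One must note here that the $\mathcal F^w$ in Lemma~\ref{lemma:perturb} refers to the slopes in the linear domain of $p^{i-1}$, which is $(L^{i-1}, U^{i-1})$, consistent with how $\pi^i$ is computed in Step~6. Also $q \geq \mathbb 0$ holds for $\delta$ small since $p^{i-1} > \mathbb 0$ by the while-loop guard. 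Hence $q$ is feasible for $\mathrm{LP}(L^{i-1}, U^{i-1}, \pi^i)$, the program is feasible (and bounded below by $\mathbb 0$), and $p^i$ exists.

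Putting the two pieces together: $p^i$ is a feasible point of the LP, its feasibility constraints are exactly the envy-freeness conditions for $\widehat v_a(r,\cdot)$ under allocation $\pi^i$, and therefore $(\pi^i, p^i)$ is an envy-free solution for $\widehat{\mathcal I}$. By induction this holds for every iteration $i$, which is the claim. I should also remark, for use in later lemmas, that this argument simultaneously shows the LP is always feasible whenever the loop executes, so the algorithm is well defined — but the statement of Lemma~\ref{lemma:ef-invariant} itself only requires the envy-freeness conclusion.
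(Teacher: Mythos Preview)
Your proof is correct, but it takes an unnecessarily circuitous route for the feasibility step. The paper's argument is more direct: since $\pi^i$ is a maximum weight \emph{perfect} matching in $\mathcal{F}^w(p^{i-1})$, every edge $(a,\pi^i(a))$ is in particular a first-choice edge at the price vector $p^{i-1}$, and so $(\pi^i, p^{i-1})$ is itself envy free for $\widehat{\mathcal{I}}$. Combined with $L^{i-1} < p^{i-1} \leq U^{i-1}$ (definition of linear domain) and $p^{i-1} > \mathbb{0}$ (while-loop guard), this shows that $p^{i-1}$ is already a feasible point of $\mathrm{LP}(L^{i-1},U^{i-1},\pi^i)$. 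No perturbation is needed.

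Your invocation of \cref{lemma:perturb} is valid and does produce a feasible point $q$, but it is overkill here: the perturbation lemma is the heavier tool the paper reserves for \cref{lemma:wts-drop}, where one genuinely needs a feasible point \emph{strictly below} $p^i$ to derive a contradiction with optimality. For mere feasibility of the LP, the observation that $\pi^i$ lives in the first-choice graph at $p^{i-1}$ suffices. Both approaches are sound; the paper's is shorter and keeps the dependency structure cleaner.
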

\begin{proof}
We will prove this claim by induction over $i$. $\ALG$ starts with a solution $(\pi^0, p^0)$ which is envy free with respect to the quasilinear functions $\{{v}_a(r, M) - (x - M)\}_{a,r}$ and satisfies $p^0_r \geq M$ for all $r$. For such high rents the utilities under $\widehat{v}_a(r, \cdot)$s are equal to the ones under the quasilinear functions $\{{v}_a(r, M) - (x - M)\}_{a,r}$.  Hence, for the base case, we have that $(\pi^0, p^0)$ is $\EF$ for $\widehat{\mathcal{I}}= \left\langle \mathcal{A}, \mathcal{R}, \{ \widehat{v}_a(r, \cdot) \}_{a,r} \right\rangle$.

By the induction hypothesis, solution $(\pi^{i-1}, p^{i-1})$ is $\EF$ for $\widehat{\mathcal{I}}$. Therefore, $\pi^{i-1}$ is a perfect matching in the first-choice graph $\mathcal{F}(p^{i-1})$. This, in turn, implies that $\pi^i$---which is set to be a maximum weight perfect matching in $\mathcal{F}^w(p^{i-1})$---is well defined. 

The linear program LP$(L^{i-1}, U^{i-1}, \pi^i)$---which is solved in $\ALG$ to obtain $p^i$---considers all the price vectors (in the current linear domain $(L^{i-1}, U^{i-1})$) under which $\pi^i$ is an envy-free allocation.  
Since this linear program is bounded and feasible,\footnote{The price vector $p^{i-1}$, in particular, is a feasible solution of this linear program.} we get that it has an optimal solution, $p^i$. Overall, these observations imply that  $(\pi^i, p^i)$ is $\EF$ for $\widehat{\mathcal{I}}$, and the claim follows. 
\end{proof}

Given a price vector $p$, write $w(\pi)$ to denote the weight of the a matching $\pi$ in the weighted, bipartite graph $\mathcal{F}^w(p)$. Recall that for each edge $(a, r)$ in this graph, the edge weight $w_{(a, r)}$ is equal to $\log \lambda^a_r$, where $\lambda^a_r$ is the slope of the utility function $\widehat{v}_a(r, \cdot)$ in the linear domain containing $p$. Also, we have that the slopes remain unchanged in a linear domain. Hence, if the linear domain of price vector $p'$ is the same as the linear domain of $p$, then the weight of a matching $\pi$ in $\mathcal{F}^w(p')$ is equal to its weight in $\mathcal{F}^w(p)$.

Whenever we will compare weights of matchings the linear domain will be fixed. Hence, for ease of presentation, we will not explicitly denote the dependence of the weights, $w(\cdot)$, on the underlying linear domain. 
 
\begin{lemma}
\label{lemma:wts-drop}
Let $(\pi^{i}, p^{i})$ be a solution computed by $\ALG$ in an iteration $i$ wherein the linear domain does not change, $(L^{i-1}, U^{i-1}) =(L^i, U^i)$. Then,   the following strict inequality holds $w(\pi^i) < w(\pi^{i+1})$. \\ Here, $\pi^{i+1}$ is the allocation computed by $\ALG$ in the $(i+1)$th iteration, i.e., $\pi^{i+1}$ is a maximum weight perfect matching in $\mathcal{F}^w(p^i)$.  
\end{lemma}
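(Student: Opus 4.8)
The plan is to establish the strict inequality by contradiction, using the refined perturbation lemma (\cref{lemma:perturb}) applied to the modified instance $\widehat{\mathcal{I}}$ built in \cref{algorithm:envy-free}. The easy half is $w(\pi^{i+1}) \ge w(\pi^i)$: by \cref{lemma:ef-invariant} the solution $(\pi^i, p^i)$ is envy free for $\widehat{\mathcal{I}}$, so $\pi^i$ is a perfect matching in $\mathcal{F}(p^i)$ and hence in $\mathcal{F}^w(p^i)$; since $\pi^{i+1}$ is a maximum weight perfect matching in $\mathcal{F}^w(p^i)$, we get $w(\pi^{i+1}) \ge w(\pi^i)$. (All weights are taken in the common linear domain $(L^i,U^i)$, which is legitimate since $p^{i-1}$, $p^i$, and the perturbed vector $q$ below all lie in it.)

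Now suppose, for contradiction, that $w(\pi^{i+1}) = w(\pi^i)$, i.e., $\pi^i$ is itself a maximum weight perfect matching in $\mathcal{F}^w(p^i)$. Applying \cref{lemma:perturb} to $\widehat{\mathcal{I}}$ with the envy-free solution $(\pi^i, p^i)$ and the matching $\sigma = \pi^i$, we obtain, for every sufficiently small $\delta>0$, an envy-free solution $(\pi^i, q)$ of $\widehat{\mathcal{I}}$ with $p^i_r - \delta \le q_r < p^i_r$ for all rooms $r$. The crux is to verify that, for $\delta$ small enough, $q$ is a feasible point of the linear program $\mathrm{LP}(L^{i-1},U^{i-1},\pi^i)$ whose optimum is $p^i$. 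Its envy-free constraints $\widehat{v}_a(\pi^i(a), q_{\pi^i(a)}) \ge \widehat{v}_a(r, q_r)$ hold because $(\pi^i, q)$ is envy free for $\widehat{\mathcal{I}}$; the bound $q_r < p^i_r \le U^{i-1}_r$ holds since $p^i$ is itself feasible for this LP; and $q_r > 0$ holds for small $\delta$ because iteration $i+1$ is executed, so the while-loop guard gives $p^i_r>0$ for all $r$. The one remaining constraint, $q \ge L^{i-1}$, is exactly where the hypothesis enters: $p^i$ lies in its own linear domain, so $p^i > L^i$ componentwise (strict), and the hypothesis $L^i = L^{i-1}$ then yields $p^i_r > L^{i-1}_r$ for every $r$; hence $q_r \ge p^i_r - \delta > L^{i-1}_r$ once $\delta < \min_r (p^i_r - L^{i-1}_r)$ (this also places $q$ inside the box $[L^{i-1},U^{i-1}]$, so the affine form of the envy-free constraints in the LP is valid at $q$). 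Taking $\delta$ below all of the thresholds above makes $q$ feasible, and since $q_r < p^i_r$ for every $r$ we have $\sum_r q_r < \sum_r p^i_r$, contradicting the optimality of $p^i$ for $\mathrm{LP}(L^{i-1},U^{i-1},\pi^i)$. Therefore $w(\pi^i) < w(\pi^{i+1})$.

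The step I expect to require the most care is the feasibility check above --- not because it is deep, but because several pieces must be lined up: the perturbation lemma must be invoked on $\widehat{\mathcal{I}}$ (not $\mathcal{I}$), the single parameter $\delta$ must simultaneously satisfy the box constraints, nonnegativity, and the lower bound $q\ge L^{i-1}$, and one must recognise that the hypothesis $(L^{i-1},U^{i-1}) = (L^i,U^i)$ is precisely what rules out a tight constraint $p^i_r = L^{i-1}_r$, which would otherwise block any downward perturbation of $p^i$ inside the LP's feasible region.
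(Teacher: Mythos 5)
Your proposal is correct and follows essentially the same route as the paper's proof: establish $w(\pi^{i+1}) \ge w(\pi^i)$ from maximality, assume equality for contradiction, invoke \cref{lemma:perturb} on $\widehat{\mathcal{I}}$ with $\sigma=\pi^i$, and use the strict inequalities $p^i > L^{i-1}$ (via the unchanged linear domain) and $p^i > \mathbb{0}$ (via the while-loop guard) to make the perturbed vector $q$ feasible for LP$(L^{i-1},U^{i-1},\pi^i)$, contradicting the optimality of $p^i$. The feasibility bookkeeping you flag as delicate is exactly the content of the paper's argument, handled the same way.
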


\begin{proof}
$\ALG$ computes $p^i$ by solving the linear program LP$(L^{i-1}, U^{i-1}, \pi^i)$ which, in particular, includes constraints of the form $x \geq L^{i-1}$ and $x  \geq \mathbb{0}$. 
Note that if any one of these constraints is tight for $p^i$, then either the linear domain changes\footnote{Recall that the definition of a linear domain, $(L,U)$, for a price vector $p$ mandates a strict inequality between $L$ and $p$, i.e., it requires $L < p$.} (i.e., $(L^{i-1}, U^{i-1})  \neq (L^i, U^i)$) or the while-loop terminates. Therefore, $p^i$ must satisfy the following strict inequalities $p^i > L^{i-1}$ and $p^i > \mathbb{0}$. 

Lemma~\ref{lemma:ef-invariant} ensures that $(\pi^{i}, p^{i})$ is $\EF$ and, hence, $\pi^{i}$ is a perfect matching in the bipartite graph $\mathcal{F}^w(p^{i})$. Using the fact that $\pi^{i+1}$ is a maximum weight perfect matching in this graph, we get $w(\pi^{i+1}) \geq w(\pi^{i})$. Next we will prove that this inequality is never tight. 

Assume, for contradiction, that $w(\pi^i) = w(\pi^{i+1})$, i.e., $\pi^{i}$ is also a maximum weight perfect matching in $\mathcal{F}^w(p^{i})$.
In this case, the perturbation lemma (Lemma~\ref{lemma:perturb}) implies that, for any small enough $\delta >0$, there exists a price vector $q$ such that $(\pi^i, q)$ is envy free and $p^i_r - \delta \leq q_r < p_r$ for all $r$. 

Since we have strict inequalities $p^i > L^{i-1}$ and $p^i_r > 0$, an appropriate $\delta >0$ leads to a price vector $q$ which satisfies $q \geq L^{i-1}$ and $q \geq \mathbb{0}$. In other words, $q$ is a feasible solution of the linear program LP$(L^{i-1}, U^{i-1}, \pi^i)$, whose optimal solution is $p^i$.\footnote{Here, even if the inequality $p^i_{r'} \leq U^{i-1}_{r'}$  is tight, for some $r'$, $q$ remains feasible.} The inequality $\sum_r q_r < \sum_r p^i_r$ contradicts the optimality of $p^i$ and establishes the stated claim. 
\end{proof}

The following lemma shows that the room rents do not increase as the algorithm progresses. 

\begin{lemma}
\label{lemma:price-drop-sale-sale}
The price vectors, $p^{i-1}$ and $p^{i}$, computed in consecutive iterations of $\ALG$ satisfy the following componentwise inequality: $p^{i-1} \geq p^{i}$.
 \end{lemma}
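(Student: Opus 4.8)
The plan is to recognize the feasible region of the linear program $\mathrm{LP}(L^{i-1},U^{i-1},\pi^i)$ as a polytope that is closed under coordinatewise minima; this forces its $\sum_r x_r$-minimizer to be the coordinatewise-least feasible point, and since $p^{i-1}$ is itself feasible for this LP (as noted in the proof of Lemma~\ref{lemma:ef-invariant}), the inequality $p^{i-1}\geq p^i$ falls out immediately.

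The first step is a small lattice property of envy-free prices for a \emph{fixed} allocation. Fix an allocation $\pi$ and suppose $(\pi,p)$ and $(\pi,p')$ are both envy free for $\widehat{\mathcal I}$; I claim $(\pi,q)$ is envy free for $\widehat{\mathcal I}$, where $q_r:=\min\{p_r,p'_r\}$ for every $r$. To verify this, fix an agent $a$ and set $s:=\pi(a)$; by symmetry we may assume $q_s=p_s$, i.e.\ $p_s\leq p'_s$. For a room $r$ with $q_r=p_r$ we simply have $\widehat v_a(s,q_s)=\widehat v_a(s,p_s)\geq\widehat v_a(r,p_r)=\widehat v_a(r,q_r)$ by envy freeness of $(\pi,p)$. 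For a room $r$ with $q_r=p'_r\leq p_r$, monotone decreasingness of $\widehat v_a(s,\cdot)$ (valid globally, not merely inside a linear domain) together with envy freeness of $(\pi,p')$ give
\[
\widehat v_a(s,q_s)=\widehat v_a(s,p_s)\ \geq\ \widehat v_a(s,p'_s)\ \geq\ \widehat v_a(r,p'_r)=\widehat v_a(r,q_r).
\]
Hence $(\pi,q)$ is envy free. The box $\{x:\ U^{i-1}\geq x\geq L^{i-1},\ x\geq\mathbb 0\}$ is obviously closed under coordinatewise minima as well.

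Now let $P$ denote the feasible region of $\mathrm{LP}(L^{i-1},U^{i-1},\pi^i)$. Inside the box, its envy-freeness constraints assert exactly that $(\pi^i,x)$ is envy free for $\widehat{\mathcal I}$, so the previous step shows $P$ is a nonempty, compact polytope closed under coordinatewise minima. A routine argument then produces a coordinatewise-least element $p^\star\in P$: for each coordinate $r$ pick $v^{(r)}\in\argmin{x\in P}x_r$, and let $p^\star$ be the coordinatewise minimum of $v^{(1)},\dots,v^{(n)}$, which lies in $P$ (by min-closure) and satisfies $p^\star\leq x$ for every $x\in P$. Consequently $p^\star$ minimizes $\sum_r x_r$ over $P$, and any minimizer $q$ of this sum obeys $q\geq p^\star$ together with $\sum_r q_r=\sum_r p^\star_r$, which forces $q=p^\star$; in particular the LP optimum is $p^i=p^\star$ (so the optimum is in fact unique). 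Since $p^{i-1}\in P$, we conclude $p^{i-1}\geq p^\star=p^i$, which is the desired componentwise inequality.

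The only delicate point is the lattice claim for the envy-freeness constraints, where the argument must use the \emph{global} monotone-decreasing property of the utilities $\widehat v_a(r,\cdot)$ rather than the linearity that holds only within $(L^{i-1},U^{i-1})$ — this is precisely what the two-case split above handles. The remaining ingredients (feasibility of $p^{i-1}$ for the LP, and extracting the coordinatewise-least point of a min-closed compact polytope) are standard.
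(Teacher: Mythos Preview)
Your proof is correct and follows essentially the same approach as the paper: both arguments hinge on the observation that the feasible region of $\mathrm{LP}(L^{i-1},U^{i-1},\pi^i)$ is closed under coordinatewise minima, from which the inequality $p^{i-1}\geq p^i$ follows once one notes that $p^{i-1}$ is feasible. The paper packages this as a direct contradiction (take any feasible $p$ with $p_r<p^i_r$ for some $r$, form $q=\min(p,p^i)$, and contradict optimality of $p^i$), whereas you develop the lattice structure more explicitly and extract the coordinatewise-least element; your version additionally yields uniqueness of the LP optimum as a byproduct, which the paper does not state.
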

\begin{proof}
The price vector $p^{i-1}$ is a feasible solution of the linear program LP$(L^{i-1}, U^{i-1}, \pi^i)$: by definition, $p^{i-1}$ is contained in the linear domain $(L^{i-1}, U^{i-1})$ and $\pi^i$ is a maximum weight perfect matching in $\mathcal{F}^w(p^{i-1})$, i.e., $(\pi^i, p^{i-1})$ is $\EF$. 

We will show that the inequality $p \geq p^i$ holds for all price vectors $p \in \mathbbm{R}_+^n$ that are feasible with respect to the linear program LP$(L^{i-1}, U^{i-1}, \pi^i)$. Hence, the feasibility of $p^{i-1}$ will gives us the desired inequality $p^{i-1} \geq p^{i}$.  

Consider a feasible solution, $p \in \mathbbm{R}_+^n$, of the linear program LP$(L^{i-1}, U^{i-1}, \pi^i)$. Say, for contradiction, that there exists a room $r$ such that $p_r < p^i_r$. We will show that the price vector, say $q$, obtained by taking the componentwise minimum of $p^i$ and $p$ is also a feasible solution. Note that such a $q$ would satisfy $\sum_r q_r < \sum_r p^i_r$, which would contradict the fact that $p^i$ is an optimal solution of LP$(L^{i-1}, U^{i-1}, \pi^i)$. Hence, it must be the case that $p \geq p^i$ for all feasible price vectors $p$. 

Write $q_{r'} := \min\{ p_{r'}, p^i_{r'} \}$ for all $r' \in \mathcal{R}$. This price vector directly satisfies the box constraints of the linear program: $L^{i-1} \leq q \leq U^{i-1}$ and $q \geq \mathbb{0}$. To establish the feasibility of $q$ it remains to show that $q$ maintains the envy freeness of $\pi^i$, i.e., it satisfies $\widehat{v}_a( \pi^i(a) , q_{\pi^i(a)}) \geq \widehat{v}_a(r, q_r)$ for all $a$ and $r$. 

For each room $r$, there are two possible cases either $q_r = p_r$ or $q_r = p^i_r$. If $q_r = p_r$, then we have $ \widehat{v}_a(r, q_r) = \widehat{v}_a(r, p_r)$ for all agents $a$. The feasibility of $p$ ensures that $\widehat{v}_a( \pi^i(a) , p_{\pi^i(a)}) \geq \widehat{v}_a(r, p_r)$. Since, $q_{\pi^i(a)} \leq p_{\pi^i(a)}$, we get the desired inequality $\widehat{v}_a( \pi^i(a) , q_{\pi^i(a)}) \geq \widehat{v}_a( \pi^i(a) , p_{\pi^i(a)}) \geq \widehat{v}_a(r, p_r) = \widehat{v}_a(r, q_r)$. The inequality analogously holds for the case wherein $q_r = p^i_r$. This, overall, proves that $q$ is feasible and the lemma follows. 
\end{proof}

Using the lemmas mentioned above, we will now prove the main result of this section. 

\TheoremEfRounded*


\begin{proof}
Consider a sequence of successive iterations $S=\{i, i+1, i+2, \ldots, j\}$ of $\ALG$ in which the underlying linear domain remains the same, i.e., $(L^{i-1}, U^{i-1}) = (L^i, U^i) = \cdots = (L^j, U^j)$. We will first upper bound the length, $|S|=(j+1-i)$, of any such sequence. 

Write $\lambda^a_r$ to denote the (fixed) slope of the utility function $\widehat{v}_a(r, \cdot)$ in the linear domain $(L^{i-1}, U^{i-1})$. Since all the slopes in the given piecewise-linear utilities are integer powers of $(1 + \varepsilon)$, we have $\lambda^a_r= (1+\varepsilon)^{k^a_r}$ for an integer $k^a_r \in \mathbbm{Z}$. The slopes, $\lambda^a_r$s, are fixed throughout the linear domain and, hence, the exponents, $k^a_r$s, are fixed as well.   Note that $k^a_r$ can be negative, but its magnitude is upper bounded in terms of  the bit complexity of $\lambda^a_r$: using $|k^a_r| = \frac{|\ln \lambda^a_r|}{\ln (1 + \varepsilon)}$ and $\ln (1+ \varepsilon) \geq \varepsilon - \varepsilon^2/2$, we obtain  $|k^a_r| \leq \frac{3 |\log \lambda^a_r|}{\varepsilon}$, for $\varepsilon <1$. Therefore, if $\beta$ is the bit complexity of the input parameters, then $|k^a_r| \leq \frac{3\beta}{\varepsilon}$.

This magnitude bound implies that $| \max_{a, r} k^a_r  - \min_{a, r} k^a_r| \leq \frac{6 \beta}{\varepsilon}$. Furthermore, the edge weights, $w_{(a,r)}$s, in the bipartite graph $\mathcal{F}^w(p^k)$, for any $k \in S$, satisfy $w_{(a, r)} = k^a_r \ \log (1 + \varepsilon)$. Therefore, the difference in the weights of any two perfect matchings in the complete bipartite graph $(\mathcal{A} \cup \mathcal{R}, \mathcal{A} \times \mathcal{R})$---with edge weights set to $w_{(a, r)}$s---is upper bounded by $\frac{6\beta}{\varepsilon} \ n \log (1 + \varepsilon) $. 

Since the linear domain remains unchanged throughout the sequence $S$, Lemma~\ref{lemma:wts-drop} implies the following strict inequalities $w(\pi^i) < w(\pi^{i+1}) < \ldots < w(\pi^{j+1})$; here $\pi^k$ is the matching computed by $\ALG$ in the $k$th iteration, for $k \in S$. The above mentioned observation gives us
\begin{align}
\label{ineq:wt-diff}
w(\pi^{j+1}) - w(\pi^i) \leq \frac{6\beta}{\varepsilon} \ n \log (1 + \varepsilon). 
\end{align}

Furthermore, for each $k \in S$, the weight $w(\pi^k)$ is an integer multiple of $\log (1 + \varepsilon)$; recall that each exponent $k^a_r$ is an integer. Therefore, using the strict inequality $w(\pi^k) < w(\pi^{k+1})$, we get that $w(\pi^{k+1}) - w(\pi^k) \geq \log (1 + \varepsilon)$. This inequality and equation (\ref{ineq:wt-diff}) bound the length of the sequence 
\begin{align}
\label{ineq:seq-length}
|S| = j +1 -i & \leq  \frac{6\beta n}{\varepsilon} 
\end{align} 

This bound implies that $\ALG$ changes the underlying linear domain after at most $ \frac{6\beta n}{\varepsilon}$ successive iterations. We will complete the runtime analysis of $\ALG$ by showing that the total number of linear-domain changes in $\ALG$ is polynomially bounded.  

Let $\ell^a_r$ denote the total number of pieces in the piecewise-linear utility function $\widehat{v}_a(r, \cdot)$. Write $\ell := \sum_{a, r} \ell^a_r$. Consider counters $c^a_r \in \mathbbm{Z}_+$ and to keep track of the price changes we adopt the following convention: after the $i$th iteration, with price vector $p^i$ in hand, the counter $c^a_r$ is set to be the index of the piece of $\widehat{v}_a(r, \cdot)$ which contains $p^i_r$; in particular, if the price $p^i_r$ is between the $t$th and $(t+1)$th breakpoint of $\widehat{v}_a(r, \cdot)$, then at the $i$th iteration $c^a_r = t$. Since the initial price vector satisfies $p^0_r \geq M$, for all $r$, and $M$ is the last breakpoint of all the utility functions $\widehat{v}_a(r, \cdot)$s, we get that $c^a_r$ is initialized to $\ell^a_r$ for each $a$ and $r$. That is, at the beginning of the algorithm we have $\sum_{a, r} c^a_r = \ell$. 

Lemma~\ref{lemma:price-drop-sale-sale} ensures that the computed prices are nonincreasing. Hence, the counters $c^a_r$s are nonincreasing as well. Furthermore, in $\ALG$, a change in the linear domain is triggered iff, for some $r$, the inequality $p^i_r \geq L^{i-1}_r$ becomes tight. By the definition of a linear domain we get that there exists an agent $a'$ such that $L^{i-1}_r$ is equal to a breakpoint of $\widehat{v}_{a'}(r, \cdot)$. Therefore, in $\ALG$, whenever the linear domain changes the value of at least one counter, $c^{a'}_r$, gets decremented by one. As mentioned above, the counter values do not increase as $\ALG$ progresses and initially they satisfy $\sum_{a, r} c^a_r = \ell$. Also, note that $\ALG$ terminates as soon as the price of a room reduces to zero, hence the counter values satisfy $c^a_r \geq 1$ during the entire execution of the algorithm. These arguments prove that the total number of domain changes is at most $\ell$. Using inequality (\ref{ineq:seq-length}) with this bound we get that $\ALG$ runs for at most $\frac{6 \beta n \ell}{\varepsilon}$ iterations.\footnote{Since $\beta$ is the bit complexity of the input parameters and $\ell$ is the total number of given pieces, the input size is $\mathcal{O}(\beta \ell n)$ and, hence, an algorithm that runs in time polynomial in $\beta$, $\ell$, and $n$ is deemed to be efficient.} This completes the runtime analysis of the algorithm.

To complete the proof we will show that the solution returned by $\ALG$, say $(\pi, p)$, is $\EF$ for the original instance $\mathcal{I}$. This solution is $\EF$ for $\widehat{\mathcal{I}}$ (Lemma~\ref{lemma:ef-invariant}) and the termination condition of the while-loop ensures that $p_{\rho} = 0$ for some room $\rho$. Hence, $p$ is componentwise less than $M$: assume, for contradiction, that $p_{r'} > M$ for a room $r'$ and let $\alpha$ be the agent who is assigned this room, $\pi(\alpha) = r'$. The following inequalities contradict the envy freeness of $(\pi, p)$ for $\widehat{\mathcal{I}}$

\begin{align*}
\widehat{v}_{\alpha}(\pi(\alpha), p_{\pi(\alpha)}) & < \widehat{v}_{\alpha}(\pi(\alpha), M)   \qquad \text{(utility $\widehat{v}_{\alpha}(\pi(\alpha), \cdot)$ is monotone decreasing)} \\
& = v_\alpha(\pi(\alpha), M) \qquad \text{(follows from (\ref{equation:bar-to-hat}))} \\
& < v_\alpha(\rho, 0) \qquad \text{(by the definition of $M$)} \\
& = \widehat{v}_\alpha(\rho, 0) \qquad \text{(again, follows from (\ref{equation:bar-to-hat}))} \\
& = \widehat{v}_\alpha(\rho, p_\rho).
\end{align*}

Therefore, the returned price vector satisfies $p \leq M \mathbb{1}$. Note that, when all the prices are less than $M$,  the equality $\widehat{v}_a(r, p_r) = v_a(r, p_r)$ holds for all $a$ and $r$  (see equation (\ref{equation:bar-to-hat})). Overall, we get that $(\pi, p)$ is an envy-free solution of $\mathcal{I}$ as well, and this completes the proof.  
\end{proof}

\subsection{Runtime Analysis for Special Cases}
\label{subsection:special}

\cref{theorem:ef-rounded} establishes the time complexity of $\ALG$ (\cref{algorithm:envy-free}) for rent-division instances in which all the slopes are integer powers of $(1+\varepsilon)$. Specifically, this runtime analysis shows that the number of iterations in $\ALG$ that consider the same linear domain $(L, U)$ is at most the number of perfect matchings with \emph{distinct} weights in the first-choice graph associated with $(L, U)$; see inequality (\ref{ineq:seq-length}). Write $D$ to denote the maximum number of such perfect matchings.  The powers-of-$(1+\varepsilon)$ property is used, in particular,  to show that for  structured instances $D = \mathcal{O}(\frac{\beta n}{\epsilon})$, which in turn shows that  any linear domain is considered in $\mathcal{O}(\frac{\beta n}{\epsilon})$ iterations; here, $\beta$ is the bit complexity of the input. 

For general instances, $D$ continues to upper bound the number of iterations which consider the same linear domain. However, without the powers-of-$(1+\varepsilon)$ property $D$ can be as large as $n!$. 

Next, we present two special cases of piecewise-linear utilities in which we can prove tighter bounds on $D$ and, hence, establish the efficiency of $\ALG$. \\

\noindent 
{\bf Constant Number of Agents:} Consider the case in which $n$ is a fixed constant. This setting occurs naturally in real-world applications wherein the number of roommates is small. Since $D$ is at most $n!$, in this case $D$ is also a constant. Therefore, when the number of agents is fixed,  $\ALG$ finds an envy-free solution in polynomial time.  \\

\noindent 
{\bf Constant Number of Distinct Slopes:} For a rent-division instance, let $k$ denote the total number of distinct slope values across all the utility functions. Equivalently, $k$ is the total number of distinct logarithmic values of the slopes. Note that the total number of pieces over all the utility functions can be much larger than $k$. 

For a fixed linear domain, consider the complete bipartite graph in which the edge weights are equal to the logarithm of the slopes. Since there are no more than $k$ distinct edge weights, say $w_1, w_2, \ldots, w_k$, with each perfect matching $\mu$ in the complete bipartite graph we can associate a $k$-tuple $(n_1,n_2,n_3,....,n_k)$, where $n_i$ represents the number of edges in $\mu$ whose weight equal to $w_i$. Note that the weight of the perfect matching $\mu$ is equal to $\sum_{i=1}^k n_i w_i$, i.e., the tuple uniquely determines the weight of $\mu$. Also, for any matching $\sum_{i=1}^k n_i = n$. In other words, the tuple is a $k$ integer partition of $n$. Therefore, the number of tuples is at most $(n+1)^k$. This shows that $D$, the number of perfect matchings with distinct weights, is also upper bounded by $(n+1)^k$. Hence, if $k$ is a fixed constant, then $\ALG$ finds an envy-free solution in polynomial time.

This case (with $k=2$) captures the setting in which the agents have quasilinear utilities and a budget constraint (i.e., an agent cannot pay more than a specified amount as rent). A polynomial-time algorithm for this setting was developed in \cite{procacciafair}. Since this budgeted setting can be modeled by piecewise-linear utilities that have two distinct slope values (the slope of the quasilinear part, $-1$, and a large negative slope at the budget of every agent), this work also obtains an efficient algorithm for finding fair solutions under quasilinear utilities and budget constraints.

The analysis for the two special cases above also hold for the modified versions of \cref{algorithm:envy-free} that are developed in \cref{section:fixed} and \cref{section:optimal-price}.

\section{Approximation Algorithm for Fair Rent Division}
\label{section:FPTAS}
This section presents a fully polynomial-time approximation scheme (FPTAS) for the fair rent-division problem under continuous, monotone decreasing, and piecewise-linear utilities. Given a rent-division instance and an approximation factor $\varepsilon$, our first step is to construct a ``close-by'' structured instance in which the slopes of all the utility functions are integer powers of $(1+\varepsilon)$. We will prove that any envy-free solution of this constructed instance is an $\varepsilon$-$\EF$ solution for the original instance. Hence, using the algorithm developed in Section~\ref{section:rounded} ($\ALG$) we obtain an FPTAS. 

Given a problem instance $\mathcal{I} = \left\langle \mathcal{A}, \mathcal{R}, \{ {v}_a(r, \cdot) \}_{a,r} \right\rangle$, we construct utility functions $\{ \overline{v}_a(r, \cdot) \}_{a,r}$ by rounding the slopes of the functions $\{ {v}_a(r, \cdot) \}_{a,r}$ piece by piece. 


Consider piece $i$ of the utility function $v_a(r,x)$ and let $x_1$ and $x_2$ be the breakpoints corresponding to this piece (i.e., the piece is defined in the interval $[x_1,x_2]$). Write $\lambda^a_r$ to denote the magnitude of the slope of $v_a(r, \cdot)$ in this piece. We will use $H^a_r$ and $B^a_r$ to denote that value of the utility function $v_a(r, \cdot)$ at breakpoints $x_1$ and $x_2$, respectively:  $H^a_r = v_a(r,x_1)$ and $B^a_r = v_a(r,x_2)$. Since the construction is identical for all pieces, we will overload the notation and not explicitly index these parameters by $i$; each piece of $v_a(r, \cdot)$ will be denoted by $ \left\langle [x_1,x_2], \lambda^a_r, H^a_r, B^a_r \right\rangle$. 


The following equalities hold for all $x \in [x_1, x_2]$
\begin{align}
\label{eq:rewrite}
v_a(r,x) = H^a_r - \lambda^a_r \ (x - x_1 )  = B^a_r + \lambda^a_r \ (x_2 - x ) 
\end{align}


Given a piece $ \left\langle [x_1,x_2], \lambda^a_r, H^a_r, B^a_r \right\rangle$ and the approximation parameter $\varepsilon \in (0,1)$, we round up $\lambda^a_r$ to the nearest integer power of $(1+\varepsilon)$: define $\overline{\lambda^a_r} \coloneqq (1+\varepsilon)^{\ceil*{\log_{(1+\varepsilon)}\lambda^a_r}}$ and note that 

\begin{align}
\label{property:sandwich}
 \frac{\overline{\lambda^a_r}}{1+\varepsilon} < \lambda^a_r \leq \overline{\lambda^a_r}
\end{align}


To obtain the approximating function $\overline{v}_a(r, \cdot)$, each piece of $v_a(r,x)$ is substituted by two linear pieces with slopes $\overline{\lambda^a_r}$ and $\frac{\overline{\lambda^a_r}}{1+\varepsilon}$, respectively. Note that, by the intermediate value theorem, there always exists an $x^* \in [x_1,x_2)$ such that $H^a_r = B_a^r + \overline{\lambda^a_r}(x_2-x^*) + \frac{\overline{\lambda^a_r}}{1+\varepsilon}(x^*-x_1)$.

\begin{figure}[h]
\begin{center}
\includegraphics[scale=.52]{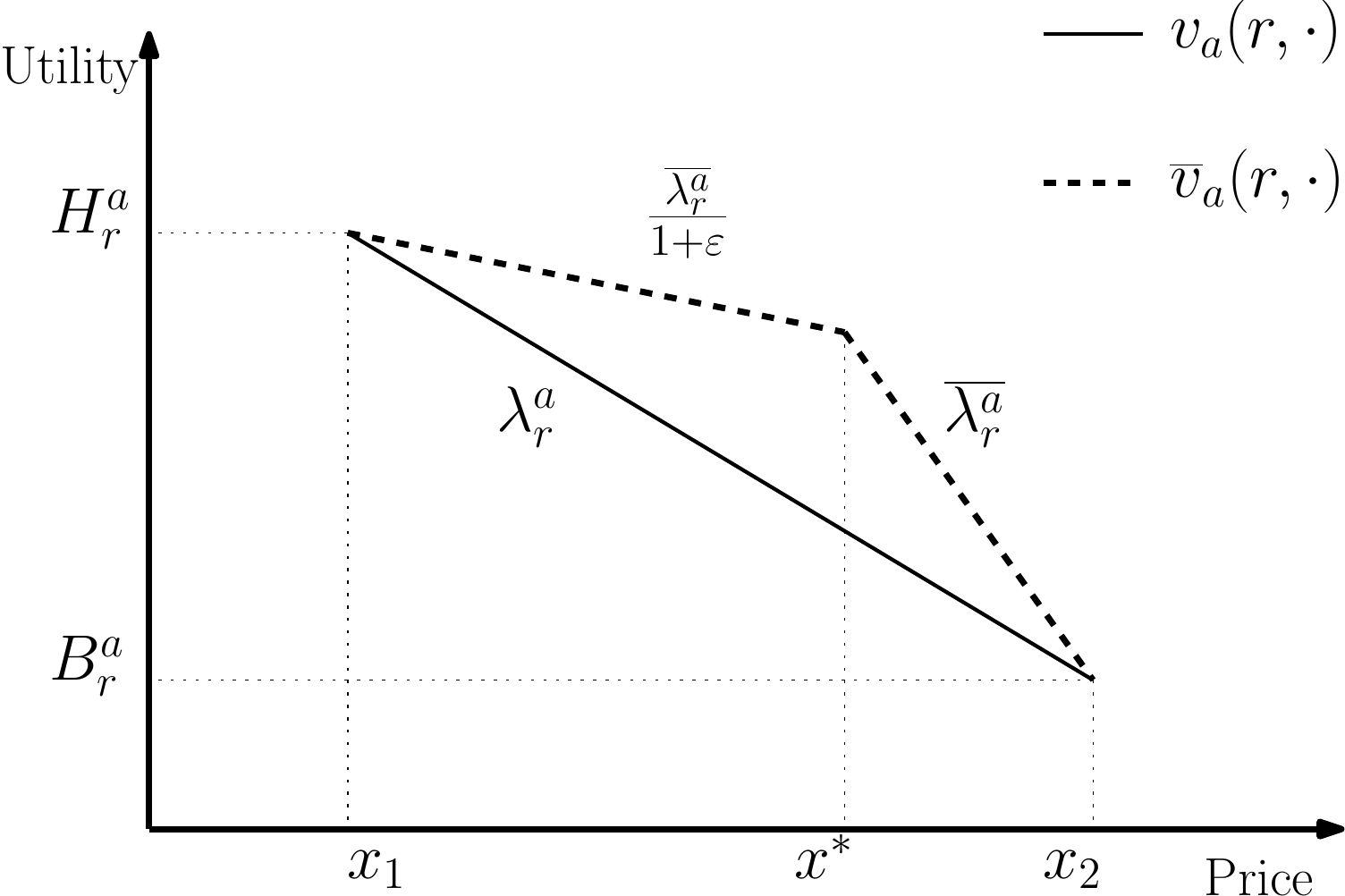}
\end{center}
\caption{Construction of $\overline{v}_a(r, \cdot)$ in a single piece.}
\label{figure:rounding}
\end{figure}

Using $x^*$ (which can be found by solving the equation above), we define $\overline{v}_a(r,x)$ in the interval $[x_1, x_2]$ as follows (see ~\cref{figure:rounding})

\begin{align*}
\overline{v}_a(r,x) := \begin{cases}
H_a^r - \frac{\overline{\lambda^a_r}}{1+\varepsilon}(x - x_1)  & \text{ for }x_1 \le x < x^* \\
B^a_r + \overline{\lambda^a_r}(x_2-x) & \text{ for }x^* \le x \le x_2 
\end{cases}
\end{align*}

The piecewise-linear functions $\{\overline{v}_a(r, \cdot) \}_{a,r}$ can be found in {polynomial} time by applying the above mentioned procedure over the pieces of the given utilities $\{{v}_a(r, \cdot) \}_{a,r}$.

For all agents $a$ and rooms $r$, let $z^a_r$ be the price at which agent $a$'s utility for room $r$ reduces to zero, i.e., $z^a_r$ is the solution of $v_a(r, z^a_r) = 0$. While constructing each $\overline{v}_a(r, \cdot)$, we will treat $z^a_r$ as a breakpoint, i.e., if $z^a_r$ is contained in the piece $[x_1,x_2]$, then $[x_1,z^a_r]$ and $[z^a_r,x_2]$ will be treated as distinct pieces during the construction of $\overline{v}_a(r, \cdot)$. This convention ensures the following property

\begin{lemma}
	\label{lemma:same-sign}
	For all agents $a \in \mathcal{A}$, rooms $r \in \mathcal{R}$ and any price $x\in \mathbbm{R}$, we have $\overline{v}_a(r,x) \geq 0$ iff $v_a(r,x) \geq 0$. In addition, for a price $z \in \mathbbm{R}$ the equality $\overline{v}_a(r, z) = 0$ holds iff $v_a(r, z) = 0$.
\end{lemma}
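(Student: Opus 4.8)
The plan is to reduce both claims to a single fact: $v_a(r,\cdot)$ and $\overline{v}_a(r,\cdot)$ are continuous, strictly decreasing, and share the \emph{same} unique zero, namely $z^a_r$. First I would record two structural properties of the constructed function $\overline{v}_a(r,\cdot)$. (i) It is continuous: inside each original piece $[x_1,x_2]$ the two linear segments (slopes $-\overline{\lambda^a_r}/(1+\varepsilon)$ and $-\overline{\lambda^a_r}$) meet at $x^*$ precisely because $x^*$ was chosen to satisfy $H^a_r = B^a_r + \overline{\lambda^a_r}(x_2-x^*) + \frac{\overline{\lambda^a_r}}{1+\varepsilon}(x^*-x_1)$, which is exactly the equality of the left- and right-hand values at $x^*$; and at the endpoints of each piece one reads off $\overline{v}_a(r,x_1)=H^a_r=v_a(r,x_1)$ and $\overline{v}_a(r,x_2)=B^a_r=v_a(r,x_2)$, so the pieces also agree with $v_a(r,\cdot)$ and glue together at the original breakpoints. (ii) It is strictly decreasing, since every constituent slope equals $-\overline{\lambda^a_r}$ or $-\overline{\lambda^a_r}/(1+\varepsilon)$, both strictly negative as $\overline{\lambda^a_r}=(1+\varepsilon)^{\lceil \log_{(1+\varepsilon)} \lambda^a_r \rceil}>0$. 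The same (strict) monotonicity holds for $v_a(r,\cdot)$ because all slope magnitudes $\lambda^a_r$ are positive.

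Next I would exploit the stated convention that $z^a_r$ is inserted as a breakpoint in the construction. If $z^a_r$ falls strictly inside some original piece $[x_1,x_2]$, the rounding procedure is applied separately to $[x_1,z^a_r]$ and $[z^a_r,x_2]$; by property (i) applied at the shared breakpoint $z^a_r$ we get $\overline{v}_a(r,z^a_r)=v_a(r,z^a_r)=0$ (and the same holds trivially if $z^a_r$ already coincides with an original breakpoint). Hence $z^a_r$ is a zero of $\overline{v}_a(r,\cdot)$; by strict monotonicity it is the \emph{unique} zero, and likewise $z^a_r$ is the unique zero of $v_a(r,\cdot)$ by definition. (Existence of the zero $z^a_r$ in the relevant price range is guaranteed by the standard assumption $v_a(r,0)\ge 0$, since $v_a(r,\cdot)$ is strictly decreasing and unbounded below.)

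Finally I would conclude by a monotonicity comparison. For any $x\in\mathbb{R}$, a strictly decreasing continuous function with unique zero $z^a_r$ is nonnegative exactly on $\{x \le z^a_r\}$ and zero exactly at $x=z^a_r$; applying this to both $v_a(r,\cdot)$ and $\overline{v}_a(r,\cdot)$, which have the \emph{same} zero $z^a_r$, gives $v_a(r,x)\ge 0 \iff x\le z^a_r \iff \overline{v}_a(r,x)\ge 0$ and $v_a(r,x)=0 \iff x= z^a_r \iff \overline{v}_a(r,x)=0$, which are the two assertions.

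The only step requiring genuine care is property (i) together with the breakpoint-insertion observation: one must confirm that $\overline{v}_a(r,\cdot)$ is truly continuous and that it is pinned to $v_a(r,\cdot)$ at every (sub)breakpoint, in particular at $z^a_r$. Once that is in place, everything else is a one-line consequence of strict monotonicity; so I would expect the write-up to be short, with the bookkeeping around the refined piecewise partition being the main (and only mild) obstacle.
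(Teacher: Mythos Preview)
Your proposal is correct and matches the paper's own (implicit) argument: the paper does not spell out a proof but simply notes that the convention of inserting $z^a_r$ as a breakpoint ``ensures'' the lemma, and your write-up is exactly the natural unpacking of that remark---$\overline{v}_a(r,\cdot)$ agrees with $v_a(r,\cdot)$ at every (sub)breakpoint, hence in particular at $z^a_r$, and both functions are continuous and strictly decreasing, so they share the same unique zero and therefore the same sign pattern.
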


The construction of  $\overline{v}$ directly implies the following properties (see Figure~\ref{figure:rounding}).

\begin{lemma}
\label{lemma:properties}
For every agent $a \in \mathcal{A}$, room $r \in \mathcal{R}$, and each linear piece $ \left \langle [x_1,x_2], \lambda^a_r, H^a_r, B^a_r \right \rangle$ of $v_a(r, \cdot)$, the following hold for all $x \in [x_1,x_2]$

\begin{enumerate}[(a)]
	\item $\overline{v}_a(r,x) \geq v_a(r,x)$
	\item $B^a_r + \overline{\lambda^a_r} (x_2 - x)  \geq \overline{v}_a(r,x)$
	\item $H^a_r - \frac{\overline{\lambda^a_r}}{1+\varepsilon} (x - x_1)  \geq  \overline{v}_a(r,x)$
	\item $\overline{v}_a(r,x) \geq 0$ iff $H^a_r \ge 0$ and $B^a_r \geq 0$
	\item $\overline{v}_a(r,x) \leq 0$ iff $H^a_r \le 0$ and $B^a_r \leq 0$
\end{enumerate}
\end{lemma}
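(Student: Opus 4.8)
The plan is to first record, for a single piece $[x_1,x_2]$, the identity
$\overline{v}_a(r,x) = \min\{\ell_1(x),\ell_2(x)\}$, where $\ell_1(x) := B^a_r + \overline{\lambda^a_r}(x_2 - x)$ and $\ell_2(x) := H^a_r - \frac{\overline{\lambda^a_r}}{1+\varepsilon}(x - x_1)$, and then read off (a)--(c) from it. The two affine functions $\ell_1$ and $\ell_2$ agree at $x^*$ by the very equation defining $x^*$, and $\ell_1$ has the steeper (more negative) slope since $\overline{\lambda^a_r} > \overline{\lambda^a_r}/(1+\varepsilon)$; hence $\ell_1 \ge \ell_2$ on $[x_1,x^*]$ and $\ell_2 \ge \ell_1$ on $[x^*,x_2]$. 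Matching this against the two-case definition of $\overline{v}_a(r,\cdot)$ gives $\overline{v}_a(r,x) = \min\{\ell_1(x),\ell_2(x)\}$ throughout the piece (the degenerate case $x^* = x_1$, where the first linear segment is vacuous, is checked directly from the defining equation). Properties (b) and (c) are then immediate, since they assert exactly $\overline{v}_a(r,x) \le \ell_1(x)$ and $\overline{v}_a(r,x) \le \ell_2(x)$.

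For (a) I would combine the two forms of $v_a(r,x)$ in (\ref{eq:rewrite}), i.e. $v_a(r,x) = H^a_r - \lambda^a_r(x-x_1) = B^a_r + \lambda^a_r(x_2-x)$, with the sandwich bounds (\ref{property:sandwich}): since $x - x_1 \ge 0$ and $\lambda^a_r \ge \overline{\lambda^a_r}/(1+\varepsilon)$ we get $v_a(r,x) \le \ell_2(x)$, and since $x_2 - x \ge 0$ and $\lambda^a_r \le \overline{\lambda^a_r}$ we get $v_a(r,x) \le \ell_1(x)$; taking the minimum yields $v_a(r,x) \le \overline{v}_a(r,x)$.

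For (d) and (e) the key observations are that $\overline{v}_a(r,\cdot)$ is monotone decreasing on $[x_1,x_2]$ with endpoint values $\overline{v}_a(r,x_1) = H^a_r$ and $\overline{v}_a(r,x_2) = B^a_r$ (again read off from the definition, with $x^* = x_1$ handled separately), and that, by the convention that $z^a_r$ is always inserted as a breakpoint, the open interval $(x_1,x_2)$ contains no zero of $v_a(r,\cdot)$. Hence $v_a(r,\cdot)$ has constant sign on $[x_1,x_2]$, and by \cref{lemma:same-sign} so does $\overline{v}_a(r,\cdot)$; therefore $\overline{v}_a(r,x) \ge 0$ on the piece holds iff both endpoint values $H^a_r$ and $B^a_r$ are nonnegative, which is (d), and the mirrored argument gives (e).

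I do not expect a genuine obstacle: the statement is essentially bookkeeping on the construction. The only points that need care are the degenerate case $x^* = x_1$ and the boundary equalities $H^a_r = 0$ or $B^a_r = 0$ (which occur precisely when $x_1$ or $x_2$ equals $z^a_r$); the non-strict inequalities in the statement, together with \cref{lemma:same-sign}, are exactly what absorb these cases.
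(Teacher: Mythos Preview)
Your proposal is correct and follows the same route as the paper, which simply declares that the properties ``follow directly from the construction of $\overline{v}$'' and points to Figure~\ref{figure:rounding}; your $\min\{\ell_1,\ell_2\}$ identity is exactly the formalization of that picture, and the sandwich bounds~(\ref{property:sandwich}) together with the $z^a_r$-as-breakpoint convention are precisely what the paper is implicitly relying on. The only remark is that your closing sentence slightly overstates how cleanly the non-strict inequalities absorb the boundary case $x=x_1=z^a_r$ with $B^a_r<0$ under a strictly pointwise reading of (d); this is really an imprecision in the lemma statement rather than in your argument, and it is harmless for the single downstream use in Lemma~\ref{lemma:apx-guarantee}.
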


Note that at each breakpoint $b_i$ of the given utility function $v_a(r,\cdot)$ we have $v_a(r, b_i) = \overline{v}_a(r, b_i)$. This observation and the construction of $\{\overline{v}_a(r, \cdot) \}_{a,r}$ (in each piece) imply that these functions are continuous and monotone decreasing. Formally, 

\begin{lemma}
\label{lemma:construct}
The constructed utility functions $\{ \overline{v}_a(r, \cdot) \}_{a,r}$ are piecewise-linear,  continuous, monotone decreasing and in these functions every slope is an integer power of $(1+\varepsilon$).
\end{lemma}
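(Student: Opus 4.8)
The plan is to verify each of the four asserted properties directly from the piecewise definition of $\overline{v}_a(r,\cdot)$, handling one linear piece $\left\langle [x_1,x_2], \lambda^a_r, H^a_r, B^a_r\right\rangle$ of $v_a(r,\cdot)$ at a time and then gluing across pieces. I would organize the argument around the point $x^*\in[x_1,x_2)$ on which the construction relies, so the first step is to confirm this point exists (and, implicitly, that the subdivision at each $z^a_r$ only refines the piece structure). We may assume each piece has $x_1<x_2$. Setting $g(x):=B^a_r+\overline{\lambda^a_r}(x_2-x)+\frac{\overline{\lambda^a_r}}{1+\varepsilon}(x-x_1)$, which is affine in $x$, one computes $g(x_1)=B^a_r+\overline{\lambda^a_r}(x_2-x_1)$ and $g(x_2)=B^a_r+\frac{\overline{\lambda^a_r}}{1+\varepsilon}(x_2-x_1)$. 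Combining $H^a_r=B^a_r+\lambda^a_r(x_2-x_1)$ from \eqref{eq:rewrite} with the sandwich bound $\frac{\overline{\lambda^a_r}}{1+\varepsilon}<\lambda^a_r\le\overline{\lambda^a_r}$ from \eqref{property:sandwich} gives $g(x_2)<H^a_r\le g(x_1)$, so the intermediate value theorem yields $x^*\in[x_1,x_2)$ with $g(x^*)=H^a_r$. This is exactly the identity needed later for continuity at $x^*$.

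Next I would dispatch piecewise linearity and the slope condition, which are essentially immediate. The function $v_a(r,\cdot)$ has finitely many linear pieces (counting the extra subdivision at each $z^a_r$), and each is replaced by at most two affine segments, so $\overline{v}_a(r,\cdot)$ is piecewise linear with finitely many pieces. Every such segment has slope $-\overline{\lambda^a_r}$ or $-\overline{\lambda^a_r}/(1+\varepsilon)$, and since $\overline{\lambda^a_r}=(1+\varepsilon)^{k}$ with $k:=\ceil*{\log_{(1+\varepsilon)}\lambda^a_r}\in\mathbbm{Z}$, both $\overline{\lambda^a_r}$ and $\overline{\lambda^a_r}/(1+\varepsilon)=(1+\varepsilon)^{k-1}$ are integer powers of $(1+\varepsilon)$. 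Because $\overline{\lambda^a_r}>0$, both slopes are strictly negative, so $\overline{v}_a(r,\cdot)$ is strictly decreasing on each segment.

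The one place calling for a little care is continuity, which I would check at two kinds of junctions. Within a single original piece $[x_1,x_2]$ the two segments meet at $x^*$: the left segment evaluates there to $H^a_r-\frac{\overline{\lambda^a_r}}{1+\varepsilon}(x^*-x_1)$ and the right one to $B^a_r+\overline{\lambda^a_r}(x_2-x^*)$, and these agree exactly because $g(x^*)=H^a_r$. Across consecutive original pieces the construction pins the endpoint values: on each piece $\overline{v}_a(r,x_1)=H^a_r=v_a(r,x_1)$ and $\overline{v}_a(r,x_2)=B^a_r=v_a(r,x_2)$, so at every breakpoint $\overline{v}_a(r,\cdot)$ takes the same value as the continuous function $v_a(r,\cdot)$ and the one-sided limits match. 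Continuity together with strict decrease on each segment then upgrades to monotone decreasing on the whole domain, completing the argument.

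I do not expect a genuine obstacle: the construction was engineered so that these checks go through, and the rounding rule $\overline{\lambda^a_r}=(1+\varepsilon)^{\ceil*{\log_{(1+\varepsilon)}\lambda^a_r}}$ makes the powers-of-$(1+\varepsilon)$ claim automatic. If anything, the most error-prone bookkeeping is ensuring that no junction is missed---in particular that the subdivision at each $z^a_r$ (needed for \cref{lemma:same-sign}) does not disturb continuity---but since every newly introduced breakpoint still satisfies $\overline{v}_a(r,\cdot)=v_a(r,\cdot)$ there, the gluing argument above covers it uniformly.
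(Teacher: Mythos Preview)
Your proposal is correct and follows essentially the same approach as the paper. The paper's own justification is just the one-sentence observation preceding the lemma---that $v_a(r,b_i)=\overline{v}_a(r,b_i)$ at every breakpoint $b_i$, together with the within-piece construction---and you have simply fleshed out the details (existence of $x^*$, continuity at $x^*$ and at the original breakpoints, and the powers-of-$(1+\varepsilon)$ check) that the paper leaves implicit.
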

 
Using the above mentioned lemma we will now prove that the instance obtained by rounding the slopes approximately preserves envy freeness. 
\begin{lemma}[Approximation Guarantee]
\label{lemma:apx-guarantee}
Given an approximation parameter $\varepsilon \in (0,1)$ and a rent-division instance $\mathcal{I} = \left\langle \mathcal{A}, \mathcal{R}, \{ {v}_a(r, \cdot) \}_{a,r} \right\rangle$ with continuous, monotone decreasing, and piecewise-linear utility functions, we can efficiently construct an instance 
$\mathcal{\overline{I}} = \left\langle \mathcal{A}, \mathcal{R}, \{ \overline{v}_a(r, \cdot) \}_{a,r} \right\rangle$ such that (i) the utility functions $\{ \overline{v}_a(r, \cdot) \}_{a,r}$ are continuous, monotone decreasing, and piecewise linear, (ii) all the slopes in the utilities $\{ \overline{v}_a(r, \cdot) \}_{a,r}$ are integer powers of $(1+\varepsilon)$, and (iii) any envy-free solution of $\mathcal{\overline{I}}$ is an $\varepsilon$-$\EF$ solution of $\mathcal{I}$.
\end{lemma}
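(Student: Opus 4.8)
The plan is to establish parts (i) and (ii) directly from the construction lemmas already in hand, and then to prove part (iii) by a case analysis on the sign of the utility an agent derives under the given envy-free solution of $\overline{\mathcal{I}}$. For (i) and (ii), \cref{lemma:construct} already asserts that the $\overline{v}_a(r,\cdot)$ are piecewise linear, continuous, monotone decreasing, and have slopes that are integer powers of $(1+\varepsilon)$; and \cref{lemma:apx-guarantee}'s own preamble notes that the construction runs in polynomial time. So the real content is part (iii).

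For (iii), let $(\pi, p)$ be an envy-free solution of $\overline{\mathcal{I}}$, so $\overline{v}_a(\pi(a), p_{\pi(a)}) \geq \overline{v}_a(r, p_r)$ for every agent $a$ and room $r$. I want to show $(\pi,p)$ is $\varepsilon$-$\EF$ for $\mathcal{I}$, which by \cref{definition:apx-ef} splits into the nonnegative-utility case and the negative-utility case. First I would locate, for the fixed agent $a$ and each relevant room, the piece $\langle [x_1,x_2], \lambda, H, B\rangle$ of $v_a(r,\cdot)$ containing the price in question. In the \emph{nonnegative} case — when $v_a(\pi(a), p_{\pi(a)}) \geq 0$, equivalently (by \cref{lemma:same-sign}) $\overline{v}_a(\pi(a), p_{\pi(a)}) \geq 0$ — I chain the inequalities: starting from $v_a(r,p_r) \le \overline{v}_a(r,p_r)$ (\cref{lemma:properties}(a)), then $\overline{v}_a(r,p_r) \le \overline{v}_a(\pi(a), p_{\pi(a)})$ (envy-freeness in $\overline{\mathcal{I}}$), and finally bounding $\overline{v}_a(\pi(a), p_{\pi(a)})$ above by $(1+\varepsilon)\,v_a(\pi(a), p_{\pi(a)})$. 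This last step is where the careful construction pays off: on the piece of $v_a(\pi(a),\cdot)$ containing $p_{\pi(a)}$, \cref{lemma:properties}(c) gives $\overline{v}_a(\pi(a),p_{\pi(a)}) \le H - \frac{\overline\lambda}{1+\varepsilon}(p_{\pi(a)} - x_1)$, and since $\lambda \le \overline\lambda$ and $\overline\lambda/(1+\varepsilon) < \lambda$ (by \eqref{property:sandwich}) one can compare this to $(1+\varepsilon) v_a(\pi(a),p_{\pi(a)}) = (1+\varepsilon)(H - \lambda(p_{\pi(a)}-x_1))$; the inequality $H - \frac{\overline\lambda}{1+\varepsilon}(p_{\pi(a)}-x_1) \le (1+\varepsilon)\bigl(H - \lambda(p_{\pi(a)}-x_1)\bigr)$ reduces to $\varepsilon H + \bigl((1+\varepsilon)\lambda - \tfrac{\overline\lambda}{1+\varepsilon}\bigr)(p_{\pi(a)}-x_1) \ge 0$, which holds because $H \ge 0$ (this is where nonnegativity of the whole piece, \cref{lemma:properties}(d), is used — note $H \ge 0$ and $B \ge 0$ follow since $z^a_r$ was treated as a breakpoint) and $(1+\varepsilon)\lambda \ge \overline\lambda/(1+\varepsilon)$ as $\overline\lambda \le (1+\varepsilon)\lambda$.

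In the \emph{negative} case — $v_a(\pi(a),p_{\pi(a)}) < 0$, so by \cref{lemma:same-sign} also $\overline{v}_a(\pi(a),p_{\pi(a)}) < 0$, hence $\overline{v}_a(r,p_r) \le \overline v_a(\pi(a), p_{\pi(a)}) < 0$ and thus $v_a(r,p_r) < 0$ too — I want $v_a(\pi(a),p_{\pi(a)}) \ge (1+\varepsilon) v_a(r,p_r)$. Here I would use the \emph{other} one-sided bounds: \cref{lemma:properties}(b) gives $\overline v_a(r,p_r) \le B + \overline\lambda(x_2 - p_r)$ on the piece of $v_a(r,\cdot)$ containing $p_r$, where now $B \le 0$ by \cref{lemma:properties}(e). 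Combined with $v_a(\pi(a),p_{\pi(a)}) = \overline v_a(\pi(a),p_{\pi(a)})$ on... — actually more directly: $v_a(\pi(a),p_{\pi(a)}) \le \overline v_a(\pi(a),p_{\pi(a)})$ is the wrong direction, so instead I start from envy-freeness $\overline v_a(\pi(a),p_{\pi(a)}) \ge \overline v_a(r,p_r) \ge v_a(r,p_r)$ and separately lower-bound $v_a(\pi(a),p_{\pi(a)})$ relative to $\overline v_a(\pi(a),p_{\pi(a)})$; since both are negative and $\overline v \ge v$, we have $v_a(\pi(a),p_{\pi(a)}) \le \overline v_a(\pi(a),p_{\pi(a)}) < 0$, and I need the multiplicative gap to be controlled by $(1+\varepsilon)$. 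This is done on the piece of $v_a(\pi(a),\cdot)$ containing $p_{\pi(a)}$ using \cref{lemma:properties}(b) with $B \le 0$: writing $v_a(\pi(a),p_{\pi(a)}) = B + \lambda(x_2 - p_{\pi(a)})$ and $\overline v_a(\pi(a),p_{\pi(a)}) \le B + \overline\lambda(x_2-p_{\pi(a)}) \le B + (1+\varepsilon)\lambda(x_2-p_{\pi(a)})$, and since $B \le 0$, $(1+\varepsilon)(B + \lambda(x_2-p_{\pi(a)})) \le B + (1+\varepsilon)\lambda(x_2-p_{\pi(a)})$; combining $(1+\varepsilon)v_a(\pi(a),p_{\pi(a)}) \le \overline v_a(\pi(a),p_{\pi(a)})$... — I need to get the direction right: I want $v_a(\pi(a),p_{\pi(a)}) \ge (1+\varepsilon)v_a(r,p_r)$, and I have $v_a(\pi(a),p_{\pi(a)}) \ge \overline v_a(\pi(a),p_{\pi(a)})/(1+\varepsilon)$... no. The cleanest route is: $v_a(\pi(a),p_{\pi(a)}) \ge \overline v_a(\pi(a),p_{\pi(a)}) \cdot$ (something $\ge 1$, since everything is negative)? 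Since $v \le \overline v$ and both negative this gives $v/\overline v \ge 1$, not helpful. Instead bound $\overline v_a(\pi(a),p_{\pi(a)}) \ge (1+\varepsilon) v_a(\pi(a),p_{\pi(a)})$ — wait, for negatives $v \le \overline v < 0$ means $(1+\varepsilon) v \le (1+\varepsilon)\overline v \le \overline v$? No, $(1+\varepsilon)\overline v \le \overline v$ since $\overline v < 0$. So the genuine claim needed is $\overline v_a(\pi(a),p_{\pi(a)}) \ge (1+\varepsilon) v_a(\pi(a),p_{\pi(a)})$, i.e. $\overline v$ is within a $(1+\varepsilon)$ factor of $v$ even when negative; this is precisely what \cref{lemma:properties}(b) with $B \le 0$ and $\overline\lambda \le (1+\varepsilon)\lambda$ delivers. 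Then chaining: $v_a(\pi(a),p_{\pi(a)}) \ge \frac{1}{1+\varepsilon}\overline v_a(\pi(a),p_{\pi(a)}) \ge \frac{1}{1+\varepsilon}\overline v_a(r,p_r) \ge \frac{1}{1+\varepsilon} v_a(r,p_r) \ge v_a(r,p_r)$ — but I need $\ge (1+\varepsilon)v_a(r,p_r)$, which follows since $v_a(r,p_r) < 0$ implies $v_a(r,p_r) \ge (1+\varepsilon) v_a(r,p_r)$. Actually that last step alone almost suffices, so the negative case is the easy one.

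\textbf{Main obstacle.} The delicate point is the nonnegative case, specifically verifying that \cref{lemma:properties}(c) combined with the sandwich bound \eqref{property:sandwich} yields the \emph{multiplicative} $(1+\varepsilon)$ guarantee rather than merely an additive or looser bound — and in particular making sure the inequality is applied on the correct piece (the one containing $p_{\pi(a)}$, using that $z^a_r$ being a breakpoint forces $H^a_r \ge 0$ and $B^a_r \ge 0$ there, via \cref{lemma:same-sign} and \cref{lemma:properties}(d)). One must also handle the edge case where $p_r$ or $p_{\pi(a)}$ lands exactly on a breakpoint, where $v$ and $\overline v$ agree (as noted before \cref{lemma:construct}), making those inequalities tight and harmless. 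I expect the algebra to be short once the right piece and the right one-sided bound from \cref{lemma:properties} are selected in each case.
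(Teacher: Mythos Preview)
Your overall architecture---establish (i) and (ii) via \cref{lemma:construct}, then prove (iii) by a sign-based case split on $v_a(\pi(a),p_{\pi(a)})$, bounding $\overline v_a(\pi(a),p_{\pi(a)})$ against a multiple of $v_a(\pi(a),p_{\pi(a)})$ on the relevant piece---matches the paper. But you have \emph{swapped} the one-sided bounds from \cref{lemma:properties} between the two cases, and with that swap the algebra breaks.

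\textbf{Nonnegative case.} You invoke part (c) (the $H$-based bound). Your reduction contains a sign slip: writing $t=p_{\pi(a)}-x_1\ge 0$, the inequality $H-\tfrac{\overline\lambda}{1+\varepsilon}t \le (1+\varepsilon)(H-\lambda t)$ rearranges to
\[
\varepsilon H \;\ge\; \Bigl((1+\varepsilon)\lambda-\tfrac{\overline\lambda}{1+\varepsilon}\Bigr)\,t,
\]
not to the ``$+$'' version you wrote. Since $(1+\varepsilon)\lambda \ge \overline\lambda/(1+\varepsilon)$, the right-hand side is \emph{nonnegative}, so $H\ge 0$ alone does not suffice; a concrete counterexample is $H=1$, $\lambda=\overline\lambda=1$, $\varepsilon=0.1$, $t=0.9$. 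The paper instead uses part (b): $\overline v \le B+\overline\lambda(x_2-p_{\pi(a)}) \le B+(1+\varepsilon)\lambda(x_2-p_{\pi(a)}) \le (1+\varepsilon)\bigl(B+\lambda(x_2-p_{\pi(a)})\bigr)=(1+\varepsilon)v$, the last step using $B\ge 0$ from \cref{lemma:properties}(d). This is the place where the $z^a_r$-as-breakpoint convention is cashed in---but via $B\ge 0$, not $H\ge 0$.

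\textbf{Negative case.} Your chain begins with $v_a(\pi(a),p_{\pi(a)}) \ge \tfrac{1}{1+\varepsilon}\,\overline v_a(\pi(a),p_{\pi(a)})$, i.e.\ $(1+\varepsilon)v\ge\overline v$. But you only know $v\le\overline v<0$, which gives the \emph{opposite} direction $(1+\varepsilon)v\le v\le\overline v$; e.g.\ $v=-2$, $\overline v=-1$, $\varepsilon=0.5$ falsifies your step. Your fallback (``the last step alone almost suffices'') would need $v_a(\pi(a),p_{\pi(a)})\ge v_a(r,p_r)$, which does not follow from envy-freeness in $\overline{\mathcal I}$ either. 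The paper's route here is to apply part (c) and $H\le 0$ (from \cref{lemma:properties}(e)) to the piece containing $p_{\pi(a)}$:
\[
v_a(r,p_r)\le \overline v_a(\pi(a),p_{\pi(a)}) \le H-\tfrac{\overline\lambda}{1+\varepsilon}t \le H-\tfrac{\lambda}{1+\varepsilon}t \le \tfrac{1}{1+\varepsilon}\bigl(H-\lambda t\bigr)=\tfrac{1}{1+\varepsilon}v_a(\pi(a),p_{\pi(a)}),
\]
the penultimate step using $H\le H/(1+\varepsilon)$ since $H\le 0$.

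In short: interchange your use of (b) and (c). Use (b) with $B\ge 0$ in the nonnegative case and (c) with $H\le 0$ in the negative case; then both chains go through cleanly, exactly as in the paper.
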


\begin{proof}
To obtain $\overline{\mathcal{I}}$ we construct the utility functions $\{ \overline{v}_a(r, \cdot) \}_{a, r}$ as mentioned above. Properties \emph{(i)} and \emph{(ii)} in the theorem statement follow directly from Lemma~\ref{lemma:construct}. It remains to prove property \emph{(iii)}. 

Let $(\pi,p)$ be an $\EF$ solution for $\mathcal{\overline{I}}$, i.e., for all agents $a$ and all rooms $r$ we have $\overline{v}_a(\pi(a),p_{\pi(a)}) \ge \overline{v}_a(r,p_r)$

First, consider all agents $a$ whose utilities are positive under the solution, i.e, consider $a$ such that $\overline{v}_a(\pi(a),p_{\pi(a)}) \geq 0 $ and let the price $p_{\pi(a)}$ be in the piece $ \left\langle [x_1, x_2], \lambda^a_{\pi(a)}, H^a_{\pi(a)}, B^a_{\pi(a)} \right\rangle$ of $v_a(\pi(a), \cdot)$. By \cref{lemma:same-sign}, we know that $v_a(\pi(a),p_{\pi(a)}) \ge 0$. We have the following chain of inequalities
\begin{align*}
v_a(r,p_r) & \leq \overline{v}_a(r,p_r) &	\text{(\cref{lemma:properties} (a))} \\
& \leq \overline{v}_a\left(\pi(a),p_{\pi(a)}\right) & \text{($(\pi, p)$ is $\EF$ for $\overline{\mathcal{I}}$)} \\
& \leq B^a_{\pi(a)} + \overline{\lambda}^a_{\pi(a)} \left(x_2 - p_{\pi(a)}\right) & \text{(\cref{lemma:properties} (b))} \\ 
& \leq B^a_{\pi(a)} + (1+\varepsilon) \ \lambda^a_{\pi(a)} \left(x_2 - p_{\pi(a)}\right) & \text{(using inequality (\ref{property:sandwich}))} \\
& \leq (1+\varepsilon)\ \left(B^a_{\pi(a)} + \lambda^a_{\pi(a)} (x_2 - p_{\pi(a)}) \right) & \text{(\cref{lemma:properties} (d))} \\
& = (1+\varepsilon) \ v_a\left(\pi(a),p_{\pi(a)}\right)  & \text{(using equality (\ref{eq:rewrite}))}
\end{align*}

Therefore, if $\overline{v}_a(\pi(a),p_{\pi(a)}) \geq 0 $, then $(\pi, p)$ is an $\varepsilon$-$\EF$ solution for the agent $a$. 
	
The complimentary case addresses agents $a$ for whom the utility is negative under the solution,i.e, $\overline{v}_a(\pi(a),p_{\pi(a)}) < 0 $. By \cref{lemma:same-sign}, we know that $v_a(\pi(a),p_{\pi(a)}) < 0 $.  Let $p_{\pi(a)}$ be in the piece $ \left \langle [x_1,x_2], \lambda^a_{\pi(a)}, H^a_\pi(a), B^a_\pi(a) \right \rangle$ of $v_a(\pi(a), \cdot)$ 

\begin{align*}
v_a(r,p_r) & \leq \overline{v}_a(r,p_r) &	\text{(\cref{lemma:properties} (a))} \\
&  \leq \overline{v}_a\left(\pi(a),p_{\pi(a)}\right) & \text{($(\pi, p)$ is $\EF$ for $\overline{\mathcal{I}}$)} \\
& \leq H^a_{\pi(a)} - \frac{\overline{\lambda}^a_{\pi(a)}}{1+\varepsilon} \left(x - x_1\right)  & \text{(\cref{lemma:properties} (c))} \\ 
& \leq H^a_{\pi(a)} - \frac{\lambda^a_{\pi(a)}}{1+\varepsilon} \left(x - x_1 \right)  & \text{(using inequality (\ref{property:sandwich}))} \\ 
& \leq \frac{1}{1+\varepsilon} \left( H^a_{\pi(a)} - {\lambda^a_{\pi(a)}} \left(x - x_1 \right) \right) & \text{(\cref{lemma:properties} (e))} \\
& = \frac{1}{1+\varepsilon}  \ v_a\left(\pi(a),p_{\pi(a)}\right) & \text{(using equality (\ref{eq:rewrite}))}
\end{align*}

Hence, for any agent $a$ with $v_a\left(\pi(a),p_{\pi(a)}\right)  <0$, the inequality $v_a\left(\pi(a),p_{\pi(a)}\right)  \geq (1+\varepsilon) v_a \left(r, p_r\right)$ holds for all rooms $r$. Overall, we get that $(\pi,p)$ is an $\varepsilon$-$\EF$ solution for $\mathcal{I}$.
	
\end{proof}

\cref{lemma:apx-guarantee} and \cref{theorem:ef-rounded} establish that the fair rent-division problem admits an FPTAS. 
 
\TheoremMainResult*

\section{Fair Rent Division with Fixed Total Rent}
\label{section:fixed}
This section considers the problem in which we are given a rent-division instance along with a total rent (fixed cost) $C$, and the objective is to find an $\EF$ solution in which the sum of prices is equal to $C$. We obtain an FPTAS for this problem by adapting $\ALG$ (the algorithm developed in \cref{section:rounded}). 

In contrast to the results obtained in Sections \ref{section:rounded} and~\ref{section:FPTAS}, the solutions obtained in this section might impose negative rents, i.e., require \emph{transfers} among agents. In fact, for a general value of $C$, it might be the case that envy freeness can be achieved only if one imposes negative rents on some rooms.\footnote{This case will arise if the total rent $C$ is strictly less than the sum of rents at the optimal price vector $p^*$; see ~\cref{section:def-opt-prices} for the definition of $p^*$.}  The algorithm developed in this section necessarily finds an approximately envy-free solution, but, efficiently determining whether a given instance admits a fair solution in which the prices are nonnegative and sum up to $C$ remains an interesting open question. 

Under this total-rent constraint, nonnegativity of utilities under fair solutions cannot be guaranteed either, e.g., consider the case in which the utility of each of the $n$ agents for every room $r$ is negative, if the price of the room $p_r \geq C/n$. However, under a natural scaling assumption, instances with quasilinear utilities are known to admit an $\EF$ solution in which the utilities are nonnegative~\cite{brams2001competitive}. Specifically, the assumption entails that for every agent $a$ the sum of base values of all the the rooms is at least $C$, i.e., $\sum_r v_a(r, 0) \geq C$. For quasilinear utilities, this translates to requiring that for every agent $a$ we have $\sum_{r} z^a_r \geq C$, where $z^a_r$ is the price at which agent $a$'s utility for room $r$ is equal zero. We will prove that if an input instance (now with piecewise-linear utilities) satisfies this assumption, then the solution computed by our algorithm endows nonnegative utilities to the agents. 

Given a rent-division instance $\mathcal{I}$ (wherein the utility functions are continuous, monotone decreasing, and piecewise linear) and a total rent $C$, first, we use \cref{lemma:apx-guarantee} to construct another instance $\overline{\mathcal{I}}$. By construction, in $\overline{\mathcal{I}}$ the slopes of all the utility functions are integer powers of $(1 + \varepsilon)$. 

We detail a modification of $\ALG$ below, which finds an $\EF$ solution, $(\pi, p)$, wherein the sum of prices is equal to $C$, i.e., $\sum_r p_r = C$. Since the utility function satisfy the powers-of-$(1+\varepsilon)$ property, the modified algorithm finds such an $\EF$ solution for $\overline{\mathcal{I}}$ in time polynomial in $1/\varepsilon$ and the input size. This runtime analysis is analogous to one given in the proof of \cref{theorem:ef-rounded}. 

Furthermore, the approximation guarantee of \cref{lemma:apx-guarantee} implies that $(\pi, p)$ is an $\varepsilon$-$\EF$ solution of the given instance $\mathcal{I}$. Hence, we obtain an FPTAS for computing a fair division of the total rent $C$. \\

\noindent
{\bf Modifying $\ALG$:} To obtain an $\EF$ solution of $\overline{\mathcal{I}}$ under the fixed-cost constraint, we will modify $\ALG$ as follows 
\begin{enumerate}
\item The threshold $M$ (used in equation (\ref{equation:bar-to-hat})) is set such that $\overline{v}_a(r, C) > \overline{v}_a(r', M)$ for all agents $a$ and rooms $r, r'$. 
\item In $\ALG$, the equality $\sum_r p_r^i = C$ is used as the termination condition of the while-loop, instead of $p_r = 0$, for some rooms $r$. 
\item In the linear programs LP$(L^{i-1}, U^{i-1}, \pi^i)$, replace the nonnegativity constraint (i.e., $x \geq \mathbb{0}$) with $\sum_r x_r \ge C$. 
\end{enumerate}

In the proof of the following theorem we will show that this version of $\ALG$ algorithm finds the desired envy-free solution.\footnote{As mentioned previosuly, in \cref{theorem:fixed-cost} the computed solution might impose negative prices (i.e., entail transfers) and lead to negative utilities for the agents.}  

\TheoremFixedCost*


\begin{proof}
Using \cref{lemma:apx-guarantee}, we construct a rent division instance $\overline{\mathcal{I}}$ such that all the slopes in this instance are integral powers of $(1+\varepsilon)$. The modified version of \cref{algorithm:envy-free} described above is executed with instance $\overline{\mathcal{I}}$ as the input. Let $\widehat{\mathcal{I}}$ denote the instance obtained in Step~\ref{step:hat-instance} of modified $\ALG$ and let $M$ be the threshold adapted for $C$. 

Note that \cref{lemma:ef-invariant}, \cref{lemma:wts-drop}, and \cref{lemma:price-drop-sale-sale} do not depend upon the nonnegativity constraint, they rely only on the envy-freeness and linear-domain constraints. Therefore, these lemmas are also application in the context of the modified algorithm. Also $\ell$, which denotes the total number of pieces across all the utility functions, continues to upper bound the number of linear domains traversed by the algorithm. Overall, the proof of correctness and runtime analysis of $\ALG$  directly hold for the modified algorithm as well. Hence, the modified algorithm finds an $\EF$ solution, $(\pi,p)$, of the instance $\widehat{\mathcal{I}}$ in time that is polynomial in $1/\varepsilon$ and the input size.


Furthermore, the changes ensure that $\sum_r p_r = C$. Note that there is at least one room $\rho$ with $p_\rho \leq C$, which in turn implies that $p_{r'} \leq M$ for all rooms $r'$: assume, for contradiction, that $p_r' > M$ and let room $r'$ be assigned to, say, agent $\alpha$, i.e. $\pi(\alpha) =r'$. With $p_\rho \le C$ we have

\begin{align*}
\widehat{v}_{\alpha}(\pi(\alpha), p_{\pi(\alpha)}) & < \widehat{v}_{\alpha}(\pi(\alpha), M)   \qquad &\text{(utility $\widehat{v}_{\alpha}(\pi(\alpha), \cdot)$ is monotone decreasing)} \\
& = \overline{v}_\alpha(\pi(\alpha), M) \qquad &\text{(follows from (\ref{equation:bar-to-hat}))} \\
& < \overline{v}_\alpha(\rho, C) \qquad &\text{(by the definition of $M$)} \\
& = \widehat{v}_\alpha(\rho, C) \qquad &\text{(follows from (\ref{equation:bar-to-hat}) and $M \ge C$)} \\
& \le \widehat{v}_\alpha(\rho, p_\rho) \qquad &\text{(utility $\widehat{v}_{\alpha}(\rho, \cdot)$ is monotone decreasing)} 
\end{align*}
This contradicts the fact that $(\pi,p)$ is an $\EF$ solution of $\widehat{\mathcal{I}}$. 

When all the prices are less than $M$,  the equality $\widehat{v}_a(r, p_r) = \overline{v}_a(r, p_r)$ holds for all $a$ and $r$. Thus, $(\pi,p)$ is an envy-free solution for $\overline{\mathcal{I}}$ and, invoking \cref{lemma:apx-guarantee}, we get that $(\pi,p)$ is an $\varepsilon$-$\EF$ solution of the original instance $\mathcal{I}$.
\end{proof}
 
Recall that $z^a_r$ denotes the price at which agent $a$'s utility for room $r$ is equal zero, i.e., $z^a_r$ satisfies $v_a(r, z^a_r) = 0$. 
We will complete the section by proving that nonnegative utilities can be ensured if $\sum_r z^a_r \geq C$ for all agents $a$.

\TheoremIr*

\begin{proof}
Executing the modified version of \cref{algorithm:envy-free} (described above) we obtain an allocation $\pi$ and a price vector $p \in \mathbb{R}^n$ such that $(\pi,p)$ is an $\varepsilon$-$\EF$ solution of $\mathcal{I}$ and $\sum_r p_r = C$. Here, the correctness and time complexity directly follow from  \cref{theorem:fixed-cost}.

Since $\sum_r z^a_r \geq C = \sum_r p_r$, for all agents $a$ there exists a room $r(a)$ such that $p_{r(a)} \leq z^a_{r(a)}$. If $v_a(\pi(a),p_{\pi(a)})  < 0$, then it must be the case that $v_a(r, p_r) < 0$ for all rooms $r$; recall that $(\pi, p)$ is an $\varepsilon$-$\EF$ solution of $\mathcal{I}$ (see \cref{definition:apx-ef}). However, this leads to a contradiction: $v_a (r(a), p_{r(a)}) \geq v_a( r(a), z^a_{r(a)}) =0$.  Therefore, we have $v_a(\pi(a),p_{\pi(a)})  \geq 0$ for all agents $a$, and this establishes the claim that the utilities of all the agents are nonnegative.  



\end{proof}

\section{DSIC Mechanism for Structured Instances}
\label{section:optimal-price}

This section presents an algorithm for computing optimal prices (as defined in \cref{section:def-opt-prices}).  In particular, given any rent-division instance, ${\mathcal{I}}$, the algorithm developed in this section (\cref{algorithm:opt-price}) finds an envy-free solution $(\pi^*, p^*)$ of ${\mathcal{I}}$ such that $p^* \in \mathbb{R}^n_+$ and for all other $\EF$ solutions $(\pi, p)$ with nonnegative prices, the following inequality holds: $p^* \leq p$. 

We will, in particular, bound the runtime of \cref{algorithm:opt-price} for input instances wherein the slopes of all the piecewise-linear utilities are integer powers of $(1+\varepsilon)$, with parameter $\varepsilon >0$. The developed algorithm finds the optimal price for any given instance, though we will show that if the utilities satisfy the powers-of-$(1+\varepsilon)$ property, then an optimal price can be found in time polynomial in $1/\varepsilon$ and the bit complexity of the input. 

We will prove that an $\EF$ solution is not optimal iff there exists a subset of rooms whose rents can be reduced while maintaining envy freeness and nonnegativity of prices. \cref{algorithm:opt-price} relies on this novel characterization of optimality and, in particular, efficiently identifies a subset of rooms (if one exists) whose prices can be reduced further. The characterization ensures that the algorithm must have found the optimal price, once no such subset exits. This method, in fact, provides an alternate, constructive proof of existence of optimal prices. 

As mentioned previously, for the rent-division problem with strategic agents, any algorithm that computes the optimal solution $(\pi^*, p^*)$ is guaranteed to be dominant strategy incentive compatible (DSIC)~\cite{sun2003general}. Hence, for settings in which the utility functions satisfy the powers-of-$(1+\varepsilon)$ property,  \cref{algorithm:opt-price} provides a DSIC mechanism with runtime polynomial in $1/\varepsilon$ and the input size.\footnote{Extending this result to obtain an efficient, DSIC mechanism for general fair rent-division instances is an interesting direction for future work.}

 
We start by proving  Lemma \ref{comp}, which provides useful comparative results between $\EF$ solutions. Next, in Lemma \ref{lemma:char} we detail the above mentioned characterization of optimal prices. 

The following notion of a \emph{tight set} corresponds to the edges in the first-choice graph which are not part of the current allocation. That is, for an envy-free solution $(\pi, p)$, the tight set contains agent-room pairs, $(a,r)$, where agent $a$ is not allocated room $r$, but it derives maximum possible utility from the room $r$ as well. 

\begin{definition}[Tight set]
Given an envy-free solution $(\pi, p)$ of a rent-division instance $\I = \left\langle \A, \R, \{ v_a(r , \cdot)\}_{a, r} \right\rangle$, define the tight set $\T_{(\pi, p)} \coloneqq  \{ (a,r) \in \A \times \R \mid   \pi(a) \neq r \text { and } v_a(\pi(a), p_{\pi(a)}) = v_a(r, p_r) \}$.
\end{definition}

  \begin{figure}[h]
 	\begin{center}
 		\includegraphics[scale=0.7]{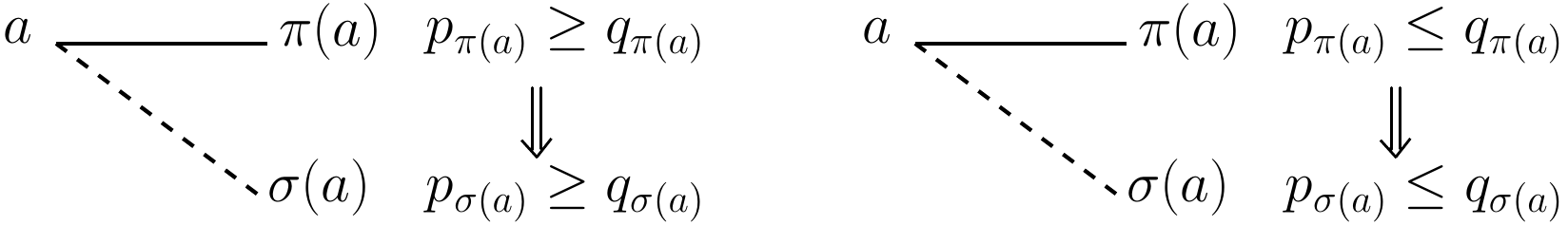}
 	\end{center}
 	\caption{Price comparisons between two envy-free solutions}
 	\label{fig:c12}
 \end{figure}

 \begin{figure}[h]
 	\begin{center}
 		\includegraphics[scale=0.7]{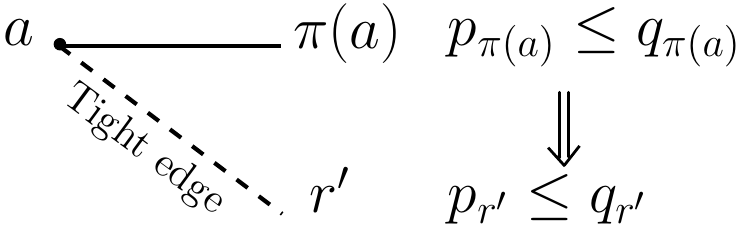}
 	\end{center}
 	\caption{Price comparison for tight edges}
 	\label{fig:c3}
 \end{figure}
 
\begin{lemma} \label{comp}
For any two envy-free solutions, $(\pi, p)$ and $(\sigma, q)$, of a rent-division instance $\mathcal{I}$, the following comparative results hold: 
	\begin{enumerate}[(a)]
		\item If for an agent $a \in \mathcal{A}$ we have $p_{\pi(a)} \geq q_{\pi(a)}$, then $p_{\sigma(a)} \geq q_{\sigma(a)}$.
		\item If for an agent $a \in \mathcal{A}$ we have $p_{\pi(a)} \leq q_{\pi(a)}$, then $p_{\sigma(a)} \leq q_{\sigma(a)}$.
		\item If for an agent $a \in \mathcal{A}$ we have $p_{\pi(a)} \leq q_{\pi(a)}$ and $(a, r') \in \mathcal{T}_{(\pi, p)}$, then $p_{r'} \leq q_{r'}$.
		\end{enumerate}		
\end{lemma}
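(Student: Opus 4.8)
The three statements are about comparing prices under two envy-free solutions $(\pi,p)$ and $(\sigma,q)$. The unifying tool I would use is the monotonicity of each agent's utility together with the envy-free inequalities; the key observation is that an agent's utility for a room is a strictly decreasing continuous function of that room's price, so comparing utilities is equivalent to (reverse) comparing prices. I would prove parts (a) and (b) together by a ``cyclic chasing'' argument over the permutation $\sigma^{-1}\circ\pi$.

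\textbf{Parts (a) and (b).} Assume $p_{\pi(a)} \ge q_{\pi(a)}$ for a fixed agent $a$. Since $v_a(\cdot,\cdot)$ is monotone decreasing in price, $v_a(\pi(a),p_{\pi(a)}) \le v_a(\pi(a),q_{\pi(a)})$. Now apply envy-freeness of $(\pi,p)$ at agent $a$ toward room $\sigma(a)$: $v_a(\pi(a),p_{\pi(a)}) \ge v_a(\sigma(a),p_{\sigma(a)})$; and envy-freeness of $(\sigma,q)$ at agent $a$ toward its own room: here $v_a(\sigma(a),q_{\sigma(a)})$ is $a$'s maximum utility under $q$, so $v_a(\sigma(a),q_{\sigma(a)}) \ge v_a(\pi(a),q_{\pi(a)}) \ge v_a(\pi(a),p_{\pi(a)}) \ge v_a(\sigma(a),p_{\sigma(a)})$. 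Thus $v_a(\sigma(a),q_{\sigma(a)}) \ge v_a(\sigma(a),p_{\sigma(a)})$, and monotonicity (strict decrease) gives $q_{\sigma(a)} \le p_{\sigma(a)}$, which is exactly the conclusion of (a). Part (b) is the symmetric statement and follows by swapping the roles of $p$ and $q$ (equivalently, of $(\pi,p)$ and $(\sigma,q)$) in the same chain of inequalities. One subtlety: if $v_a(r,\cdot)$ is only weakly decreasing on a flat piece, then ``$v_a(\sigma(a),q_{\sigma(a)}) \ge v_a(\sigma(a),p_{\sigma(a)})$'' gives $q_{\sigma(a)} \le p_{\sigma(a)}$ only up to the ambiguity of the flat piece; since the utilities here are monotone \emph{decreasing} (and the breakpoints have strictly negative slopes on each piece), the function is in fact strictly decreasing, so this is not an issue — but I would state explicitly that I am using strict monotonicity, inherited from $\lambda^a_r>0$.

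\textbf{Part (c).} Assume $p_{\pi(a)} \le q_{\pi(a)}$ and $(a,r')\in\T_{(\pi,p)}$, so $v_a(r',p_{r'}) = v_a(\pi(a),p_{\pi(a)})$ (this is the maximum utility of $a$ under $p$). From $p_{\pi(a)} \le q_{\pi(a)}$ and monotonicity, $v_a(\pi(a),p_{\pi(a)}) \ge v_a(\pi(a),q_{\pi(a)})$. By envy-freeness of $(\sigma,q)$, the value $v_a(\pi(a),q_{\pi(a)})$ is at most $a$'s best room under $q$; but more directly, I want to compare $v_a(r',p_{r'})$ and $v_a(r',q_{r'})$. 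Chain: $v_a(r',q_{r'}) \le \max_{r} v_a(r,q_r) = v_a(\sigma(a),q_{\sigma(a)})$. By part (b) applied to agent $a$ (whose hypothesis $p_{\pi(a)}\le q_{\pi(a)}$ holds), $p_{\sigma(a)} \le q_{\sigma(a)}$, hence $v_a(\sigma(a),q_{\sigma(a)}) \le v_a(\sigma(a),p_{\sigma(a)}) \le v_a(\pi(a),p_{\pi(a)}) = v_a(r',p_{r'})$, where the last inequality uses envy-freeness of $(\pi,p)$ and the equality is the tight-set hypothesis. Combining, $v_a(r',q_{r'}) \le v_a(r',p_{r'})$, and strict monotonicity of $v_a(r',\cdot)$ yields $q_{r'} \ge p_{r'}$, as claimed. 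So part (c) reduces cleanly to part (b) plus the tight-set equation.

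\textbf{Anticipated main obstacle.} The arguments above are short; the only real care needed is the direction bookkeeping — which solution's envy-free inequality is invoked, and which way monotonicity flips the inequality — and making sure strict monotonicity is legitimately available so that a utility inequality can be ``inverted'' to a price inequality. I expect the figures (\cref{fig:c12}, \cref{fig:c3}) are meant to keep this bookkeeping transparent, and I would mirror that: draw the two utility curves $v_a(\pi(a),\cdot)$ and $v_a(\sigma(a),\cdot)$, mark the four relevant points, and read off the inequalities. I do not anticipate needing the perturbation lemma or any global structure here — parts (a)--(c) are purely local (per-agent) consequences of envy-freeness and monotonicity.
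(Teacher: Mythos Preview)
Your arguments for parts (a) and (c) are correct and essentially match the paper's: the paper phrases (a) and (c) as proofs by contradiction, you write the same inequality chain directly, and in both treatments (c) reduces to (b) plus the tight-edge equality. Your remark about needing strict monotonicity (i.e., $\lambda^a_r>0$) to invert utility comparisons into price comparisons is exactly the point.

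There is, however, a genuine gap in your treatment of part (b). Swapping the roles of $(\pi,p)$ and $(\sigma,q)$ in statement (a) yields: if $q_{\sigma(a)}\ge p_{\sigma(a)}$ then $q_{\pi(a)}\ge p_{\pi(a)}$, i.e.\ $p_{\sigma(a)}\le q_{\sigma(a)}\Rightarrow p_{\pi(a)}\le q_{\pi(a)}$. This is the \emph{converse} of (b), not (b) itself. The asymmetry is real: in both (a) and (b) the hypothesis concerns the room $\pi(a)$ and the conclusion the room $\sigma(a)$, so reversing the inequality sign is not the same operation as relabeling the two solutions. Indeed, if you try to rerun your chain for (a) under the hypothesis $p_{\pi(a)}\le q_{\pi(a)}$, the three available inequalities (monotonicity at $\pi(a)$, envy-freeness of $(\pi,p)$ toward $\sigma(a)$, envy-freeness of $(\sigma,q)$ toward $\pi(a)$) no longer compose into a comparison of $v_a(\sigma(a),p_{\sigma(a)})$ with $v_a(\sigma(a),q_{\sigma(a)})$.

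The ``cyclic chasing'' you mention in your opening sentence is precisely what is needed, and is what the paper does. Follow the orbit of $a$ under alternation of $\pi$ and $\sigma$: writing $r_0=\pi(a)$ and $a_1=\sigma^{-1}(r_0)$, the hypothesis $p_{\pi(a)}\le q_{\pi(a)}$ reads $p_{\sigma(a_1)}\le q_{\sigma(a_1)}$; now apply part (a) \emph{with the roles of the two solutions swapped} to agent $a_1$ to obtain $p_{\pi(a_1)}\le q_{\pi(a_1)}$, and iterate around the cycle until you return to $\sigma(a)$. So your instinct was right, but the one-line ``symmetric'' shortcut does not go through; the orbit argument is genuinely required for (b).
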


\begin{proof}

We start by proving part $(a)$ of the Lemma. Assume, for contradiction, $p_{\sigma(a)} < q_{\sigma(a)}$, then following chain of inequalities contradicts the envy-freeness of $(\sigma, q)$.
\begin{align*}
v_a( \sigma(a), q_{\sigma(a)}) & < v_a (\sigma(a), p_{\sigma(a)}) & \text{(by assumption and monotonicity of $v_a(\sigma(a), \cdot)$)} \\
& \leq v_a (\pi(a), p_{\pi(a)}) & \text{(by envy-freeness of $(\pi, p)$)} \\
& \leq v_a (\pi(a), q_{\pi(a)}) & \text{(we are given that $p_{\pi(a)} \geq q_{\pi(a)}$)}
\end{align*}

For part $(b)$, let agent $a$ be contained in the orbit (cycle) $O=[a_0 = a, r_0, a_1, r_1, \ldots, a_{k-1}, r_{k-1}]$ obtained by composing the matchings $\sigma$ and $\pi$. Here, the rooms and agents are indexed (modulo $k$) such that, for all $i \in \{0,1, \ldots, k-1\}$, we have $r_i = \pi(a_i)$ and $r_i = \sigma(a_{(i+1)})$.

Note that the given condition ($p_{\pi(a)} \leq q_{\pi(a)}$) is equivalent to $q_{r_0} \geq p_{r_0}$, i.e., for agent $a_1$ we have $q_{\sigma(a_1)} \geq p_{\sigma(a_1)}$. Applying part $(a)$ with agent $a_1$ and matching $\sigma$ in hand we get  $q_{\pi(a_1)} \geq p_{\pi(a_1)}$, i.e., $q_{r_1} \geq p_{r_1}$. Since, $\sigma(a_2) = r_1$ we can inductively extend this argument to obtain $q_{r_{k-1}} \geq p_{r_{k-1}}$. That is, $q_{\sigma(a)} \geq p_{\sigma(a)}$, which establishes the claim (see Figure \ref{fig:c12} for part $(a)$ and $(b)$, respectively).

We complete the proof by establishing part $(c)$ of the lemma. Note that room $r'$ satisfies $v_a(\pi(a), p_{\pi(a)}) = v_a(r', p_{r'})$. Assume, for contradiction, that $p_{r'} > q_{r'}$. The inequalities below contradict the envy-freeness of $(\sigma, q)$ and, hence, complete the proof (see Figure \ref{fig:c3}).

\begin{align*}
v_a(\sigma(a), q_{\sigma(a)}) & \leq v_a(\sigma(a), p_{\sigma(a)}) & \text{(using part $(b)$)} \\
& \leq v_a(\pi(a), p_{\pi(a)}) & \text{(by envy-freeness of $(\pi, p)$)} \\
& = v_a(r', p_{r'}) & \text{(since $(a, r') \in \mathcal{T}_{(\pi, p)}$)} \\
& < v_a(r', q_{r'}) & \text{(by assumption)}  
\end{align*}	
\end{proof}

The following constructs will be used in the computation of optimal prices.

 \begin{definition}
Let $(\pi, p)$ be an envy-free solution of a given rent-division instance $\I$. Define directed graph $\W_{(\pi, p)} := (\A \cup \R, \overleftarrow{\pi} \cup \overrightarrow{\T_{(\pi, p)}})$, i.e., the directed edge set of $\W_{(\pi, p)}$ is equal to $\{(r, a) \in \mathcal{R} \times \mathcal{A} \mid  \pi(a) =r \} \cup \{ (a',r') \in \mathcal{A} \times \mathcal{R} \mid (a',r') \in \T_{(\pi, p)} \}$.
\end{definition}

\begin{definition}
\label{definition:ep}
Let $\mathcal{I}$ be a rent division instance with optimal price $p^*$. For any envy-free solution of $\mathcal{I}$, say $(\pi, p)$, with nonnegative prices, define the following sets:
\begin{itemize}
\item $E_p := \{r \in \mathcal{R} \mid p_r \leq p^*_r\}$. 
\item $E_p^c := \R \setminus E_p$
\item $Z_p := \{r \mid p_r = 0\}$. Note that $Z_p \subseteq E_p$.
\end{itemize}
\end{definition}

{
	\begin{algorithm}[h]
		{
			{\bf Input:} A rent-division instance ${\mathcal{I}} = \left\langle \mathcal{A}, \mathcal{R}, \{ {v}_a(r, \cdot) \}_{a,r} \right\rangle$  with continuous, monotone decreasing, piecewise-linear utility functions \\
			{\bf Output:} The optimal envy-free price of ${\mathcal{I}}$
			\caption{Algorithm for computing optimal envy-free prices}
			\label{algorithm:opt-price}
			\begin{algorithmic}[1]
				\STATE Set $(\pi, p)$ to be the output of $\ALG$ on $\mathcal{I}$ \\
				\COMMENT{Note that in $(\pi,p)$ there exists at least one room $r$ such that $p_r=0$}
				\STATE Compute $E_p$ and $E_p^c$ using the characterization given in \cref{lemma:char}
				\WHILE{$E_p^c \neq \emptyset $}
				\STATE Set $\mathcal{I'} = \left\langle \mathcal{A'}, \mathcal{R'}, \{ {v}_a(r, \cdot) \}_{a,r} \right\rangle$, with $\mathcal{A'} = \pi^{-1}(E_p^c), \mathcal{R'} = E_p^c$ 
				\STATE Run $\ALG$ on the sub-instance $\mathcal{I'} $ with initial solution set as the restriction of $(\pi, p)$ to $\mathcal{I'}$. Also, modify $\ALG$ to include the following linear constraints in each LP instantiation  \\ $v_a(\pi(a), p_{\pi(a)}) \geq v_a(r, p_r)$  for all $ a \in \pi^{-1}(E_p)$ and $ r \in E_p^c$. 
				\STATE Ensure that $\ALG$ terminates if $ v_a(\pi(a), p_{\pi(a)}) = v_a(r, p_r)$ for any agent $a \in \pi^{-1}(E_p)$ and room $r \in E_p^c$ \COMMENT{We stop $\ALG$ if a new incoming tight edge is formed from an agent in $\pi^{-1}(E_p)$ to a room in $E_p^c$}
				\STATE Set $(\pi', p')$ to be the solution returned by $\ALG$ as executed in the previous two steps
				\STATE Update $\pi$ and $p$ as follows 
				\begin{align*}
				\pi & = 
				\begin{cases}
				 \pi(a) & \text{ for }  a \in \pi^{-1}(E_p) \\
				 \pi'(a) & \text { for } a \in \pi^{-1}(E^c_p) 
				\end{cases} \\
				p & =
				\begin{cases}
				p_r & \ \ \ \ \ \text{  for }  r \in E_p \\
				p'_r & \ \ \ \ \ \text {  for } r \in E^c_p
				\end{cases}
				\end{align*}
				\STATE Update $E_p$ and $E_p^c$
				\ENDWHILE
				
				\STATE Return $(\pi, p)$
			\end{algorithmic}
		}
	\end{algorithm}
}

Note that  the optimality of $p^*$ ensures that all the inequalities in the definition of $E_p$ are tight for any nonnegative price vector $p$. In addition, a price vector $p$ is optimal iff the set $E_p^c = \emptyset$. 

Since the definition of $E_p$ directly refers to the optimal price vector, $p^*$ (which is not known a priori), it is unclear if one can efficiently find this set for a given price vector $p$. By contrast, the subset $Z_p$ can be directly computed.  The following lemma shows that we can, in fact, efficiently find $E_p$, even if the optimal price vector $p^*$ is not known.  


\begin{lemma} [Characterization of $E_p$] 
\label{lemma:char}
For any envy-free solution $(\pi, p)$ with nonnegative prices, a room $r$ belongs to the set $E_p$ iff there exists a room $z \in Z_p$ such that $r$ is reachable from $z$ in the directed graph $\W_{(\pi, p)}$. 
\end{lemma}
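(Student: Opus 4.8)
\textbf{Proof plan for Lemma~\ref{lemma:char}.} The plan is to prove both directions of the characterization by exploiting Lemma~\ref{comp}, which lets us transport price comparisons between two envy-free solutions along matching edges and along tight edges. Throughout, fix the optimal envy-free solution $(\pi^*, p^*)$ guaranteed by Sun and Yang~\cite{sun2003general}, and recall that by optimality $p^* \le p$ componentwise, so $E_p = \{r : p_r = p^*_r\}$ and $E_p^c = \{r : p_r > p^*_r\}$.

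\textbf{The ``if'' direction (reachability implies $r \in E_p$).} Suppose $r$ is reachable from some $z \in Z_p$ in $\W_{(\pi,p)}$ via a directed path $z = u_0 \to u_1 \to \cdots \to u_k = r$. I would argue by induction along the path that every vertex on it, when a room, lies in $E_p$. The base case is $z \in Z_p \subseteq E_p$ (since $p_z = 0 \le p^*_z$ forces $p_z = p^*_z = 0$). For the inductive step there are two kinds of edges in $\W_{(\pi,p)}$: a matching edge $(r', a)$ with $\pi(a) = r'$, and a tight edge $(a', r')$ with $(a', r') \in \mathcal{T}_{(\pi,p)}$. Along a matching edge $(r', a)$: knowing $r' = \pi(a) \in E_p$ means $p_{\pi(a)} = p^*_{\pi(a)} \le p^*_{\pi(a)}$, and applying Lemma~\ref{comp}(b) to the solutions $(\pi, p)$ and $(\pi^*, p^*)$ with agent $a$ gives $p_{\pi^*(a)} \le p^*_{\pi^*(a)}$; since $\pi^*(a)$ is the ``$\sigma$-room'' of $a$, this is the room we move to. Wait — I need to be careful about which matching plays the role of $\pi$ versus $\sigma$ in Lemma~\ref{comp}; the correct instantiation is to take $(\pi,p)$ as the first solution and $(\pi^*,p^*)$ as the second, and to read off that the room reached through $\pi$ and the room reached through $\pi^*$ for the same agent satisfy the same inequality. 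For a tight edge $(a', r')$: here $a'$ is the tail and we have just established (from the previous matching edge into $a'$, i.e.\ from $\pi(a') \in E_p$) that $p_{\pi(a')} \le p^*_{\pi(a')}$; then Lemma~\ref{comp}(c) with the tight pair $(a', r') \in \mathcal{T}_{(\pi,p)}$ yields $p_{r'} \le p^*_{r'}$, hence $r' \in E_p$. Chaining these steps along the whole path shows $r \in E_p$.

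\textbf{The ``only if'' direction ($r \in E_p$ implies reachability from $Z_p$).} This is the substantive direction and where I expect the main obstacle. Let $S \subseteq \mathcal{R}$ be the set of rooms reachable from $Z_p$ in $\W_{(\pi,p)}$; by the previous paragraph $S \subseteq E_p$, and I want the reverse. Suppose, for contradiction, $E_p \setminus S \neq \emptyset$. The idea is that $S$ is ``closed'' under the two edge types of $\W_{(\pi,p)}$ in a way that isolates it: no tight edge leaves $\pi^{-1}(S)$ to a room outside $S$ (else that room would be reachable), and $\pi$ restricted to $\pi^{-1}(S)$ is a bijection onto $S$. Consequently, the agents in $\pi^{-1}(\mathcal{R}\setminus S)$ have \emph{all} their first-choice rooms (equivalently, all tight edges and their matched room) inside $\mathcal{R}\setminus S$; and crucially $Z_p \subseteq S$, so every room in $\mathcal{R}\setminus S$ — in particular every room in $E_p\setminus S$ — has strictly positive price under $p$. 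Now I would produce a contradiction with the \emph{minimality} of $p^*$ on the block $E_p \setminus S$: since each such room $r$ has $p_r = p^*_r > 0$ and, by the closure of tight edges, the first-choice graph restricted to $\pi^{-1}(\mathcal{R}\setminus S)$ and $\mathcal{R}\setminus S$ is a self-contained matching market with all prices positive and no tight edge forcing contact with $S$, one can uniformly decrease all prices in $E_p \setminus S$ (or more precisely on the relevant sub-block reachable within $\mathcal{R}\setminus S$) by a small amount while preserving envy-freeness and nonnegativity — contradicting that $p^* $ is the componentwise-minimum envy-free price, i.e.\ contradicting $p^*_r = p_r$ being already at the floor. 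The delicate point is identifying exactly which sub-block can be lowered and invoking (a mild variant of) the perturbation reasoning behind Lemma~\ref{lemma:original-perturb} restricted to that sub-block; alternatively one can invoke the characterization of non-optimality proved in Lemma~\ref{lemma:char}'s companion, Lemma~\ref{lemma:char} — here I would instead lean on the statement (proved elsewhere in the paper) that an envy-free solution fails to be optimal iff some room-subset's rents can be jointly reduced, applied to the restriction, to get the contradiction cleanly.

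\textbf{Main obstacle.} The crux is the ``only if'' direction: making rigorous that the complement of the reachable set behaves like an independent, strictly-priced sub-market on which a joint price reduction is possible, and that this reduction contradicts optimality of $p^*$. I would handle this by carefully using Lemma~\ref{comp}(b),(c) to show that no tight edge and no matching edge escapes $S$ in the ``wrong'' direction, so that $\mathcal{R}\setminus S$ together with $\pi^{-1}(\mathcal{R}\setminus S)$ forms a valid sub-instance in which $(\pi, p)$ restricted is envy-free with all prices positive, and then applying the perturbation lemma (Lemma~\ref{lemma:original-perturb}) to that sub-instance to lower all those prices slightly — which combined with the unchanged prices on $S$ gives a global envy-free solution strictly below $p^*$ on $E_p\setminus S$, contradicting $p^* \le$ everything. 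The ``if'' direction is routine induction along the directed path using Lemma~\ref{comp}.
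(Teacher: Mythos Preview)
Your overall approach matches the paper's: the ``if'' direction propagates the inequality $p_r \le p^*_r$ along a directed path using Lemma~\ref{comp}, and the ``only if'' direction isolates a sub-block of positively-priced rooms with no incoming tight edges from outside, applies the perturbation lemma to that sub-block, and contradicts the optimality of $p^*$. The paper does the contrapositive slightly differently---for an unreachable room $r$ it perturbs the set $S(r)$ of rooms that can reach $r$, whereas you perturb the entire complement $\mathcal{R}\setminus S$ of the forward-reachable set---but both choices work for the same reason.

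Two points to clean up. First, in the ``if'' direction your detour through Lemma~\ref{comp}(b) and $\pi^*(a)$ is unnecessary and confused: a matching edge $(\pi(a),a)$ in $\W_{(\pi,p)}$ only identifies the agent $a$ whose hypothesis $p_{\pi(a)}\le p^*_{\pi(a)}$ you already have; the actual propagation to a new room happens only on a tight edge, via Lemma~\ref{comp}(c). That single part of Lemma~\ref{comp} suffices for the whole induction, exactly as in the paper.

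Second, in the ``only if'' direction your stated closure property is false. You claim that agents in $\pi^{-1}(\mathcal{R}\setminus S)$ have all their first-choice rooms inside $\mathcal{R}\setminus S$; but such an agent can certainly have a tight edge into $S$ (take two agents, $p_1=0$, $p_2>0$, a tight edge $(2,1)$ and no tight edge from agent~$1$). The property that \emph{is} true---and is exactly what you need---is the dual one: agents in $\pi^{-1}(S)$ have no tight edges into $\mathcal{R}\setminus S$, since any such edge would put its head into $S$ by forward-closure. This is what guarantees that lowering prices on $\mathcal{R}\setminus S$ does not create envy for agents whose matched room stays in $S$. Agents with matched room in $\mathcal{R}\setminus S$ are automatically fine because their own room becomes cheaper. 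Once you swap in the correct closure direction, your argument goes through.
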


\begin{proof}
For the given instance, write $(\pi^*, p^*)$ to denote an $\EF$ solution in which $p^*$ is the optimal price vector. \\

\noindent
\emph{If part:} For room $z \in Z_p$ we have $p_z = p^*_z $. Therefore, a repeated application of \cref{comp} (c) (staring with $a = \pi^{-1}(z)$ and $(\sigma, q) = (\pi^*, p^*)$) establishes the claim.  \\

\noindent 
\emph{Only if part:} Consider a room $r$ which is not reachable from any $z \in Z_p$. We will prove that in such a case $r \notin E_p$.


Write $S(r)$ to denote the set of rooms that have a directed path to $r$ in $\W_{(\pi, p)}$, i.e., $S(r) \coloneqq \{ r' \in \mathcal{R} \mid r \text{ is reachable from } r' \text{ in } \W_{(\pi, p)} \}$. Note that, ${S}(r) \cap Z_p = \emptyset$, since $r$ is not reachable from $Z_p$. In other words, $p_{r'} >0 $ for all $r' \in S(r)$. Also, let $\pi^{-1}({S}(r))$ denote the set of agents that are allocated a room from the set $S(r)$, i.e., $\pi^{-1}({S}(r)) \coloneqq \{ a \in \mathcal{A} \mid \pi(a) \in S(r) \}$.

The definition of $S(r)$ ensures that there are no incoming tight edges in this set from an agent in $\mathcal{A} \setminus \pi^{-1}({S}(r)) $, i.e., there does not exist $(a',r') \in \T_{(\pi, p)} $ such that  $a' \in \mathcal{A} \setminus \pi^{-1}({S}(r)) $ and $r' \in S(r)$: the existence of such an $(a',r')$ would imply that the room $\pi(a')$ has a directed path to $r$ (through $r'$) and, hence, $\pi(a')$ would have been included in $S(r)$ and $a'$ in $\pi^{-1}({S}(r))$.



Therefore, for every room $r' \in \mathcal{S}(r)$, we have (i) $p_{r'} >0$ and (ii) there does not exist a tight edge $(a',r') \in \T_{(\pi, p)} $ such that $a' \notin \pi^{-1}({S}(r))$.

Now, consider the ``sub-instance'' $\mathcal{I'} \coloneqq \left\langle \mathcal{A'}, \mathcal{R'}, \left\{ {v}_a(r, \cdot) \right\}_{a \in \mathcal{A'}, r \in \mathcal{R'}} \right\rangle$, with $\mathcal{A'} = \pi^{-1}({S}(r)), \mathcal{R'} = {S}(r)$. Note that $(\pi, p)$ restricted to $\mathcal{I'}$ is an envy-free solution for the sub-instance. Hence, we can apply the perturbation lemma (\cref{lemma:original-perturb}) to strictly reduce the prices of all rooms in the set $\mathcal{S}(r)$, till either price of a room becomes zero or an incoming tight edge to the set ${S}(r)$ is formed.  Overall, this implies that the price of room $r$ can be decreased while maintaining envy-freeness. Hence, it must be that case that $p_r > p^*_r$, i.e., $r \notin E_p$ and the claim follows.
\end{proof}

This lemma implies that, for any given envy-free solution $(\pi, p)$, we can efficiently find the sets $E_p$ and $E^c_p$ by, say, a breadth-first search traversal of $\W_{(\pi, p)}$. We next state and prove the main result of this section.

   

\TheoremPostpro* 

\begin{proof}
 For analysis, write $(\pi, p)$ to denote the envy-free allocation at the beginning of an iteration of the while-loop in \cref{algorithm:opt-price}, and let $E_p$ and $E_p^c$ be the partition of rooms induced by $(\pi, p)$. Using the set $E_p^c$, the algorithm constructs the sub-instance $\mathcal{I'} = \left\langle \mathcal{A'}, \mathcal{R'}, \{ \overline{v}_a(r, \cdot) \}_{a, r} \right\rangle$ (recall that $\mathcal{A'} = \pi^{-1}(E_p^c)$ and $\mathcal{R'} = E_p^c$). Write $(\pi', p')$ to denote the envy-free solution computed for $\mathcal{I'}$. We will use $(\tilde{\pi}, \tilde{p})$ to denote the updated solution obtained at the end of this iteration
 
   \begin{align*}
  \tilde{\pi}(a) & = 
  \begin{cases}
  \pi(a) & \text{ for }  a \in \pi^{-1}(E_p) \\
  \pi'(a) & \text { for } a \in \pi^{-1}(E^c_p) 
  \end{cases} \\
  \tilde{p}_r & =
  \begin{cases}
  p_r & \ \ \ \ \ \text{  for }  r \in E_p \\
  p'_r & \ \ \ \ \ \text {  for } r \in E^c_p
  \end{cases}
  \end{align*}

Note that during the iteration, the prices of the rooms in $E_p$ do not change, while the prices of the rooms in $E_p^c$ decrease, i.e., $ \tilde{p}_r = p_r $ for all $ r \in E_p $ and $\tilde{p}_r < p_r$ for all $r \in E_p^c$.

The case analysis below shows that envy-freeness is maintained as an invariant, i.e., $(\tilde{\pi}, \tilde{p})$ is an $\EF$ solution of $\overline{\mathcal{I}}$. \\

\noindent
\emph{Case {\rm I}:} Agent $a \in \pi^{-1}(E_p)$ and room $ r \in E_p$ 
	\begin{align*}
	v_a(r,\tilde{p}_r) & = v_a(r, {p}_r) &  \text{(since $\tilde{p}_r = p_r$)} \\
	& \leq v_a(\pi(a),p_{\pi(a)})  & \text{(by envy-freeness of $(\pi,p)$)} \\
	& = v_a(\tilde{\pi}(a),\tilde{p}_{\tilde{\pi}(a)}) & \text{(since $\tilde{\pi}(a) = \pi(a)$ and $\tilde{p}_{\tilde{\pi}(a)} = p_{\pi(a)}$)}
	\end{align*}	

\noindent
\emph{Case {\rm II}:} Agent $a \in \pi^{-1}(E_p)$ and room $ r \in E_p^c$
	\begin{align*}
	v_a(r,\tilde{p}_r)  & \leq v_a(\pi(a),p_{{\pi}(a)}) & \text{(constraints imposed in Step 5 of \cref{algorithm:opt-price})} \\
	& =  v_a(\tilde{\pi}(a),\tilde{p}_{\tilde{\pi}(a)})  & \text{(since $\tilde{\pi}(a) = \pi(a)$ and $\tilde{p}_{\tilde{\pi}(a)} = p_{\pi(a)}$)} 
	\end{align*}

\noindent
\emph{Case {\rm III}:} Agent $a \in \pi^{-1}(E_p^c)$ and room $ r \in E_p$
    
     \begin{align*}
     v_a(r,\tilde{p}_r) & = v_a(r, {p}_r) &  \text{(since $\tilde{p}_r = p_r$)} \\
     & \leq v_a(\pi(a),p_{\pi(a)}) & \text{(by envy-freeness of $(\pi,p)$)} \\
	& \leq v_a(\pi(a), p'_{\pi(a)}) & \text{(since $p'_{\pi(a)} < p_{\pi(a)}$)} \\
	& \leq v_a(\pi'(a),p'_{\pi'(a)}) & \text{(by envy-freeness of $(\pi',p')$)} \\
	& =  v_a(\tilde{\pi}(a),\tilde{p}_{\tilde{\pi}(a)})  & \text{(since $\tilde{p}_{\tilde{\pi}(a)} = p'_{\pi'(a)}$ and $\tilde{\pi}(a) = \pi'(a)$)}
     \end{align*}

\emph{Case {\rm IV}:} Agent $a \in \pi^{-1}(E_p^c)$ and room $ r \in E_p^c$
    
    \begin{align*}
    v_a(r, \tilde{p}_r)  & = v_a(r, p'_r) &  \text{(since $\tilde{p}_r = p'_r$)} \\
    & \leq v_a(\pi'(a),p'_{\pi'(a)}) & \text{(by envy-freeness of $(\pi',p')$)} \\
 & = v_a(\tilde{\pi}(a),\tilde{p}_{\tilde{\pi}(a)})  & \text{(since $\tilde{\pi}(a) = \pi'(a)$ and $\tilde{p}_{\tilde{\pi}(a)} = p'_{\tilde{\pi}(a)}$)}
    \end{align*}
Therefore, $v_a(\tilde{\pi}(a), \tilde{p}_{\tilde{\pi}(a)}) \geq v_a(r, \tilde{p}_r)$ for all agents $a \in \A,$ and rooms $r \in \R$. In other words, envy-freeness is maintained throughout the execution the algorithm. 

Note that in every iteration of \cref{algorithm:opt-price} either the rent of a room in $E_p^c$ reduces to zero or a new tight edge is created, which, in turn, moves a room into the set $E_p$. Therefore, in every iteration, the cardinality of the set $E_p^c$ decreases. Since the algorithm terminates when $E_p^c = \emptyset$, the while-loop in the algorithm executes at most $n$ times. This observation and the time complexity of \cref{algorithm:envy-free} (which is used as a subroutine in \cref{algorithm:opt-price}) establishes the stated runtime bound.

Finally, using the fact that $p$ is the optimal price vector iff $E_p^c = \emptyset$ (\cref{definition:ep}) we get that the price vector  returned by the algorithm is indeed optimal.  
\end{proof}


\section{Complexity of Fair Rent Division}
\label{section:complexity}

The section studies the complexity of finding $\EF$ solutions of rent-division instances with continuous, monotone decreasing, and piecewise-linear utilities. Specifically, we show that this total problem $\plc$ (formally defined below) is contained in the intersection of the complexity classes $\PPAD$ and $\PLS$.

An instance of the the problem $\plc$ comprises of a tuple $ \langle \A, \R, \{v_a(r, \cdot) \}_{a, r }  \rangle$, where  $\mathcal{A} =[n]$ denotes the set of $n$ agents, $\mathcal{R} =[n]$ denotes the set of $n$ rooms, and $v_a(r, \cdot)$ specifies the utility of agent $a$ for room $r$. Here, each function $v_a(r, \cdot)$ is continuous, monotone decreasing, and piecewise linear in the price. In particular, $v_a(r, \cdot)$ is given using its base value $v_a(r,0)$, a set of increasing {break points}, $b_1 = 0, b_2, b_3, \ldots, b_t \in \mathbbm{R}_+$, and the magnitude of the slopes $\{ \lambda^a_{r,i} \in \mathbb{R}_+ \}_{i \in [t]}$. The objective of the problem is to find an envy free solution $(\pi, p)$, i.e., a solution that satisfies
$ v_a(\pi(a), p_{\pi(a)}) \geq v_a(r, p_r) $  for all $a \in \A$ and $r \in \R$.

The main result of this section is as follows.

\TheoremComplexity*

The proof of containment of $\plc$ in $\PPAD$ and $\PLS$ appears in \cref{subsection:ppad} and \cref{subsection:pls}, respectively. To show that the problem belongs to the complexity class $\PPAD$, we first introduce a total problem $\linear$, which entails finding $\EF$ solutions under linear utilities. Then, we reduce $\linear$ to the problem of finding a Nash equilibrium of a polymatrix game and complete the proof by reducing $\plc$ to $\linear$.

The $\PLS$ containment of $\plc$ is based on a potential argument which follows from the analysis of \cref{algorithm:envy-free}. 

\subsection{Proof of PPAD containment}                        
\label{subsection:ppad}
The complexity class $\PPAD$ (Polynomial Parity Arguments on Directed graphs) consists of total search problems that reduce to the canonical, complete problem $\textsc{EndOfTheLine}$. In $\textsc{EndOfTheLine}$, we are implicitly given an exponential-size directed graph, which consists of paths and cycles, along with a vertex that has no incoming edge. The objective is to find another vertex with degree one; the existence of such a vertex follows from a parity argument. 

We will first consider rent division with linear utilities and show that this total problem, \linear, reduces to the $\PPAD$-complete problem of finding an exact Nash equilibrium in polymatrix games. Furthermore, we will reduce $\plc$ to $\linear$ (\cref{plc-to-linear}). Hence, this chain of reductions establishes the stated \PPAD containment.

An instance of the total problem $\linear$ comprises of a tuple $ \langle \A, \R, \{v_a(r, \cdot) \}_{a, r}  \rangle$, where  $\mathcal{A} =[n]$ denotes the set of $n$ agents, $\mathcal{R} =[n]$ denotes the set of $n$ rooms, and $v_a(r, \cdot)$ specifies the utility of agent $a$ for room $r$. Here, each function $v_a(r, \cdot)$ is linear and monotone decreasing in the price. In particular, for all $x \in \mathbb{R}$, we have $v_a(r, x) := H^a_r - \lambda^a_r x$; where $H^a_r$ is the given base value and $\lambda^a_r$ is the magnitude of the slope. The objective of the problem is to find an envy free solution $(\pi, p)$ with a non-negative price vector, i.e., a solution that satisfies $v_a(\pi(a), p_{\pi(a)})  \geq v_a(r, p_r) $   for all $ a \in \A$ and $ r \in \R$ along with $ p_r \geq 0$ for all $r \in \R$.

\subsubsection{Reduction}
We prove that $\linear$ belongs to $\PPAD$ by reducing it to the problem of computing an exact Nash equilibrium in polymatrix games. Given that equilibrium computation in such games is known to be $\PPAD$-complete~\cite{daskalakis2006game}, we get that $\linear \in \PPAD$. Intuitively, we construct a polymatrix game in which each agent plays a bimatrix game against a ``landlord''. These bimatrix games are constructed such that expected payoffs of the players correspond to utilities of the corresponding agents, under a transformation from the mixed strategies to the prices. Furthermore,  the mixed strategy of the landlord, at an equilibrium, maps to an envy-free price vector. Hence, we obtain a reduction from rent division to equilibrium computation. 

A polymatrix game is a game where the interactions between the players are succinctly captured by a graph. The players correspond to nodes of the graph and every edge $(a, b)$ in the graph captures a bimatrix game between agents $a$ and $b$. Every player plays a single (mixed) strategy and her total payoff is sum of the payoffs from all the bimatrix games she plays with her neighbors in the graph. 

Given an $\linear$ instance $\mL = \langle \A, \R, \{v_a(r, \cdot) \}_{a, r}  \rangle$, we will construct a polymatrix game $\G$ with $n+2$ players and $n+1$ actions per player. Also, let $M \in \mathbbm{R}_+$ be a large threshold such that for all agents $a \in [n]$ and for all rooms $r, r' \in [n]$, we have $v_a(r, M) < v_a(r', 0)$. 

In the constructed game $\G$ players $1$ to $n$ correspond to the $n$ agents $\A$ of the given $\linear$ instance. As a gadget, we introduce player $(n+1)$ and room/action $(n+1)$. This additional player and action will be crucial in establishing certain properties for Nash equilibria of the polymatrix game. Player $0$ in $\G$ represents the ``landlord'' and, at an equilibrium, the mixed strategy of this player maps to an envy-free price vector. 

The utility functions of the gadget agent $(n+1)$ are defined as follows
\begin{align*}
v_{n+1}(r, z) := \begin{cases}
\gamma - \frac{\gamma}{2M} z & \text{for rooms $r \in [n]$}\\
\gamma & \text{for room $r = n+1$} 
\end{cases}
\end{align*}


The utility of the extra room is set to be a constant for each agent $a \in [n]$; in particular, this utility is defined in the following manner, $v_a(n+1, z) \coloneqq  \max_{r \in [n]} v_a(r,M) + \eta$,  
where $\eta$ is a small positive number such that $\max_{r \in [n]} v_a(r,M) + \eta < \min_{r' \in [n]} v_a(r',0)$. The definition of $M$ ensures that such an $\eta$ exists. Note that all the utilities in this extended rent-division instance continue to be linear functions. 

Write $\{0,1,...,n+1\}$ to denote the set of  $(n+2)$ players in the polymatrix game $\G$. Here, player $1$ to player $(n+1)$ correspond to the $(n+1)$ agents of the extended rent division instance while player $0$ corresponds to the landlord. The action set of every player in the game is the set of $(n+1)$ rooms, denoted by $[n+1]$. 


The polymatrix game $\G$ consists of $(n+1)$ bimatrix games. Specifically, player $0$ plays a bimatrix game $(P^a, Q^a)$ with every other player $a \in \{1,2,...,n+1\}$. Here, player $a$ is the row player and player $0$ is the column player with $(n+1) \times (n+1)$ payoff matrices $P^a$ and  $Q^a$, respectively. Since the utility function for each agent $a \in [n+1]$ and each room $r \in [n+1]$ is a linear function of the form $v_a(r, z) = H^a_r - \lambda^a_r z$, we can construct payoff matrices wherein the expected payoff maps to the utility of the room. Formally, define payoff matrix $P^a$ for every player $a \in [n+1]$ as 

\begin{align*}
P^a_{(r,r')} & := 
\begin{cases}
H^a_r - \lambda^a_r M & \text{ for }  r \neq r' \\
(H^a_r - \lambda^a_rM) + 3n\lambda^a_rM & \text { for }  r=r'
\end{cases} 
\end{align*} 

For all players $a$ in $[n]$, the payoff matrix $Q^a$ of player $0$ is set to be the negative identity matrix that is, $Q^a := - I$. The payoff matrix of player $0$ for the game with player $(n+1)$ is set to be the following diagonal matrix 

\begin{align*}
Q^{n+1}_{(r,r')} & := 
\begin{cases}
0 & \text{ for }  r \neq r' \\
-(n+1)^2 & \text { for }  r=r' \neq (n+1) \\
-1 & \text { for }  r=r'=(n+1)
\end{cases} 
\end{align*} 

Write $x^a \in \Delta^n$ to be a mixed strategy of Player $a$ over the action set of $(n+1)$ rooms. Hence, $(x^0,x^1,...,x^{n+1})$ denotes a strategy profile for the above constructed polymatrix game. The construction ensures that a mixed strategy $x^0$ of the (column) player $0$ induces a price vector $p$ for the underlying rent-division instance. In particular, the expected payoff, $P^a_r(x^0)$, of a row player $a \in [n+1]$ for an action $r \in [n+1]$, when the column player $0$ plays mixed strategy $x^0$ is $P^a_r(x^0) = v_a(r, M(1 - 3nx^0_r))$. 
In other words, the expected payoff for action $r \in [n+1]$ is equal to the utility at price $p_r := M(1-3n \ x^0_r)$. Note that if player $0$ plays action $r$ with zero probability, i.e., $x^0_r = 0$, then this is equivalent to setting the price of the room $p_r$ to be equal to $M$. However, if player $0$ plays action $r$ with probability $1/3n$, then this leads to zero rent, $p_r =0$. 

Write $BR_a(x^0)$ to denote the best-response set of player $a$ against strategy $x^0$ of the column player, $BR_a(x^0) \coloneqq  \{r \in [n+1] \mid P^a_r(x^0) \geq P^a_{r'}(x^0) \ \ \text{for all} \ \ r' \in [n+1] \}$. Note that the expected payoffs of player $(n+1)$ against the mixed strategy $x^0 \in \Delta^n$ of player $0$ are as follows: $P^{n+1}_{n+1}(x^0) = \gamma$ and  $P^{n+1}_r(x^0) = \frac{\gamma}{2} + \frac{3\gamma n}{2}x^0_r $ for $r \in [n]$. Therefore, we can identify the conditions under which an action $r \in [n]$ is a best response for player $(n+1)$: 

\begin{align}
\label{threshold}
\begin{cases}
\text{If} \ x^0_r < 1/3n,  & \text{then room} \ r \ \text{is not a best response for player} \ (n+1)\\
\text{If} \ x^0_r = 1/3n, & \text{then player} \ (n+1) \ \text{is indifferent between room} \ r \ \text{and room} \ (n+1)\\
\text{If} \ x^0_r > 1/3n,  & \text{then room} \ (n+1) \ \text{is not a best response for player} \ (n+1)\\
\end{cases}
\end{align}

The expected payoff of player $0$ for an action $r \in [n+1]$, when the remaining players play mixed strategies $(x^1, x^2,...,x^{n+1})$, admits a closed form:
\begin{align}
\label{colpayoff}
Q_r(x^1, x^2,...,x^{n+1}) & = 
\begin{cases}
-\left(\sum_{a \in [n]} x^a_r +(n+1)^2x^{n+1}_r\right) & \text{ for }  r \in [n] \\
-\left(\sum_{a \in [n+1]} x^a_r\right) & \text { for }  r= n+1 
\end{cases} 
\end{align} 

We can write the best response set of player $0$ as $BR_0(x^1, x^2,...,x^{n+1}) = \{r \in [n+1] \mid Q_r(x^1, x^2,...,x^{n+1}) \geq Q_{r'}(x^1, x^2,...,x^{n+1})  \ \text{for all} \ \ r' \in [n+1] \}$

Recall that, under any Nash equilibrium, the support of the mixed strategy of any player is contained in her best response set. Therefore, the following containments hold for any Nash equilibrium $(x^0, x^1,...,x^{n+1})$ of the constructed polymatrix game, $ \text{Supp}(x^0) \subseteq BR_0(x^1, x^2,...,x^{n+1})$ and $\text{Supp}(x^a) \subseteq BR_a(x^0)$ for all $a \in [n+1]$.

The key component of the reduction is the following lemma, the proof of which provides an efficient method of converting a Nash equilibrium of the constructed polymatrix game to an envy-free solution for the given rent-division instance.  

\begin{lemma} \label{ppadreduction}
	Given any Nash equilibrium $(x^0, x^1,...,x^{n+1})$ of the constructed polymatrix game $\G$, we can efficiently find an envy-free solution $(\pi, p)$ of the underlying $\linear$ instance $\mL$. 
\end{lemma}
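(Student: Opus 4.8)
The plan is to show that at any Nash equilibrium $(x^0, x^1, \ldots, x^{n+1})$ of $\G$, the induced price vector $p$ (defined by $p_r := M(1 - 3n\,x^0_r)$ for $r \in [n+1]$) together with an appropriate perfect matching $\pi$ constitutes an envy-free solution of $\mL$. First I would establish that the gadget player $(n+1)$ and the gadget room $(n+1)$ are used to ``absorb'' probability mass correctly. Concretely, I would argue that in any equilibrium, for every room $r \in [n]$ we must have $x^0_r \le 1/3n$, so that $p_r \ge 0$: if some $x^0_r > 1/3n$, then by \eqref{threshold} room $(n+1)$ is not a best response for player $(n+1)$, forcing $\text{Supp}(x^{n+1}) \subseteq [n]$, and then the large penalty $-(n+1)^2$ in $Q^{n+1}$ combined with $Q^a = -I$ would make player $0$'s payoffs on rooms in $[n]$ too negative relative to room $(n+1)$ — a careful accounting of \eqref{colpayoff} should yield that player $0$ strictly prefers room $(n+1)$, contradicting that mass sits on room $r$. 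This pins down nonnegativity of prices.

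Next I would show the prices never exceed $M$ in a way that matters, and more importantly that player $0$ equalizes its payoff across the rooms actually used, which translates (via $Q_r$) into a ``market-clearing'' condition: each room $r \in [n]$ is demanded by exactly the right total mass. The intended reading is that $\sum_{a \in [n]} x^a_r$ behaves like an indicator that room $r$ is matched. I would then invoke that $\text{Supp}(x^a) \subseteq BR_a(x^0)$ means each agent $a$ places all its weight on rooms that maximize $v_a(\cdot, p_\cdot)$ — i.e., on first-choice rooms at price vector $p$. The key combinatorial step is to extract from the equilibrium a perfect matching $\pi$ in the first-choice graph $\F(p)$: this should follow from a Hall-type / flow argument, where player $0$'s best-response structure guarantees that the bipartite ``demand'' graph between agents $[n]$ and rooms $[n]$ (restricted to first-choice edges) satisfies Hall's condition, using the gadget to ensure no agent is forced onto the gadget room (the $\eta$ slack in $v_a(n+1,\cdot)$ guarantees that at prices $\le M$ some real room in $[n]$ is strictly preferred over room $(n+1)$ for every agent, so $x^a_{n+1} = 0$ for $a \in [n]$). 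Once $\pi$ is a perfect matching in $\F(p)$, envy-freeness is immediate: $v_a(\pi(a), p_{\pi(a)}) \ge v_a(r, p_r)$ for all $r$ because $\pi(a)$ is a first-choice room.

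I expect the main obstacle to be the extraction of the perfect matching from the equilibrium and simultaneously certifying that $p$ restricted to the real rooms is nonnegative with total ``occupied'' mass equal to $n$. The subtlety is that at an equilibrium player $0$ only equalizes payoffs over its \emph{support}, not over all actions, so one has to rule out degenerate configurations — e.g., all agents crowding onto a strict subset of rooms, or player $0$ putting mass only on the gadget room. The gadget player $(n+1)$ with its diagonal penalty matrix $Q^{n+1}$ is precisely the device that forces player $0$'s support to spread across all $n+1$ actions in a balanced way (the $-(n+1)^2$ versus $-1$ asymmetry makes concentrating on any single real room too costly), and the $3n$ scaling in $P^a$ together with the $1/3n$ threshold in \eqref{threshold} calibrates this so that each real room gets ``used'' exactly once. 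I would handle this by a case analysis on $\text{Supp}(x^0)$: show $n+1 \in \text{Supp}(x^0)$ is impossible unless all real rooms are ``full,'' then conclude $x^0_r \in \{0\} \cup \{1/3n\}$-type structure is not needed — rather, that the demand each room receives from $\sum_a x^a_r$ forces a fractional perfect matching in $\F(p)$, which by the Birkhoff–von Neumann theorem decomposes into an integral one, and any such integral $\pi$ works. Finally I would note that all these steps are polynomial-time computable (solving the supporting LP / running a matching algorithm on $\F(p)$), which gives the ``efficiently find'' part of the statement.
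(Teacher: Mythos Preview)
Your overall architecture matches the paper's: at equilibrium, the strategies $x^1,\ldots,x^n$ (restricted to rooms $[n]$) form a doubly stochastic matrix, Birkhoff--von Neumann gives a permutation $\pi$ supported on first-choice edges, and the gadget forces $x^0_r \le 1/3n$ so that $p_r = M(1-3n\,x^0_r)\ge 0$. That is exactly the skeleton of the paper's proof.

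The genuine gap is your justification for $x^a_{n+1}=0$ for $a\in[n]$. Your claim that ``the $\eta$ slack in $v_a(n+1,\cdot)$ guarantees that at prices $\le M$ some real room in $[n]$ is strictly preferred over room $(n+1)$'' is false as stated: $v_a(n+1,\cdot)=\max_{r}v_a(r,M)+\eta$ is \emph{above} every $v_a(r,M)$, so when prices are close to $M$ (equivalently $x^0_r$ is close to $0$) the gadget room is the unique best response for agent $a$. The $\eta$ pushes in exactly the opposite direction from what you assert, and nothing in your outline rules out $x^0_r$ being arbitrarily small on some rooms. Without this, you cannot conclude that the $x^a$'s live on $[n]$, and the doubly-stochastic conclusion collapses.

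The paper fills this gap via a specific chain that your case analysis only gestures at: (i) $x^{n+1}_{n+1}>0$ (your penalty argument gives this), (ii) $x^0$ has \emph{full support} (so player $0$'s payoffs $Q_r$ are equalized across all $r\in[n+1]$), (iii) summing $Q_r$ over all $r$ gives $\sum_r Q_r \le -(n+1)$, hence each $Q_r\le -1$, (iv) $x^{n+1}_{n+1}=1$ (otherwise some $r$ has $x^0_r=1/3n$, i.e.\ $p_r=0$, forcing room $(n+1)$ out of every agent's best-response set and giving $Q_{n+1}=-x^{n+1}_{n+1}>-1$, contradicting (iii)), and finally (v) $x^a_{n+1}=0$ for $a\in[n]$ (else $Q_{n+1}<-1$, so all $Q_r<-1$, contradicting $\sum_r Q_r=-(n+1)$). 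Step (ii), full support of $x^0$, is the pivot you are missing; it is what converts ``equal payoffs on the support'' into ``equal payoffs everywhere,'' and the subsequent summation/counting arguments replace the $\eta$-based reasoning entirely.
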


\subsubsection{Proof of \cref{ppadreduction}}
We will first establish relevant properties that any Nash equilibrium of the game $\G$ must satisfy. Here, the end goal is to show that an $n \times n$ matrix formed by stacking the equilibrium mixed strategies $(x^1, x^2, \ldots ,x^n)$ as the rows is doubly stochastic. Therefore, the mixed strategies can be associated with a permutation $\pi$. Furthermore, the equilibrium mixed strategy $x^0$ of the landlord (player $0$) leads to a price vector $p$ for the $n$ rooms. 
The following five claims establish these facts and will enable us to prove that $(\pi, p)$ is an $\EF$ solution of the underlying rent division instance. 


The first claim asserts that in any Nash equilibrium the $(n+1)$th player selects room $(n+1)$ with nonzero probability. 

\begin{claim} \label{Claim1}
	Under any Nash equilibrium $(x^0, x^1, x^2, \ldots ,x^n)$ of the constructed game $\G$, player $(n+1)$ plays the action $(n+1)$ with nonzero probability, $x_{n+1}^{n+1} > 0$.
\end{claim}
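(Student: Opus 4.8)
# Proof Proposal for Claim~\ref{Claim1}

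The plan is to argue by contradiction: suppose that under some Nash equilibrium $(x^0, x^1, \ldots, x^{n+1})$ of $\G$ we have $x^{n+1}_{n+1} = 0$, so player $(n+1)$ places all her probability mass on the rooms $[n]$. First I would examine what this forces on the landlord's strategy $x^0$. Since $x^{n+1}_{n+1} = 0$, room $(n+1)$ cannot be a best response for player $(n+1)$, which by the trichotomy in~(\ref{threshold}) means that every room $r \in [n]$ in the support of $x^{n+1}$ satisfies $x^0_r > 1/3n$. In fact, for any room $r \in [n]$ with $x^{n+1}_r > 0$, room $r$ must be a best response for player $(n+1)$, and room $(n+1)$ gives payoff exactly $\gamma$; since a best-response room $r \in [n]$ then yields payoff $\frac{\gamma}{2} + \frac{3\gamma n}{2} x^0_r \ge \gamma$, this is consistent only when $x^0_r \ge 1/3n$, and strictly greater than $\gamma$ would be needed unless room $(n+1)$ is also a best response — but it is not in the support, so actually we get $x^0_r > 1/3n$ on $\mathrm{Supp}(x^{n+1})$.

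The next step is to derive a contradiction from the landlord's best-response condition~(\ref{colpayoff}). The key observation is that player $0$'s payoff for a room $r \in [n]$ includes the term $-(n+1)^2 x^{n+1}_r$, which is heavily penalized, whereas the payoff for room $(n+1)$ is $-\sum_{a \in [n+1]} x^a_r$ evaluated at $r = n+1$, i.e. $-\sum_{a} x^a_{n+1}$. I would compare $Q_r$ on the support of $x^{n+1}$ against $Q_{n+1}$. On any $r$ with $x^{n+1}_r > 0$ we have $Q_r \le -(n+1)^2 x^{n+1}_r - \sum_{a\in[n]} x^a_r$, and summing/averaging over such rooms with weights $x^{n+1}_r$ shows the landlord would get strictly more payoff by shifting mass to room $(n+1)$ (whose penalty coefficient is only $-1$), unless $\sum_a x^a_{n+1}$ is itself large — but that sum is bounded by $n+2$ while $(n+1)^2$ dominates, and crucially $x^{n+1}_{n+1} = 0$ by assumption removes the one term that could have inflated $Q_{n+1}$'s negative magnitude the most. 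Hence no room $r$ with $x^{n+1}_r > 0$ can be in $\mathrm{Supp}(x^0)$.

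But now the contradiction closes: if no room in $\mathrm{Supp}(x^{n+1})$ lies in $\mathrm{Supp}(x^0)$, then $x^0_r = 0$ for every $r$ with $x^{n+1}_r > 0$, contradicting the earlier deduction that $x^0_r > 1/3n$ on exactly that set. Therefore $x^{n+1}_{n+1} > 0$. The main obstacle I anticipate is getting the inequality in the landlord comparison airtight — one must track the exact payoff expressions in~(\ref{colpayoff}), handle the possibility that $\mathrm{Supp}(x^{n+1})$ and $\mathrm{Supp}(x^0)$ overlap only partially, and make sure the coefficient $(n+1)^2$ genuinely dominates $\sum_a x^a_{n+1} \le n+2$ in the relevant averaged inequality; the gadget constants were evidently chosen precisely so that this domination is strict, so the verification should go through, but it requires care with the weighted averaging over the support.
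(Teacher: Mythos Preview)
Your approach is the same contradiction strategy as the paper's: assume $x^{n+1}_{n+1}=0$, use player $(n+1)$'s best-response condition to force $x^0_r>0$ on some room $r\in[n]$ in $\mathrm{Supp}(x^{n+1})$, and then show the landlord's payoff at that room is strictly worse than at room $n+1$, contradicting $r\in\mathrm{Supp}(x^0)$. Two details in your write-up need fixing, though.

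First, the assertion ``room $(n+1)$ cannot be a best response for player $(n+1)$'' does not follow from $x^{n+1}_{n+1}=0$: an action outside the support may still be a best response. You therefore only get $x^0_r\ge 1/3n$ (not strict) on $\mathrm{Supp}(x^{n+1})$. This is harmless, since all you actually need later is $x^0_r>0$.

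Second, and more importantly, the step ``summing/averaging over such rooms \ldots\ hence no room $r$ with $x^{n+1}_r>0$ can be in $\mathrm{Supp}(x^0)$'' overreaches. A weighted average of the $Q_r$ being below $Q_{n+1}$ tells you only that \emph{some} room in $\mathrm{Supp}(x^{n+1})$ has $Q_r<Q_{n+1}$, not all of them; a room with tiny $x^{n+1}_r$ can perfectly well have $Q_r>Q_{n+1}$. One such room still suffices for the contradiction, so your argument can be repaired, but the paper handles this much more cleanly: by pigeonhole there is a single room $\rho$ with $x^{n+1}_\rho\ge 1/n$, and for that specific $\rho$ one computes directly
\[
Q_\rho \;\le\; -(n+1)^2 x^{n+1}_\rho \;\le\; -\tfrac{(n+1)^2}{n} \;<\; -(n+1) \;<\; Q_{n+1},
\]
the last inequality using $x^{n+1}_{n+1}=0$. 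No averaging is needed.
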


\begin{proof}
	Assume, for contradiction, that $x_{n+1}^{n+1} = 0$. Since $x^{n+1} \in \Delta^n$, there exists an action $\rho \in [n]$ such that $x^{n+1}_\rho \geq 1/n$. Note that, at any Nash equilibrium the support of the mixed strategy for a player is contained in her best response set. Hence, action $\rho \in BR_{n+1}(x^0)$. Conditions (\ref{threshold}) ensure that this can happen only if player $0$ plays action $\rho$ with a probability at least $1/3n$, i.e., $x^0_\rho \geq 1/3n >0$. In addition, the expected payoff of player $0$ for the action $\rho$, under the equilibrium $(x^0, x^1,...,x^{n+1})$, is equal to  
	$Q_{\rho}(x^1, x^2,...,x^{n+1}) = -(\sum_{a \in [n]} x^a_\rho +(n+1)^2x^{n+1}_\rho) < - (1/n)(n+1)^2 < -(n+1)$.
	
	Independent of $(x^1, x^2,\ldots, x^{n+1})$, if column player $0$ plays action $(n+1)$, then her expected payoff will be at least $-(n+1)$; in particular, $Q_{n+1}(x^1, x^2,...,x^{n+1}) > -(n+1)$. Hence, player $0$ achieves a strictly greater payoff from action $(n+1)$ than action $\rho$, i.e., $\rho \notin BR_0(x^1, x^2,...,x^{n+1})$. This contradicts the fact that $x^0_\rho >0$. 
\end{proof}

The next claim shows that the equilibrium strategy of player $0$ is always full support.

\begin{claim} \label{fullsupport}
	Under any Nash equilibrium $(x^0, x^1, x^2, \ldots ,x^n)$ of the polymatrix game $\G$, the mixed strategy of player $0$ is full support, i.e., $x^0_r > 0$ for all $r \in [n+1]$. 
\end{claim}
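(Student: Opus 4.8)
The plan is to establish full support of player $0$'s equilibrium strategy in two steps, using throughout \cref{Claim1} together with the equilibrium fact that every player randomizes only over her best-response set: first I would show $x^0_{n+1} > 0$, and then $x^0_r > 0$ for every ordinary room $r \in [n]$.

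For the first step, \cref{Claim1} gives $x^{n+1}_{n+1} > 0$, so action $(n+1)$ lies in $BR_{n+1}(x^0)$. Comparing the expected payoffs $P^{n+1}_{n+1}(x^0) = \gamma$ and $P^{n+1}_r(x^0) = \frac{\gamma}{2} + \frac{3\gamma n}{2}\,x^0_r$ for $r \in [n]$ (and using that $\gamma > 0$ is a positive constant of the construction) forces $x^0_r \le \frac{1}{3n}$ for every $r \in [n]$. Summing this over the $n$ ordinary rooms and using $\sum_{r \in [n+1]} x^0_r = 1$ then yields $x^0_{n+1} \ge 1 - \frac{1}{3} > 0$.

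For the second step I would argue by contradiction. Since $x^0_{n+1} > 0$, room $(n+1)$ lies in player $0$'s best-response set, so the payoff expression (\ref{colpayoff}) gives $\sum_{a \in [n+1]} x^a_{n+1} \le \sum_{a \in [n]} x^a_r + (n+1)^2 x^{n+1}_r$ for every $r \in [n]$, and its left-hand side is at least $x^{n+1}_{n+1} > 0$ by \cref{Claim1}. Now suppose $x^0_\rho = 0$ for some $\rho \in [n]$. Because $x^0_\rho = 0 < \frac{1}{3n}$, conditions (\ref{threshold}) say room $\rho$ is not a best response for player $(n+1)$, so $x^{n+1}_\rho = 0$; and for each agent $a \in [n]$ the payoff of room $\rho$ against $x^0$ equals $v_a(\rho, M(1 - 3n x^0_\rho)) = v_a(\rho, M)$, which is strictly dominated by the gadget room's constant payoff $v_a(n+1,\cdot) = \max_{r \in [n]} v_a(r,M) + \eta$, so $x^a_\rho = 0$ for all $a \in [n]$. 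Substituting into the displayed inequality with $r = \rho$ makes its right-hand side equal to $0$, contradicting the strict positivity of its left-hand side; hence $x^0_\rho > 0$ for every $\rho \in [n]$, and combined with the first step this proves the claim.

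I do not anticipate a serious obstacle here; the only point requiring care is the second step, where one must assemble exactly the right three ingredients — the gadget-room domination that forces $x^a_\rho = 0$ for the $n$ agents, the threshold condition (\ref{threshold}) that forces $x^{n+1}_\rho = 0$, and \cref{Claim1} to keep $\sum_{a} x^a_{n+1}$ bounded away from $0$ — so that player $0$'s best-response inequality between rooms $(n+1)$ and $\rho$ is genuinely violated. Everything else is elementary arithmetic, and no fixed-point or topological input is needed once \cref{Claim1} is in hand.
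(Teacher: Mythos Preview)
Your proposal is correct and follows essentially the same two-part structure as the paper's proof: first $x^0_{n+1}>0$, then $x^0_r>0$ for $r\in[n]$ via the same ``$x^0_\rho=0 \Rightarrow x^a_\rho=0$ for all $a$'' argument combined with player~$0$'s best-response condition at action $n+1$. The only (minor) difference is that your Step~1 argues directly---using that $(n+1)\in BR_{n+1}(x^0)$ forces $x^0_r\le \tfrac{1}{3n}$ for all $r\in[n]$ and hence $x^0_{n+1}\ge \tfrac{2}{3}$---whereas the paper argues by contradiction (if $x^0_{n+1}=0$ then some $x^0_r\ge \tfrac{1}{n}>\tfrac{1}{3n}$, contradicting \cref{Claim1}); these are contrapositives of one another, and your direct version incidentally yields the sharper quantitative bound.
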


\begin{proof}
	We will prove this claim in two parts. First, we will show that player $0$ places nonzero probability mass on the $(n+1)$th action and then use this fact to show that $x^0_r >0 $ for all $r$. \\
	
	\noindent 
	\textit{Part 1:}  Assume, for contradiction, that $x_{n+1}^0 =0$. Since, $x^0 \in \Delta^n$, there exists an action $r \in [n]$ such that $x^0_r \geq 1/n >1/3n $. Condition (\ref{threshold}) then ensure that action $(n+1)$ is not a best response for player $(n+1)$, that is $x^{n+1}_{n+1} =0$. This contradicts \cref{Claim1}. Hence, we must have $x_{n+1}^0 > 0$. \\
	
	\noindent 
	\textit{Part 2:} Assume, for contradiction, that $x^0$ is not full support. Hence, there exists a room $\rho \in [n]$ such that $x^0_\rho=0$. This implies that  the corresponding price of this room $p_\rho = M(1 - 3n \ x^0_\rho) = M$ and, hence, this room is not a best response for any agent, i.e.,  $\rho \notin BR_a(x^0)$ for all $a \in [n+1]$; recall that, for any agent $a \in [n]$ and mixed strategy $x^0$, the expected payoff of action $(n+1)$ is equal to $\max_{r\in [n]} v_a(r, M) + \eta$, which is strictly greater than $v_a(\rho, M)$. 
	
	In other words, $x^0_\rho =0$ implies $x^a_\rho =0$ for all $a \in [n+1]$. This leads to player $0$ receiving a payoff of zero at action $\rho$, i.e.,  $Q_\rho(x^1, x^2,...,x^{n+1}) =0$. 
	
	\cref{Claim1} ensures that  under any Nash equilibrium of the game we have $x_{n+1}^{n+1}>0$. This inequality along with \cref{colpayoff} imply that player $0$ receives a strictly negative payoff at action $(n+1)$, i.e., $Q_{n+1}(x^1, x^2,...,x^{n+1}) <0$.  \textit{Part 1} of this claim  has already established that player $0$ plays action $(n+1)$ with a nonzero probability at any Nash equilibrium, i.e., action $(n+1)$ is a best response of player $0$ under any Nash equilibrium. However, the inequality $Q_\rho(x^1, x^2,...,x^{n+1}) > Q_{n+1}(x^1, x^2,...,x^{n+1})$ contradicts this fact and, hence, establishes the claim. 
\end{proof}

Using the result that player $0$ plays every action with nonzero probability, we will establish an upper bound on player $0$'s payoff for each room.

\begin{claim}
	\label{claim:claim3}
	Under any Nash equilibrium $(x^0, x^1, \ldots ,x^n)$ of the game $\G$ and for every pair of actions $r, r' \in [n+1]$ we have $Q_r(x^1,x^2,...,x^{n+1}) =Q_{r'}(x^1,x^2,...,x^{n+1}) \le -1$.
\end{claim}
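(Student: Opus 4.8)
The equality part is essentially immediate from \cref{fullsupport}: at a Nash equilibrium the support of every player's mixed strategy is contained in her best-response set, so full support of the landlord's strategy $x^0$ forces \emph{every} action $r \in [n+1]$ to lie in $BR_0(x^1,\dots,x^{n+1})$. Hence all of the landlord's actions yield her the same (maximal) expected payoff, i.e.\ $Q_r(x^1,\dots,x^{n+1}) = Q_{r'}(x^1,\dots,x^{n+1})$ for every pair $r,r' \in [n+1]$. I would denote this common value by $Q$, and the remaining work is to show $Q \le -1$.

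For the bound I would sum the closed-form payoff expression \eqref{colpayoff} over all $r \in [n+1]$ and use that each $x^a$ is a distribution on $[n+1]$, so $\sum_{r\in[n]} x^a_r = 1 - x^a_{n+1}$. Regrouping the double sum $\sum_{r\in[n]}\sum_{a\in[n]} x^a_r$ and combining it with the $r=n+1$ term, the quantities $\sum_{a\in[n]} x^a_{n+1}$ cancel and one is left with
\[
(n+1)\,Q \;=\; \sum_{r\in[n+1]} Q_r \;=\; -\,n \;-\;(n+1)^2 \;+\;\big((n+1)^2 - 1\big)\,x^{n+1}_{n+1}.
\]
Since $x^{n+1}_{n+1} \le 1$, the right-hand side is at most $-n - (n+1)^2 + (n+1)^2 - 1 = -(n+1)$, and dividing by $n+1$ gives $Q \le -1$, as claimed. (Equality would force $x^{n+1}_{n+1} = 1$.)

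The argument is short and its only genuine input is \cref{fullsupport}; I do not expect a real obstacle. The one place to be careful is the bookkeeping in the displayed identity: the coefficient of $x^{n+1}_{n+1}$ must come out to exactly $(n+1)^2 - 1$ — the $(n+1)^2$ contributed by the $r\in[n]$ terms minus the $1$ from the $r=n+1$ term — since it is precisely this value that makes the final inequality reduce to the trivial $x^{n+1}_{n+1}\le 1$. A sign slip there is the most likely way the computation could go wrong.
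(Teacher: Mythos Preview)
Your proposal is correct and follows essentially the same approach as the paper: use \cref{fullsupport} to get equality of all $Q_r$, then bound the sum $\sum_{r\in[n+1]} Q_r$ by $-(n+1)$ using that each $x^a$ is a probability distribution. The only cosmetic difference is that the paper avoids your exact-identity computation by first observing the cruder bound $Q_r(x^1,\dots,x^{n+1}) \le -\sum_{a\in[n+1]} x^a_r$ for every $r$ (immediate from \eqref{colpayoff} since $(n+1)^2 \ge 1$), which makes the sum equal to $-(n+1)$ directly without tracking the coefficient of $x^{n+1}_{n+1}$.
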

\begin{proof}
	\cref{fullsupport} implies that Supp$(x^0) = [n+1]$. Therefore, under a Nash equilibrium, every action is a best response for player $0$. In other words, this  player achieves the same expected payoff from all possible actions, $Q_r(x^1,x^2,...,x^{n+1}) =Q_{r'}(x^1,x^2,...,x^{n+1})$ for all $r, r' \in [n+1]$. 
	
	
	\cref{colpayoff} gives us the following upper bound on the expected payoff of player $0$ for each room $r \in [n+1]$: $Q_r(x^1,x^2,...,x^{n+1}) \leq - \sum_{a \in [n+1]} x^a_r$. Adding up these inequalities across all the rooms we get 
	\begin{align}
	\label{equation:sum-payoffs}
	\sum_{r \in [n+1]}Q_r(x^1,x^2,...,x^{n+1}) \leq -\left(\sum_{r \in [n+1]} \sum_{a \in [n+1]} x^a_r \right) = -\left(n+1\right)
	\end{align}
	Since the sum of the expected payoffs of player $0$ over all the $(n+1)$ rooms is bounded from above by $-(n+1)$, we get that the expected payoff $Q_r(x^1,x^2,...,x^{n+1})$, for any room $r$, is upper bounded by $-1$.
\end{proof}

Using this bound, we will now strengthen \cref{Claim1} and show that at any equilibrium, player $(n+1)$ plays the action corresponding to the extra room $(n+1)$ with probability $1$.

\begin{claim}
	\label{claim:claim4}
	Under any Nash equilibrium $(x^0, x^1, \ldots ,x^n)$ of the polymatrix game we have $x^{n+1}_{n+1}=1$.
\end{claim}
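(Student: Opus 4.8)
The plan is to leverage \cref{claim:claim3}, which tells us that at any Nash equilibrium every action yields player $0$ a payoff of at most $-1$, with equality across all actions. I would argue by contradiction: suppose $x^{n+1}_{n+1} < 1$, so that $\sum_{r \in [n]} x^{n+1}_r > 0$, meaning player $(n+1)$ places positive mass on some room $\rho \in [n]$. Since the support of $x^{n+1}$ lies in $BR_{n+1}(x^0)$, room $\rho$ is a best response for player $(n+1)$; by the conditions in (\ref{threshold}) this forces $x^0_\rho \geq 1/3n$. First I would handle the strict case $x^0_\rho > 1/3n$: then (\ref{threshold}) says room $(n+1)$ is \emph{not} a best response for player $(n+1)$, so $x^{n+1}_{n+1} = 0$, contradicting \cref{Claim1}. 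Hence we may assume $x^0_\rho = 1/3n$ for every room $\rho \in [n]$ that is in the support of $x^{n+1}$.

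Next I would extract a quantitative contradiction from the payoff bound. Using \cref{colpayoff}, the payoff of player $0$ at room $(n+1)$ is $Q_{n+1} = -\sum_{a \in [n+1]} x^a_{n+1} \geq -(n+1)$ trivially, but more usefully I want a comparison with the payoff at $\rho$. We have $Q_\rho(x^1,\dots,x^{n+1}) = -\bigl(\sum_{a \in [n]} x^a_\rho + (n+1)^2 x^{n+1}_\rho\bigr)$. By \cref{claim:claim3} this equals $Q_{n+1}$ and both are at most $-1$; in particular $Q_\rho = Q_{n+1}$. The idea is that the coefficient $(n+1)^2$ on $x^{n+1}_\rho$ makes room $\rho$ too "expensive" for player $0$ unless $x^{n+1}_\rho$ is forced down. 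Concretely, I would sum the equalities $Q_r = Q_{n+1}$ over $r$, or better, directly compare: if $x^{n+1}_\rho > 0$ for some $\rho \in [n]$, then picking the $\rho$ maximizing $x^{n+1}_\rho$ we get $x^{n+1}_\rho \geq \frac{1 - x^{n+1}_{n+1}}{n} > 0$, so $Q_\rho \leq -(n+1)^2 x^{n+1}_\rho \leq -(n+1)^2 \cdot \frac{1-x^{n+1}_{n+1}}{n}$. On the other hand, by \cref{Claim1}, $x^{n+1}_{n+1} > 0$, and I want to show $Q_{n+1} > Q_\rho$, which would contradict \cref{claim:claim3}'s equality. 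Since $Q_{n+1} \geq -\sum_{a} x^a_{n+1} \geq -(n+1)$ always, it suffices that $(n+1)^2 \cdot \frac{1-x^{n+1}_{n+1}}{n} > n+1$, i.e.\ that $1 - x^{n+1}_{n+1} > \frac{n}{n+1}$, i.e.\ $x^{n+1}_{n+1} < \frac{1}{n+1}$ — which does not cover the whole range, so this crude bound is insufficient and I will need to be more careful.

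The sharper route — and the step I expect to be the main obstacle — is to use the exact equality $Q_\rho = Q_{n+1}$ together with full support of $x^0$ and the structure of the other players' best responses. Because $x^0$ is full support (\cref{fullsupport}), each room $r \in [n]$ has price $p_r = M(1 - 3n x^0_r) < M$, and for room $\rho$ with $x^0_\rho = 1/3n$ the price is exactly $0$. For players $a \in [n]$, room $\rho$ at price $0$ has payoff $v_a(\rho,0)$, which exceeds $v_a(n+1,\cdot) = \max_r v_a(r,M) + \eta$ only sometimes; I would need to track carefully how much mass the agents $a\in[n]$ put on $\rho$. The cleanest argument: compare $Q_\rho$ and $Q_{n+1}$ directly. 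We have $Q_{n+1} = -\sum_{a\in[n+1]} x^a_{n+1}$ and, since each $x^a$ is a distribution, $\sum_r \sum_a x^a_r = n+1$; combined with $Q_r = Q_{n+1}$ for all $r$ and $\sum_r Q_r = -(n+1)$ (equation (\ref{equation:sum-payoffs}) holding with equality), every $Q_r = -1$ exactly. So $\sum_{a\in[n]} x^a_\rho + (n+1)^2 x^{n+1}_\rho = 1$. If $x^{n+1}_\rho > 0$, then since $(n+1)^2 > n \geq \sum_{a\in[n]} x^a_\rho$ would need $x^{n+1}_\rho$ tiny; precisely $x^{n+1}_\rho = \frac{1 - \sum_{a\in[n]}x^a_\rho}{(n+1)^2} \leq \frac{1}{(n+1)^2}$. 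Summing over the (at most $n$) rooms in $[n]$ with $x^{n+1}_\rho>0$ gives $\sum_{\rho\in[n]} x^{n+1}_\rho \leq \frac{n}{(n+1)^2} < 1$; but I want this to equal $1 - x^{n+1}_{n+1}$ and I want to conclude $x^{n+1}_{n+1} = 1$. This still only forces $x^{n+1}_{n+1} \geq 1 - \frac{n}{(n+1)^2}$, not exactly $1$. To close the gap I would then re-examine: if $x^{n+1}_{n+1} < 1$ then $x^{n+1}_\rho > 0$ for some $\rho\in[n]$, forcing (via (\ref{threshold})) $x^0_\rho = 1/3n$ hence $p_\rho = 0$; but with $p_\rho = 0$ every agent $a \in [n]$ strictly prefers some room (room $\rho$ is among the cheapest), and by a counting/pigeonhole argument on the now doubly-stochastic-to-be matrix $(x^1,\dots,x^n)$ one shows $\sum_{a\in[n]} x^a_\rho$ is bounded away from $1$ in a way incompatible with $(n+1)^2 x^{n+1}_\rho = 1 - \sum_{a\in[n]} x^a_\rho > 0$ being achievable alongside $x^{n+1}$ being a genuine best response. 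I expect the delicate part is precisely reconciling the "$=1$" claim with these inequalities, and I anticipate the intended proof instead argues: if $x^{n+1}_{n+1}<1$, room $(n+1)$ and some room $\rho\in[n]$ are \emph{both} best responses for player $(n+1)$, so by (\ref{threshold}) $x^0_\rho = 1/3n$; then player $0$'s payoff at $\rho$ is $Q_\rho = -(\sum_{a\in[n]}x^a_\rho + (n+1)^2 x^{n+1}_\rho) \le -(n+1)^2 x^{n+1}_\rho$, while at action $(n+1)$ it is $Q_{n+1} \ge -(n+1)$ always; for these to be equal (as \cref{claim:claim3} demands) we need $x^{n+1}_\rho \le \frac{1}{n+1}$, and I would push this bound — applied simultaneously to all such $\rho$ — to force $\sum_{\rho\in[n]}x^{n+1}_\rho$ below $1$, then bootstrap using \cref{Claim1} and the exact value $Q_r=-1$ to derive the clean contradiction. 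Filling in this last bootstrap carefully is the crux.
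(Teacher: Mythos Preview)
Your setup is correct: assuming $x^{n+1}_{n+1}<1$, you find a room $\rho\in[n]$ in the support of $x^{n+1}$, and since both $\rho$ and room $n+1$ are then best responses for player $n+1$, conditions~(\ref{threshold}) force $x^0_\rho = 1/3n$, i.e.\ $p_\rho=0$. But from here you go astray, and the various quantitative bounds you attempt do not close.

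The genuine gap is a single misread of the construction. You write that $v_a(\rho,0)$ exceeds $v_a(n+1,\cdot)=\max_{r\in[n]} v_a(r,M)+\eta$ ``only sometimes.'' In fact the parameter $\eta$ is \emph{defined} precisely so that $\max_{r\in[n]} v_a(r,M)+\eta < \min_{r'\in[n]} v_a(r',0)$ for every agent $a\in[n]$. Hence once $p_\rho=0$, room $\rho$ strictly beats room $n+1$ for \emph{every} $a\in[n]$, so room $n+1$ is not a best response for any such player and $x^a_{n+1}=0$ for all $a\in[n]$. Now \cref{colpayoff} gives $Q_{n+1}(x^1,\ldots,x^{n+1}) = -\sum_{a\in[n+1]} x^a_{n+1} = -x^{n+1}_{n+1} > -1$, contradicting \cref{claim:claim3}. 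That is the entire argument in the paper; no bootstrapping on the $(n+1)^2$ coefficient is needed.

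A secondary issue: your attempt to assert $\sum_r Q_r = -(n+1)$ with equality (and hence $Q_r=-1$ for all $r$) is unjustified at that stage. The bound in~(\ref{equation:sum-payoffs}) is an inequality, tight only when $x^{n+1}_r=0$ for all $r\in[n]$, which is exactly the conclusion you are trying to establish.
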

\begin{proof}
	
	
	We know (by \cref{Claim1}) that $x^{n+1}_{n+1} > 0$ and, hence, action $(n+1)$ is contained in Supp$(x^{n+1})$. Say, for contradiction, that $x^{n+1}_{n+1} < 1$. This implies that there exists another room $r \in [n] \cap \text{Supp}(x^{n+1})$. Hence, player $(n+1)$ gets the same payoff from the rooms $r$ and $(n+1)$. By construction of the payoff matrices, this is possible only if $x^0_r = 1/3n$, which implies that the associated price $p_r$ of room $r$ is equal to $0$. Now, comparing the expected payoffs for rooms $(n+1)$ and $r$ for every other player $a \in [n]$, we get the following inequality $v_a(r,0) > v_a(n+1,M(1 - 3n \ x^0_{n+1})) = \max_{r \in [n]} v_a(r,M) + \eta$.
	
	Therefore, room $(n+1)$ is not a best response for any player $a \in [n]$, and we must have $x^a_{n+1} =0$ for all $a \in [n]$. Note that in this case the expected payoff of player $0$ for action $(n+1)$ is exactly equal to $-x^{n+1}_{n+1}$ which, by assumption, is strictly greater than $-1$. This contradicts \cref{claim:claim3} and, by way of contradiction, the stated claim follows. 
\end{proof}

Note that the above claim implies that at any equilibrium, player $(n+1)$ plays the action corresponding to any room $r \in [n]$ with zero probability i.e. $x^{n+1}_r = 0$ for all rooms $r \in [n]$. The final claim establishes that the probability mass on the remaining rooms $r \in [n]$ is exactly equal to one. 
\begin{claim}
	\label{claim:column-sum}
	Let $(x^0, x^1, \ldots ,x^n)$ be a Nash equilibrium of the game $\G$. Then, for all rooms $r \in [n]$, we have $ \sum_{a \in [n]} x^a_r  =1$.
\end{claim}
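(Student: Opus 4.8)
The plan is to combine \cref{claim:claim4} (which pins down player $(n+1)$'s strategy) with \cref{fullsupport} and \cref{claim:claim3} (which force player $0$ to be indifferent across all $n+1$ rooms), and then extract the claim from a simple double-counting argument using the fact that each $x^a$ is a probability vector.

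First I would record the consequence of \cref{claim:claim4}: since $x^{n+1}_{n+1}=1$ and $x^{n+1}\in\Delta^n$, we have $x^{n+1}_r=0$ for every room $r\in[n]$. Substituting this into the closed form \cref{colpayoff} for player $0$'s expected payoffs gives $Q_r(x^1,\dots,x^{n+1}) = -\sum_{a\in[n]} x^a_r$ for each $r\in[n]$, and $Q_{n+1}(x^1,\dots,x^{n+1}) = -\bigl(\sum_{a\in[n]} x^a_{n+1} + 1\bigr)$. Next, by \cref{fullsupport} the equilibrium strategy $x^0$ has full support, so \cref{claim:claim3} tells us all $n+1$ of these payoffs are equal to a common value; write it as $-c$ (with $c\ge 1$, though we will in fact deduce $c=1$). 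Hence $\sum_{a\in[n]} x^a_r = c$ for each $r\in[n]$, and $\sum_{a\in[n]} x^a_{n+1} = c-1$.

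Then I would sum over all $n+1$ rooms and count two ways. On one hand, $\sum_{r\in[n+1]}\sum_{a\in[n]} x^a_r = \sum_{a\in[n]}\sum_{r\in[n+1]} x^a_r = n$, since each $x^a$ is a distribution over the $n+1$ actions. On the other hand, using the per-room values from the previous step, the same double sum equals $\sum_{r\in[n]} c + (c-1) = (n+1)c - 1$. Equating the two expressions yields $(n+1)c - 1 = n$, so $c=1$, and therefore $\sum_{a\in[n]} x^a_r = 1$ for every room $r\in[n]$, which is exactly the assertion of the claim (and, as a byproduct, $\sum_{a\in[n]} x^a_{n+1}=0$, consistent with the rest of the reduction).

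I do not expect a genuine obstacle here: the substantive work has already been carried out in \cref{Claim1}--\cref{claim:claim4}, which force player $(n+1)$ entirely onto the gadget room and force player $0$ to full support. The only thing to be careful about is the bookkeeping involving the extra player/room $(n+1)$—specifically, remembering that the $(n+1)^2$ coefficient in \cref{colpayoff} drops out precisely because $x^{n+1}_r=0$ on the original rooms—after which the argument is a short counting identity.
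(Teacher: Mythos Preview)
Your proposal is correct and uses essentially the same ingredients as the paper's proof: \cref{claim:claim4} to zero out the $(n+1)^2$ term in \cref{colpayoff}, \cref{claim:claim3} to equate all of player $0$'s action payoffs, and a double-counting of $\sum_{a\in[n]}\sum_{r\in[n+1]} x^a_r=n$. The only organizational difference is that the paper first establishes $x^a_{n+1}=0$ for $a\in[n]$ via a short contradiction and then reads off the column sums, whereas you parametrize the common payoff by $-c$ and solve $(n+1)c-1=n$ directly to get $c=1$; both arrive at the same conclusion by the same arithmetic.
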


\begin{proof}
	We begin by proving that no player $a \in [n]$ considers action $(n+1)$ as a best response to $x^0$ at a Nash equilibrium. Note that if a player $a \in [n]$ plays action $(n+1)$ with nonzero probability ($x^a_{n+1} >0$), then $Q_{n+1}(x^1,x^2,...,x^{n+1}) < -1$; see \cref{claim:claim4} and \cref{colpayoff}.  
	
	Since player $0$'s expected payoff is the same for every room (\cref{claim:claim3}), it must be the case that $Q_r(x^1,x^2,...,x^{n+1}) < -1 $ for all $r$. 
	Adding up these inequalities we get $\sum_{r \in [n+1]} Q_r(x^1,x^2,...,x^{n+1}) < -(n+1)$.
	
	However, summing up the expected payoff of player $0$ across all actions (and substituting $x^{n+1}_r = 0$ for all $r \in [n]$), gives the following equality
	\begin{align}
	\label{equality:sumofq}
	\sum_{r \in [n+1]} Q_r(x^1,x^2, \ldots, x^{n+1}) = -\left( \sum_{r \in [n]} \sum_{a \in [n+1]} x^a_r \right) - \sum_{a \in [n+1]} x^a_{n+1} = -(n+1).
	\end{align}
	This leads to a contradiction. Therefore, it must be the case that $x^a_{n+1} = 0$ for all $a \in [n]$. These equalities, \cref{claim:claim4}, and  \cref{equality:sumofq} gives us $ \sum_{a \in [n]} \sum_{r\in [n]}  x^a_r  = n$.     
	
	Finally, the equality of payoffs for player $0$ (in particular, $Q_r(x^1,\ldots, x^{n+1}) = Q_{r'}(x^1, \ldots, x^{n+1})$ for all $r, r' \in [n]$) establishes the desired claim $ \sum_{a \in [n]} x^a_r  =1$.
\end{proof}


Using these claims, we will now prove \cref{ppadreduction}. Specifically, we will show that, given a Nash equilibrium $(x^0, x^1, \ldots ,x^n)$ of the game, one can efficiently find an allocation $\pi$ for players $1$ to $n$ (i.e., for the $n$ agents of the underlying rent-division instance) and a price vector $p \in \mathbb{R}_+^n$ (from the mixed strategy $x^0$ of player $0$) such that $(\pi,p)$ is an $\EF$ solution of rent-division instance $\mL$.

For each room $r \in [n]$, set $p_r= M(1-3n \ x^0_r)$. The resulting vector $p \in \mathbb{R}^n$ is the price vector. Let $X$ be the $n \times n$ matrix in which the $n$ rows are equal to the vectors (mixed strategies) $x^1, x^2,\ldots, x^n$. For all $a \in [n]$ we have $\sum_{r \in [n]} x^a_r =1$ as $x^a \in \Delta^n$ and $x^a_{n+1} = 0$. Furthermore, by \cref{claim:column-sum}, we know that $ \sum_{a \in [n]} x^a_r  =1$ for all rooms $r \in [n] $. Therefore, $X$ is a doubly stochastic matrix. Write $\pi$ to denote a permutation in the support of the doubly stochastic matrix $X$.

Note that for each agent $a \in [n]$ we have $x^a_{\pi(a)} > 0$. Hence, action/room $\pi(a)$ is a best response of player $a$ against mixed strategy $x^0$ of player $0$. The payoffs and the best-response property ensures that the following inequalities hold for all rooms $r \in [n]$: $v_a(\pi(a), M(1-3n \ x^0_{\pi(a)})) \geq v_a(r, M(1-3n \ x^0_r))$. The definition of the price vector $p$ gives us envy freeness: $ v_a(\pi(a),p_{\pi(a)}) \geq  v_a(r,p_r) $ for all $a$ and $r$.

Note that in any Nash equilibrium of the constructed game, action $(n+1)$ is a best response of player $(n+1)$ (\cref{claim:claim4}). Therefore, conditions \cref{threshold} ensure that for any room $r\in [n]$ we have $x^0_r \leq 1/3n$, i.e., the price $p_r = M(1-3n \ x^0_r)$ of any room $r \in [n]$ is nonnegative.  

This completes the proof that, given a Nash equilibrium of the polymatrix game $\G$, an $\EF$ solution $(\pi, p)$ of rent-division instance $\mL$ can be computed efficiently.  


\subsubsection{Piecewise Linear to Linear}

The following lemma provides a reduction from $\plc$ to $\linear$. Given a rent-division instance $\I$ of $\plc$ we construct an instance $\mathcal{L}$ of $\linear$ by considering the linear part of the utility functions of $\I$ beyond the last breakpoint. We further show that an envy-free solution of $\mathcal{L}$ can be efficiently modified to produce an envy-free solution for $\I$.

\begin{lemma} 
	\label{plc-to-linear}
	Any instance $\I$ of $\plc$ can be efficiently reduced to an instance $\mL$ of $\linear$ such that, given an $\EF$ solution of $\mL$, we can find an $\EF$ solution of $\I$ in polynomial time. 
\end{lemma}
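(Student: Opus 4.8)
The plan is to construct $\mathcal{L}$ as the shifted ``linear tail'' of $\I$. First I would compute a threshold $M \in \mathbb{R}_+$ that is at least as large as \emph{every} breakpoint of \emph{every} utility function $v_a(r,\cdot)$ of $\I$; since all breakpoints are part of the input, $M$ can be taken to be one plus their maximum, which has polynomial bit-complexity. Then for every $x \ge M$ each $v_a(r,\cdot)$ coincides with its last linear piece, so for all $x \ge 0$ we have $v_a(r, M + x) = v_a(r, M) - \lambda^a_r\, x$, where $\lambda^a_r > 0$ denotes the magnitude of the last slope of $v_a(r,\cdot)$. I would then define the $\linear$ instance $\mathcal{L} := \langle \A, \R, \{\widetilde v_a(r,\cdot)\}_{a,r}\rangle$ by $\widetilde v_a(r, x) := v_a(r, M + x) = H^a_r - \lambda^a_r\, x$ with $H^a_r := v_a(r, M)$. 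Each $\widetilde v_a(r,\cdot)$ is linear and monotone decreasing (its slope $-\lambda^a_r < 0$), so $\mathcal{L}$ is a legitimate instance of $\linear$, and it is plainly computable from $\I$ in polynomial time.

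The second ingredient is the map taking a solution of $\mathcal{L}$ back to one of $\I$: given any $\EF$ solution $(\sigma, q)$ of $\mathcal{L}$ (so $q \ge \mathbb{0}$), I claim $(\sigma,\, q + M\mathbb{1})$ is an $\EF$ solution of $\I$, and it has a nonnegative price vector since $q + M\mathbb{1} \ge M\mathbb{1} \ge \mathbb{0}$. This is a one-line verification: for every agent $a$ and room $r$, the identity $\widetilde v_a(r, y) = v_a(r, M+y)$ (valid for $y \ge 0$) turns the envy-freeness inequality $\widetilde v_a(\sigma(a), q_{\sigma(a)}) \ge \widetilde v_a(r, q_r)$ into $v_a(\sigma(a),\, M + q_{\sigma(a)}) \ge v_a(r,\, M + q_r)$, which is exactly envy-freeness of $(\sigma,\, q + M\mathbb{1})$ for $\I$. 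This back-map is clearly polynomial time, so composing the two directions yields the required reduction from $\plc$ to $\linear$; and since the preceding subsection shows $\linear$ is total (via its reduction to polymatrix Nash), $\mathcal{L}$ always admits such a solution $(\sigma, q)$, so the reduction is well-defined (and, as a byproduct, $\plc$ is total).

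I do not expect a genuine obstacle here: the entire content of the lemma is the observation that, beyond its last breakpoint, a piecewise-linear utility \emph{is} linear, so the ``hard'' part of $\plc$ is already captured by $\linear$ after one global price shift. The only point that needs a little care is that $M$ must simultaneously dominate \emph{all} breakpoints of \emph{all} pairs $(a,r)$ rather than a per-pair ``last'' breakpoint, since we use a single uniform shift $+M\mathbb{1}$ and need $\widetilde v_a(r,\cdot) = v_a(r, M + \cdot)$ to hold for every $(a,r)$ at once. Should the formal definition of $\linear$ require nonnegative base values $H^a_r$ (the construction above may make them negative for large $M$), I would simply add a common large constant $K$ to every $H^a_r$; this shifts all utilities by the same amount, hence preserves and reflects all envy-freeness inequalities, and leaves both directions of the reduction intact.
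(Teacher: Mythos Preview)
Your proposal is correct and essentially identical to the paper's own proof: the paper also takes a threshold $K$ equal to the last breakpoint over all utilities, defines the linear instance via $v'_a(r,z) := v_a(r,K) - \lambda^a_r z$, and recovers an $\EF$ solution of $\I$ by shifting the prices of an $\EF$ solution of $\mL$ by $K$. The only cosmetic difference is that you take $M$ to be one plus the maximum breakpoint rather than the maximum itself, which is immaterial.
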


\begin{proof}
	Let $ \I = \langle \A, \R, \{v_a(r, \cdot) \}_{a, r}  \rangle$ be an instance of $\plc$. Write $K \in \mathbb{R}_+$ denote the last breakpoint among all the utility functions $\{v_a(r,.)\}_{a, r}$. That is, each utility function $v_a(r, z)$ is linear for prices $z \geq K$ with a fixed slope, say, $\lambda^a_r$.
	
	We construct an instance $ \mathcal{L} =  \langle \A, \R, \{v'_a(r, \cdot) \}_{a, r}  \rangle$ of $\linear$. Here, each utility function $v'_a(r,z)$ is of the  form $v'_a(r,z) := H^a_r - \lambda^a_r z$ with base value $H^a_r = v_a(r,K)$ (i.e., the base value is equal to the utility at price $K$).
	
	
	Let $(\pi,p')$ be an envy-free solution of $\mathcal{L}$ with $p'_r \ge 0$ for all rooms $r \in \R$. Note that finding such a solution is a $\PPAD$ problem. Construct a price vector $p \in \mathbb{R}^n$ by setting  $p_r \coloneqq  p'_r+K$ for all rooms $r \in \R$. We claim that $(\pi,p)$ is an envy-free solution of the given instance $\I$. Note that the utility functions $\{v_a(r, \cdot)\}_{a, r}$ of instance $\I$ satisfy 
	\begin{align*}
	v_a(r, p_r) & = v_a(r, K+p'_r) \\ 
	& =  v_a(r,K) - \lambda^a_r \ p'_r \\
	& = H^a_r - \lambda^a_r \ p'_r \\
	& = v'_a(r,p'_r)
	\end{align*}
	Therefore, envy-freeness of $(\pi,p')$ for the constructed instance $\mathcal{L}$ directly implies envy-freeness of $(\pi,p)$ for the given instance $\I$.
	
\end{proof}

\cref{ppadreduction} and \cref{plc-to-linear} establish the main result of this section:

\begin{lemma}
	$\plc$ lies in the complexity class $\PPAD$.
\end{lemma}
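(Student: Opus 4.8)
The plan is to obtain the $\PPAD$ containment by simply composing the two reductions already established: the reduction from $\plc$ to $\linear$ (\cref{plc-to-linear}) and the reduction from $\linear$ to exact Nash equilibrium computation in polymatrix games, whose decoding step is \cref{ppadreduction}. Since computing an exact Nash equilibrium in polymatrix games is $\PPAD$-complete~\cite{daskalakis2006game}, and $\PPAD$ is closed under polynomial-time reductions, it suffices to exhibit a polynomial-time reduction from $\plc$ to polymatrix Nash in which a witness for the latter yields a witness for the former.

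Concretely, I would proceed as follows. Given a $\plc$ instance $\I$, first apply \cref{plc-to-linear} to build in polynomial time the $\linear$ instance $\mL$ (retaining the linear tail of each utility beyond the last breakpoint $K$ and setting base values to the utilities at $K$). Second, from $\mL$ construct the polymatrix game $\G$ with $n+2$ players and $n+1$ actions described in \cref{subsection:ppad}, including the gadget player $(n+1)$, the gadget room $(n+1)$, the landlord (player $0$), and the payoff matrices $P^a, Q^a$. Third, invoke the $\PPAD$ oracle to obtain a Nash equilibrium $(x^0, x^1, \ldots, x^{n+1})$ of $\G$; such an equilibrium exists by Nash's theorem. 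Fourth, apply \cref{ppadreduction}: the structural facts in \cref{Claim1,fullsupport,claim:claim3,claim:claim4,claim:column-sum} guarantee that $x^0$ has full support, that the $n \times n$ matrix with rows $x^1,\ldots,x^n$ is doubly stochastic, and that taking any permutation $\pi$ in its support together with prices $p_r := M(1 - 3n\,x^0_r)$ produces a nonnegative envy-free solution $(\pi,p)$ of $\mL$. Finally, lift $(\pi,p)$ to a solution of $\I$ via the recovery step of \cref{plc-to-linear}, namely $p_r \mapsto p_r + K$. Each step runs in time polynomial in the input size and each recovers a valid solution of the preceding problem, so the composition is a legitimate polynomial-time reduction.

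A point worth verifying along the way is that totality is preserved throughout the chain, since membership in $\PPAD$ requires that every instance have a witness: $\plc$ is total by the existence guarantees recalled in the introduction (Sun--Yang, Alkan et al.), $\linear$ inherits totality from these, and polymatrix Nash is total by Nash's theorem. I expect the only genuinely subtle ingredient to be the one we are entitled to assume here, namely that the gadget construction forces the intended behavior at \emph{every} Nash equilibrium of $\G$ (player $0$ driven to full support, the gadget player pinned to the extra room), so that \emph{any} equilibrium decodes to an envy-free solution rather than merely some equilibrium; modulo \cref{ppadreduction} and \cref{plc-to-linear} this lemma is pure bookkeeping — the composition of two reductions plus closure of $\PPAD$ under them.
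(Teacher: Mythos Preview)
Your proposal is correct and matches the paper's approach exactly: the paper states this lemma immediately after \cref{plc-to-linear} with the one-line justification that ``\cref{ppadreduction} and \cref{plc-to-linear} establish the main result of this section,'' and your write-up is precisely the (spelled-out) composition of those two reductions together with the $\PPAD$-completeness of polymatrix Nash.
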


\subsection{Proof of PLS Containment}
\label{subsection:pls}
The class \PLS (Polynomial Local Search) was defined by Johnson et al.~\cite{johnson1988easy} to capture the complexity of finding local optima of optimization problems. Here, a generic instance $\I$ of an optimization problem has a corresponding finite set of solutions $S(\I)$ and a potential $c(s)$ associated with each solution $s \in S(\I)$. The objective is to find a solution that maximizes (or minimizes) this potential. In the local search version of the problem, each solution $s \in S(\I)$ additionally has a well-defined neighborhood $N(s) \in 2^{S(\I)}$ and the objective is to find a local optimum, i.e., a solution $s\in S(\I)$ such that no solution in its neighborhood $N(s)$ has a higher potential.

\begin{definition}[$\PLS$]
	\label{definition:pls}
	Consider an optimization problem $\mathcal{X}$, and for all input instances $\I$ of $\mathcal{X}$ let $S(\mathcal{I})$ denote the finite set of feasible solutions for this instance, $N(s)$ be the neighborhood of a solution $s \in S(\I)$,  and $c(s)$ be the potential of solution $s$. The desired output is a local optimum with respect to the potential function. 
	
	Specifically, $\mathcal{X}$ is a polynomial local search problem (i.e., $\mathcal{X} \in \PLS$) if all solutions are bounded in the size of the input $\mathcal{I}$ and there exists polynomial-time algorithms $\mathcal{A}_1$, $\mathcal{A}_2$, and $\mathcal{A}_3$ such that:
	\begin{enumerate}[(a)]
		\item $\mathcal{A}_1$ tests whether the input $\mathcal{I}$ is a legitimate instance of $\mathcal{X}$ and if yes, outputs a solution $s_{\text{initial}} \in S(\mathcal{I})$.
		\item $\mathcal{A}_2$ takes as input instance $\mathcal{I}$ and candidate solution $s$, tests if $s \in S(\mathcal{I})$ and if yes, computes $c(s)$.
		\item $\mathcal{A}_3$ takes as input instance $\mathcal{I}$ and candidate solution $s$, tests if $s$ is a local optimum and if not, outputs $s' \in N(s)$ such that $c(s') > c(s)$ (the inequality is reversed for the minimization version).
	\end{enumerate}
\end{definition}

Each $\PLS$ problem comes with an associated local search algorithm that is implicitly described by the three algorithms mentioned above. The first algorithm is used to find an initial solution to the problem and the third algorithm is iteratively used to find a potential-improving neighbor until a local optimum is reached. 


Fair rent division does not directly qualify as a local search problem, since there is no apparent potential and \emph{any} $\EF$ allocation with an appropriate price vector is an acceptable solution. However, considering the execution of \cref{algorithm:envy-free} on an instance $\I = \langle \mathcal{A}, \mathcal{R}, \{ v_a(r, \cdot) \}_{a, r} \  \rangle$ of fair rent division we will identify a local search method. 

\cref{algorithm:envy-free} constructs an instance $\widehat{\mathcal{I}} = \left\langle \mathcal{A}, \mathcal{R}, \{ \widehat{v}_a(r, \cdot) \}_{a,r} \right\rangle$ 
from the given instance $\I$. Specifically, using \cref{equation:bar-to-hat} and a sufficiently large $M \in \mathbb{R}_+$, in the algorithm the utility functions of instance $\widehat{\mathcal{I}}$ are set to be quasilinear beyond $M$. Recall that an $\EF$ solution of $\widehat{\mathcal{I}}$ in which all the prices are less than $M$ is an $\EF$ solution of $\I$, hence we will focus on the constructed instance $\widehat{\mathcal{I}}$ for the rest of this subsection and show that finding such a solution is a $\PLS$ problem. Specifically, we will develop a potential to interpret the execution of \cref{algorithm:envy-free} on $\widehat{\mathcal{I}}$ as a local search algorithm. Here, algorithm $\mathcal{A}_1$ (see \cref{definition:pls}) can be directly implemented, since we can compute an initial $\EF$ solution of instance $\widehat{\mathcal{I}}$ in polynomial time for the quasilinear utilities beyond $M$.



The solution space of any $\PLS$ problem has to be finite. This rules out specifying $\EF$ solutions in the standard form $(\pi,p)$, since the space of such solutions is continuous. However, an alternate way to specify envy-free solutions is through a tuple $\tup$ where $\pi$ is an allocation of the rooms to the agents and $(L,U)$ is a linear domain of $\widehat{\I}$. For any linear domain $(L,U)$, the components of the vectors $L$ and $U$ are the breakpoints of the modified utility functions, hence the total number of linear domains of instance $\widehat{\I}$ is always finite--specifically, it is $\mathcal{O}((\ell+n)^{2n})$, where $n$ is the number of agents and $\ell$ is the total number of breakpoints of the given instance $\I$. Let $\E\tup$ be a set of vectors $x \in \mathbbm{R}^n$ that satisfy the following linear constraints: (i) $L \leq x \leq U $, (ii) $\widehat{v}_a( \pi(a) , x_{\pi(a)}) \geq \widehat{v}_a(r, x_r) $ for all $a$ and $r$, (iii) $x \geq \mathbb{0}$.


Since $(L,U)$ is a linear domain, for all price vectors $x$ in $(L,U)$ the utility functions $\{ \widehat{v}_a(r, x) \}_{a,r}$ are linear in $x$. Hence, $\E\tup$ is a polytope. The tuple $\tup$ is accepted as a valid solution of the local search problem if the polytope $\E\tup$ is nonempty, a property that is verifiable in polynomial time. Note that every price vector $p \in \E\tup$ can be paired with the allocation $\pi$ to generate an $\EF$ solution of $\widehat{\mathcal{I}}$. In particular, we are interested in the price vector $p\tup$ that is defined to be the solution to the following linear program:
\begin{align*}
\min_{x \in \mathbbm{R}^n} \ \  \sum_r x_r &  \ \  \ \ \  \text{ subject to} \ \  x \in \E(\pi,L,U)
\end{align*}	

Note that it is not necessary that $(L,U)$ is the linear domain of price vector $p\tup$. Recall that $(L,U)$ is the linear domain of price vector $p$ if and only if for all rooms $r$ we have $ L_r < p_r \le U_r$. Therefore, if there exists a room $r$ such that $p_r\tup = L_r$ (which is not ruled out by the LP constraints), then $(L,U)$ is not the linear domain of price vector $p\tup$. 


We define two potential functions, one for the allocations within a linear domain and the other for the linear domains themselves, and then combine these functions in an appropriate manner to generate the potential for the fair rent division problem. \\

\noindent 
{\bf Potential Function for Linear Domains:} The proof of \cref{theorem:ef-rounded} employs a counting argument over the linear pieces of the utility functions $\{ \widehat{v}_a(r, \cdot) \}_{a,r}$ to track the progress of \cref{algorithm:envy-free}. The same idea is used to define a potential function over the linear domains.

To begin with, we rewrite each linear domain $(L,U)$ as an $n$-tuple of integers. For each $a$ and $r$, let $q^a_r \in \mathbb{Z}$ denote the number of linear pieces (i.e., the number of breakpoints minus one) of the function $\widehat{v}_a(r,\cdot)$  between $0$ and $L_r$. For each room $r$, write $q_r$ to denote the quantity $\sum_a q^a_r$. Note that each component $L_r$ of the lower bound is a breakpoint of one of the functions $\{v_a(r, \cdot)\}_{a \in \mathcal{A}}$. Hence, considering all such breakpoints in a sorted  order, we can use the integer $q_r$ to uniquely identify $L_r$ from the sorted list of breakpoints. In addition, observe that a linear domain can be uniquely identified using only its lower bound $L$. This follows from the the maximal property of linear domains; each upper bound $U_r$ is the first breakpoint that appears after $L_r$ among all the functions $\{v_a(r, \cdot)\}_{a \in \mathcal{A}}$. Thus, each linear domain $(L,U)$ can be equivalently represented by an $n$-tuple of integers $(q_1,q_2, \ldots, q_n)$. Let $\ell^a_r$ represent the total number of linear pieces of the piecewise-linear function $\widehat{v}_a(r,\cdot)$ and write $\ell_r \coloneqq \sum_a \ell^a_r$. Using these constructs we define the potential function for linear domains as $\phi_1(L,U) \coloneqq \sum_r (\ell_r - q_r)$.

This potential increases as the lower bounds of the linear domains move closer to $0$. Another relevant property of $\phi_1$  is that its value is always an integer and any increase in $\phi_1$ is at least $1$. \\

\noindent
{\bf Potential Function for Allocations:} Consider a tuple $\tup$ and let $w(\pi)$ denote the weight of the matching $\pi$ in the complete bipartite graph between agents and rooms where the weight of edge $(a,r)$ is the logarithm of the slope of the function $\widehat{v}_a(r,\cdot)$ in the linear domain $(L,U)$. Let $\Lambda^+$ and $\Lambda^-$ be the smallest and largest slopes across all the piecewise linear utility functions $\{ \widehat{v}_a(r, \cdot) \}_{a,r}$. This leads to the following inequalities: $n \log \left(\Lambda^-\right) \leq w(\pi) \leq n \log\left( \Lambda^+ \right)$.

Now, we define the potential functions for the allocations:
\begin{align*}
\phi_2\tup := \frac{w(\pi) - n\log(\Lambda^-)}{n\log(\Lambda^+) - n\log(\Lambda^-) +1}
\end{align*}
The normalization ensures that for all inputs $\tup$, the potential satisfies $\phi_2\tup \in [0,1)$. This property ensures that any improvement in the first potential strictly makes up for any loss in the second potential. Also, note that this potential is strictly increasing as a function of $w(\pi)$. Hence,  \cref{lemma:wts-drop} implies that potential $\phi_2$ can be used to track the progress of \cref{algorithm:envy-free} over consecutive allocations within a linear domain. 

These two potentials are combined to obtain the potential for the local search version of fair rent division.
\begin{align*}
\phi\tup &:= \begin{cases}
\sum_r \ell_r+1  & \text{ if } p_r\tup  = 0 \text{ for some } r\\
\phi_1(L,U)+ \phi_2\tup & \text{otherwise}
\end{cases}
\end{align*}

The solution $\tup$ whose associate envy-free price vector $p\tup$ has a zero component is accorded the maximum possible value of potential. This choice reflects the fact that a solution $(\pi,p\tup)$ with a zero price component is not just only envy free for $\widehat{\I}$, but it is also envy free for $\I$ (a detailed explanation of this fact appears in the proof of \cref{theorem:ef-rounded}). Therefore, every optimal solution of the potential $\phi$ encodes an $\EF$ solution of the rent-division instance $\I$. We now prove that the problem of finding a fair solution lies is $\PLS$ by providing for it a local search algorithm that operates on the potential $\phi$.

\begin{lemma}
	$\plc$ is in $\PLS$.
\end{lemma}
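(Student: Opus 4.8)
The plan is to recast the run of \cref{algorithm:envy-free} on the modified instance $\widehat{\I}$ as a hill‑climbing procedure on the potential $\phi$ and to supply the three algorithms required by \cref{definition:pls}. First I would fix the solution set: given a $\plc$ instance $\I$, the PLS algorithm internally builds $\widehat{\I}$ (and $M$) in polynomial time, and a \emph{solution} is a tuple $\tup$ in which $(L,U)$ is a linear domain of $\widehat{\I}$ and the polytope $\E\tup$ is nonempty. Each such tuple has polynomial description size, nonemptiness of $\E\tup$ is a single LP‑feasibility test, and the number of tuples, $\mathcal{O}((\ell+n)^{2n})$, is finite — so $S(\I)$ satisfies the structural requirements of PLS even though it is exponentially large. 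To each solution I attach the price $p\tup$ minimizing $\sum_r x_r$ over $\E\tup$ and the potential $\phi\tup$ already defined above.

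Next I would describe the three algorithms. $\mathcal{A}_1$ performs the syntactic check that $\I$ is a valid rent‑division instance with continuous, monotone decreasing, piecewise‑linear utilities, constructs $\widehat{\I}$, runs the quasilinear subroutine of \cref{section:quasi} to obtain an $\EF$ price all of whose coordinates are at least $M$ together with its allocation $\pi^0$, and returns the tuple consisting of $\pi^0$ and the linear domain of that price; this is a legitimate solution since $p^0\in\E(\pi^0,L^0,U^0)$. $\mathcal{A}_2$ runs the validity/feasibility checks above and, if they pass, solves the LP to get $p\tup$ and evaluates $\phi\tup$ (which is $\sum_r \ell_r+1$ when $p\tup$ has a zero coordinate, and $\phi_1(L,U)+\phi_2\tup$ otherwise). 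The neighborhood and $\mathcal{A}_3$ encode one step of \cref{algorithm:envy-free}: given $\tup$, compute $p=p\tup$; if $p$ has a zero coordinate, set $N\tup=\{\tup\}$ and declare $\tup$ a local optimum (there is no improving move, since $\phi\tup=\sum_r\ell_r+1$ is the global maximum of $\phi$); otherwise let $(L',U')$ be the linear domain of $p$, let $\pi'$ be the lexicographically first maximum‑weight perfect matching in $\mathcal{F}^w(p)$ with weights read off in $(L',U')$, and set $N\tup=\{(\pi',L',U')\}$. The matching $\pi'$ is well defined because $\pi$ is already a perfect matching in $\mathcal{F}(p)$, $(\pi',L',U')$ is a valid solution because $p\in\E(\pi',L',U')$, and the whole step runs in polynomial time.

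The main obstacle — the only place with real content — is showing this move strictly increases $\phi$, which I would split into the two cases that mirror the analysis of \cref{algorithm:envy-free}. If $(L',U')=(L,U)$, then $\phi_1$ is unchanged, so I must prove $w(\pi')>w(\pi)$; if instead $w(\pi')=w(\pi)$, then $\pi$ is itself a maximum‑weight perfect matching in $\mathcal{F}^w(p)$, and since being in this case forces $L<p$ while the no‑zero‑coordinate assumption gives $p>0$ componentwise, the perturbation lemma (\cref{lemma:perturb}) produces $q\in\E\tup$ with $\sum_r q_r<\sum_r p_r$, contradicting optimality of $p\tup$; hence $w(\pi')>w(\pi)$, and as $\phi_2$ is strictly increasing in $w(\cdot)$ the potential strictly increases. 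If $(L',U')\neq(L,U)$, then $p_r=L_r$ for some room $r$, so the integer index $q_r$ identifying $L_r$ strictly decreases while no other index increases (the LP constraints force prices only downward), hence $\phi_1$ increases by at least $1$; and since $\phi_2\in[0,1)$ by the normalization, $\phi(\pi',L',U')\ge\phi_1(L',U')\ge\phi_1(L,U)+1>\phi_1(L,U)+\phi_2\tup=\phi\tup$. The delicate bookkeeping is precisely this normalization (a unit gain in $\phi_1$ must dominate any loss in $\phi_2$) together with the careful invocation of \cref{lemma:perturb} to rule out equal matching weights.

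Finally I would close the argument: $S(\I)$ being finite, local optima exist, and by the above a solution is a local optimum if and only if its attached price has a zero coordinate. For such a solution, the reasoning in the proof of \cref{theorem:ef-rounded} forces every coordinate of $p\tup$ to be at most $M$, so $\widehat{v}_a(r,\cdot)$ and $v_a(r,\cdot)$ agree at $p\tup$ for all $a,r$, whence $(\pi,p\tup)$ is an $\EF$ solution of the original instance. Combined with the polynomial‑time preprocessing $\I\mapsto\widehat{\I}$, this exhibits $\plc$ as a PLS problem and completes the proof.
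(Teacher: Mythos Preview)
Your proposal is correct and follows essentially the same approach as the paper: both recast the iterations of \cref{algorithm:envy-free} on $\widehat{\I}$ as a local search over tuples $\tup$, use the same potential $\phi=\phi_1+\phi_2$, and handle the two cases (domain unchanged vs.\ domain changed) with the same pair of arguments. The only cosmetic difference is that you collapse the paper's two neighbor moves---$(\sigma,L,U)$ when the domain stays fixed and $(\pi,L',U')$ when it changes---into a single move $(\pi',L',U')$, and that instead of invoking \cref{lemma:wts-drop} you reprove its content directly via \cref{lemma:perturb}; neither change affects the argument.
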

\begin{proof}
	Given instance $\I = \langle \mathcal{A}, \mathcal{R}, \{ v_a(r, \cdot) \}_{a, r} \  \rangle$ of $\plc$, let  $\widehat{\mathcal{I}} = \left\langle \mathcal{A}, \mathcal{R}, \{ \widehat{v}_a(r, \cdot) \}_{a,r} \right\rangle$ be the instance constructed using \cref{equation:bar-to-hat} with a sufficiently large threshold 
	$M$. 
	
	The set of candidate solutions $S(\widehat{\I})$ consists of all the tuples $\tup$ such that $\E\tup$ is non-empty. In this setup, the three algorithms detailed in \cref{definition:pls} can be implemented efficiently. For algorithm $\mathcal{A}_1$, using the fact that the utility functions are quasilinear above $M$, an initial solution can be efficiently computed by using the algorithm of Aragones~\cite{aragones1995derivation}. We can implement algorithm $\mathcal{A}_2$ efficiently as well, since one can verify whether a proposed solution $\tup$ is contained in $S(\widehat{\I})$ in polynomial time; this step entails solving a linear program. In addition, the potential function $\phi$ described above can be efficiently computed for every input $\tup$. 
	
	There are two cases to consider for the neighborhood algorithm $\mathcal{A}_3$. In the first case, there exists room $r$ such that $p_r\tup = 0$. Since the potential of such an input is the maximum possible value, this input is declared as a local optimum. The second complementary case splits into two subcases. \\
	
	\noindent 
	\emph{Subcase 1:} $(L,U)$ is the linear domain of price vector $p\tup$. In this case the algorithm $\mathcal{A}_3$ outputs a solution/neighbor $(\sigma, L,U)$, where $\sigma$ is a maximum weight perfect matching in the graph $\mathcal{F}^w(p\tup)$. Note that $\sigma$ can be computed deterministically in polynomial time. The tuple $(\sigma, L,U)$ is a legitimate solution (i.e., $(\sigma, L,U) \in S(\widehat{\mathcal{I}})$) since the polytope $\E(\sigma, L,U)$ is nonempty--the price vector $p\tup$, in particular, belongs to it. Next will show that $(\sigma, L,U)$ is a solution with larger potential: since $(L,U)$ is the linear domain of $p\tup$, one can invoke \cref{lemma:wts-drop} to prove that $w(\sigma) > w(\pi)$. This implies that $ \phi_2(\sigma, L,U) > \phi_2\tup$. Therefore, $\phi(\sigma, L,U) > \phi\tup$. Overall, for solutions $p\tup$ satisfying this case algorithm $\mathcal{A}_3$ finds a neighbor with a higher potential. \\
	
	\noindent 
	\emph{Subcase 2:} $(L,U)$ is not the linear domain of price vector $p\tup$. Write $(L', U')$ to denote the linear domain $p\tup$. Here, the neighbor computed by algorithm $\mathcal{A}_3$ is $(\pi, L',U')$. This is a valid solution (i.e., $(\pi, L', U') \in S(\widehat{\mathcal{I}})$), since price vector $p\tup$ belongs to the polytope $\E(\pi, L', U')$, making the polytope nonempty.  
	
	The potential $\phi$ increases at this neighboring solution. Specifically, let the $n$-tuples $(q_1,q_2,...,q_n)$ and $(q_1',q_2',...,q_n')$ represent the linear domains $(L,U)$ and $(L',U')$, respectively. Since  $L \leq p\tup \leq U$, we have $L'_i \leq L_i$ and, hence, $q_i' \leq q_i$ for all $i \in [n]$. Furthermore, for $p\tup$ to not be contained in the linear domain $(L,U)$, there exists at least one room $r$ such that $p_r\tup = L_r$ implying that $L_r' < L_r$, i.e., $q_r' =q_r -1$. Therefore, comparing the tuples representing the linear domains $(L,U)$ and $(L',U')$, we get $\sum_r q'_r +1 \leq  \sum_r q_r$. This gives us the following inequality between the potential values of the linear domains $\phi_1(L',U') \geq \phi_1(L,U) +1$.
	
	With a change in linear domains, the associated slopes (and the weight $w$ of the allocation) of the utilities change as well. However, we know that the value of the potential $\phi_2$ (which depends on $w$) is always within $[0,1)$. Hence, we have $\phi_2(\pi, L',U') > \phi_2\tup -1$. The above mentioned inequalities give us the desired increase in potential $\phi(\pi, L',U') > \phi\tup$.
\end{proof}

\bibliographystyle{alpha}
\bibliography{references}
\appendix
\section{Rent Division with Quasilinear Utilities} 
\label{section:quasi}
The work of Aragones~\cite{aragones1995derivation} provides a characterization of the allocations that constitute envy-free solutions under quasilinear utilities and, using this characterization,  develops an efficient algorithm for finding fair solutions for the quasilinear setting. For completeness, this section details this characterization and presents a polynomial-time algorithm for finding envy-free solutions in this setting. 

Consider an instance $\mathcal{Q} =  \langle \mathcal{A}, \mathcal{R}, \{ v_a(r, \cdot) \}_{a\in \mathcal{A}, r \in \mathcal{R}} \  \rangle$ wherein $|\mathcal{A}| = |\mathcal{R}| =n$ and each utility function $v_a(r, \cdot)$ is quasilinear, i.e., is of the form $ v_a(r,x) = B^a_r - x$. Here, $B^a_r$ is agent $a$'s base value for the room $r$. In other words, agent $a$'s utility for room $r$ is the base value $B^a_r$ minus the price/rent of room $r$. In addition, write $\mathcal{G}(\mathcal{Q}) = (\mathcal{A} \cup \mathcal{R}, \mathcal{A} \times \mathcal{R})$ to denote the complete bipartite graph (between the agents and the rooms) in which the weight of each edge $(a,r)$ is equal to $B^a_r$. 

The following lemma asserts that allocations that constitute envy-free solutions of $\mathcal{Q}$  correspond to maximum weight perfect matchings in $\mathcal{G}(\mathcal{Q})$.


\begin{lemma}
	\label{lemma:ef-quasi}
	For any rent-division instance $\mathcal{Q} =  \langle \mathcal{A}, \mathcal{R}, \{ v_a(r, \cdot) \}_{a , r }  \rangle$ with quasilinear utilities (i.e., $ v_a(r,x) = B^a_r - x$ for all $a$ and $r$), an allocation $\pi$ can be coupled with a price vector to realize an envy-free solution of $\mathcal{Q}$ if and only if  $\pi$ is a maximum weight perfect matching in the weighted graph $\mathcal{G}(\mathcal{Q})$. 
\end{lemma}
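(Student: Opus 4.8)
The plan is to prove the two directions separately: the forward implication ($\pi$ extends to an envy-free solution $\Rightarrow$ $\pi$ is a maximum-weight perfect matching) follows from a short averaging argument, while the reverse implication rests on linear-programming duality for the assignment problem.

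For the forward direction, suppose $(\pi, p)$ is envy free and write $u_a := v_a(\pi(a), p_{\pi(a)}) = B^a_{\pi(a)} - p_{\pi(a)}$ for the utility of agent $a$. Let $\sigma$ be any other perfect matching in $\mathcal{G}(\mathcal{Q})$. Envy freeness gives $B^a_{\sigma(a)} - p_{\sigma(a)} = v_a(\sigma(a), p_{\sigma(a)}) \le u_a$ for every agent $a$. Summing over $a$ and using that $\pi$ and $\sigma$ are both bijections onto $\mathcal{R}$ (so $\sum_a p_{\sigma(a)} = \sum_r p_r = \sum_a p_{\pi(a)}$), we obtain $\sum_a B^a_{\sigma(a)} \le \sum_a u_a + \sum_r p_r = \sum_a (u_a + p_{\pi(a)}) = \sum_a B^a_{\pi(a)}$. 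Hence $\pi$ has maximum weight among perfect matchings.

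For the reverse direction, consider the assignment LP $\max \sum_{a,r} B^a_r y_{a,r}$ over the bipartite perfect-matching polytope (which is integral), together with its dual: minimize $\sum_a u_a + \sum_r p_r$ subject to $u_a + p_r \ge B^a_r$ for all $(a,r)$. If $\pi$ is a maximum-weight perfect matching, then its indicator is an optimal primal solution; take any optimal dual solution $(u,p)$. Complementary slackness forces $u_a + p_{\pi(a)} = B^a_{\pi(a)}$, i.e. $u_a = B^a_{\pi(a)} - p_{\pi(a)}$, while dual feasibility gives $u_a \ge B^a_r - p_r$ for every room $r$; together these say exactly that $(\pi, p)$ is envy free. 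To respect the standing nonnegativity convention on rents, observe that adding a common constant $c$ to every $p_r$ preserves each inequality $B^a_{\pi(a)} - p_{\pi(a)} \ge B^a_r - p_r$, since the $c$'s cancel; choosing $c := \max_r (-p_r)$ yields a nonnegative price vector that still couples with $\pi$ to form an envy-free solution.

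I expect the only real content to be the duality step; the averaging argument and the nonnegativity shift are routine. A self-contained alternative avoiding strong duality would be to run the Hungarian/Bellman–Ford potential method directly on $\mathcal{G}(\mathcal{Q})$ and read off the prices $p_r$ as node potentials certifying optimality of $\pi$, but invoking integrality of the assignment polytope together with LP duality is the cleanest presentation. Note that this reverse direction is precisely what justifies Step~\ref{step:hat-instance} of $\ALG$, where an envy-free solution for the quasilinear ends is computed (via a maximum-weight perfect matching plus dual prices) and then shifted by $M$.
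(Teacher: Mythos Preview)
Your forward direction (envy free $\Rightarrow$ maximum weight) coincides with the paper's ``only if'' argument essentially verbatim. For the reverse direction, however, the paper takes a different route: rather than invoking LP duality, it \emph{assumes} the existence of some envy-free solution $(\pi',p)$---citing the general existence results and remarking in passing that LP duality would give a stand-alone proof---and then shows that this same price vector $p$ works for \emph{any} maximum-weight perfect matching $\sigma$. The argument is the mirror image of the forward averaging: envy freeness of $(\pi',p)$ gives the termwise inequalities $v_a(\pi'(a),p_{\pi'(a)})\ge v_a(\sigma(a),p_{\sigma(a)})$, while maximality of $\sigma$ forces the reverse inequality on the sums, so every term is an equality and $(\sigma,p)$ is envy free. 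Your duality-based proof is self-contained and constructive (it actually manufactures the prices from the dual), whereas the paper's is more elementary but leans on an external existence result; as a bonus, the paper's version yields the slightly stronger statement that all maximum-weight matchings share the \emph{same} envy-free price vectors. Both arguments are correct, and your approach is in fact the one the paper gestures at when it says ``linear-programming duality can be used to provide a stand-alone proof of existence for the quasilinear setting.''
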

\begin{proof}
	
	Since quasilinear utilities are a special case of piecewise-linear utilities, the existence results mentioned previously ensure that $\mathcal{Q}$ admits an $\EF$ solution. In fact, linear-programming duality can be used to provide a stand-alone proof of existence for the quasilinear setting.  We will assume the existence of fair solutions in this setting and first show that if $(\pi, p)$ is an envy-free solution, then for any maximum weight perfect matching, $\sigma$, the solution $(\sigma, p)$ is also envy free. \\
	
	\noindent 
	{\bf If Part:} Since $(\pi,p)$ is an $\EF$ solution, $v_a(\pi(a),p_{\pi(a)}) \geq v_a(\sigma(a),p_{\sigma(a)})$ for all agents $a$. Given that $\sigma$ is  maximum weight perfect matching in $\mathcal{G}(\mathcal{Q})$, we have $ \sum_a B^a_{\pi(a)}  \leq \sum_a B^a_{\sigma(a)}$. Subtracting $\sum_r p_r$ from both sides gives us $\sum_a B^a_{\pi(a)} - \sum_r p_r  \leq \sum_a B^a_{\sigma(a)} - \sum_r p_r$. Regrouping the terms, we get $\sum_a (B^a_{\pi(a)} -p_{\pi(a)}) \leq \sum_a (B^a_{\sigma(a)} -p_{\sigma(a)})$. That is, $\sum_a v_a(\pi(a),p_{\pi(a)}) \leq \sum_a v_a(\sigma(a),p_{\sigma(a)})$.
	
	
	Hence, the termwise inequality (which follows from the envy freeness of $(\pi, p)$) mentioned above implies that $v_a(\pi(a),p_{\pi(a)}) = v_a(\sigma(a),p_{\sigma(a)})$ for all agents $a$. Therefore, $(\sigma, p)$ is also an envy-free solution of the given instance $\mathcal{Q}$. \\
	
	\noindent 
	{\bf Only if Part:} Assume, for contradiction, that $\pi$ is not a maximum weight perfect matching, but $(\pi, p)$ is an $\EF$ solution. Write $\pi^*$ to denote a maximum weight perfect matching in $\mathcal{G}(\mathcal{Q})$. Therefore,
	\begin{align*}
	\sum_a B^a_{\pi(a)} &< \sum_a B^a_{\pi^*(a)} \\
	\sum_a B^a_{\pi(a)} - \sum_r p_r &< \sum_a B^a_{\pi^*(a)} - \sum_r p_r  &\text{(subtracting $\sum_r p_r$)} \\
	\sum_a (B^a_{\pi(a)} -p_{\pi(a)}) &< \sum_a (B^a_{\pi^*(a)} -p_{\pi^*(a)})   \\
	\sum_a v_a(\pi(a),p_{\pi(a)}) &< \sum_a v_a(\pi^*(a),p_{\pi^*(a)})
	\end{align*} 
	Hence, there exists an agent $a$ such that $v_a(\pi(a),p_{\pi(a)}) < v_a(\pi^*(a),p_{\pi^*(a)})$. This inequality contradicts the envy freeness of $(\pi, p)$.

\end{proof}

A direct implication of \cref{lemma:ef-quasi} is that finding \emph{any} maximum weight perfect matching of $\mathcal{G}(\mathcal{Q})$ (which is a polynomial-time computation) gives us an allocation of an envy-free solution. Once the allocation is fixed, for quasilinear utilities the envy-free requirements correspond to  linear constraints. Therefore, we can write a linear program to find envy-free prices. 

{
	\begin{algorithm}[H]
		{
			\caption{Algorithm to compute an envy-free solution under quasilinear utilities}
			{\bf Input:} A rent division instance with quasilinear utilities $\mathcal{Q} =  \langle \mathcal{A}, \mathcal{R}, \{ v_a(r, \cdot) \}_{a , r } \  \rangle$ \\
			{\bf Output:} An envy-free solution $(\pi,p)$ for $\mathcal{Q}$
			\label{algorithm:quasi-linear}
			\begin{algorithmic}[1]
				\STATE Compute $\pi$, a maximum weight perfect matching in $\mathcal{G}(\mathcal{Q})$ 
				\STATE Set $p$ to be a solution of the following linear program
				\begin{align*}
				\max_{x \in \mathbb{R}^n} & \ \  \mathbb{0}^T x \\  
				\text{subject to} & \ \ \qquad\qquad x  \geq \mathbb{0}\\ 
				& \ \ \ v_a( \pi(a) , x_{\pi(a)}) \geq v_a(r, x_r) \quad \text{ for all } a \in \mathcal{A}  \text{ and } r \in \mathcal{R} 
				\end{align*}
				\STATE Return $(\pi, p)$
			\end{algorithmic}
		}
	\end{algorithm}
}

\begin{lemma}
	\label{theorem:quasi-linear}
	Given any rent division instance $\mathcal{Q} =  \langle \mathcal{A}, \mathcal{R}, \{ v_a(r, \cdot) \}_{a, r} \  \rangle$ with quasilinear utilities, \cref{algorithm:quasi-linear} computes an envy-free solution of $\mathcal{Q}$ in polynomial time.
\end{lemma}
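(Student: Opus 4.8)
The plan is to separate the claim into a correctness part and a running-time part, with \cref{lemma:ef-quasi} doing almost all of the work for correctness. First I would note that \cref{algorithm:quasi-linear} outputs a pair $(\pi,p)$ in which $\pi$ is a maximum weight perfect matching of $\mathcal{G}(\mathcal{Q})$ and $p$ is an optimal solution of the displayed linear program. Since each utility is affine, $v_a(r,x_r) = B^a_r - x_r$, the constraints $v_a(\pi(a),x_{\pi(a)}) \geq v_a(r,x_r)$ for all $a,r$ together with $x \geq \mathbb{0}$ say exactly that $x$ is a nonnegative price vector under which the allocation $\pi$ is envy free (the constraint with $r=\pi(a)$ being vacuous). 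Hence, once feasibility of the linear program is established, the returned $p$ is a nonnegative price vector with $(\pi,p)$ being $\EF$, which is precisely the desired output.

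The only point that genuinely needs justification is therefore feasibility of the linear program. Here I would invoke the universal existence of envy-free solutions: since quasilinear utilities are a special case of continuous, monotone decreasing, piecewise-linear utilities, the set $\mathcal{E}(\mathcal{Q})$ of nonnegative prices inducing an $\EF$ solution is nonempty (as noted in the text, one can alternatively give a self-contained existence proof in the quasilinear case via linear-programming duality). Fix any $p^\ast \in \mathcal{E}(\mathcal{Q})$ and an allocation $\pi'$ such that $(\pi', p^\ast)$ is $\EF$. Since $\pi$ is a maximum weight perfect matching in $\mathcal{G}(\mathcal{Q})$, the ``if'' direction of \cref{lemma:ef-quasi} shows that $(\pi, p^\ast)$ is $\EF$ as well, so $p^\ast$ satisfies every constraint of the linear program; thus the program is feasible. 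Because its objective $\mathbb{0}^T x$ is identically zero, the program is bounded and every feasible point is optimal, so \cref{algorithm:quasi-linear} returns a price vector $p \geq \mathbb{0}$ with $(\pi,p)$ envy free.

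For the running time, computing a maximum weight perfect matching in the complete bipartite graph $\mathcal{G}(\mathcal{Q})$ on $2n$ vertices takes polynomial time (e.g.\ the Hungarian algorithm), and the linear program in Step~2 has $n$ variables, $O(n^2)$ linear inequality constraints, and coefficients (the base values $B^a_r$ and $\pm 1$) of polynomial bit complexity, hence is solvable in polynomial time. Composing these two polynomial-time steps gives the stated bound. I expect the feasibility argument to be the only non-routine step, and even it reduces immediately to \cref{lemma:ef-quasi} combined with the known existence of envy-free solutions; everything else is a direct unwinding of the definitions of the first-choice graph and of envy freeness.
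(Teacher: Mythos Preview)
Your proposal is correct and follows essentially the same approach as the paper: correctness is reduced to \cref{lemma:ef-quasi} (your explicit feasibility argument is exactly the content of the ``if'' direction of that lemma together with existence of an $\EF$ solution), and the runtime bound comes from the polynomial-time solvability of maximum weight perfect matching and of a linear program with $n$ variables and $O(n^2)$ constraints. The paper's own proof is terser---it simply asserts that correctness follows from \cref{lemma:ef-quasi}---but you have unpacked the same argument.
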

\begin{proof}
	The runtime analysis is direct. A maximum weight perfect matching can be computed in polynomial time. Furthermore, the linear program can be solved in time polynomial in $n$, since it contains $n$ variables and $\mathcal{O}(n^2)$ constraints. 
	
	The correctness of the algorithm follows from \cref{lemma:ef-quasi}.
\end{proof}

\section{Perturbation Lemma}
\label{appendix:perturbation}

As mentioned previously, this works uses a variant of the Perturbation Lemma of Alkan et al.~\cite{alkan1991fair} (\cref{lemma:perturb}).  This section provides a proof of this variant for completeness. 

We first consider rent-division instances $\mathcal{Q} = \langle \mathcal{A}, \mathcal{R}, \{ v_a(r, \cdot) \}_{a, r} \  \rangle$ with quasilinear utilities, $v_a(r, x) := B^a_r - x$. Write $\mathcal{G}(\mathcal{Q}) = (\mathcal{A} \cup \mathcal{R}, \mathcal{A} \times \mathcal{R})$ to denote the complete bipartite graph (between the agents and the rooms) in which the weight of each edge $(a,r) \in \mathcal{A} \times \mathcal{R}$ is equal to $B^a_r$. 

Recall that \cref{lemma:ef-quasi} shows that if $\pi$ is a maximum weight perfect matching in $\mathcal{G}(\mathcal{Q})$, then there exists a price vector $p \in \mathbb{R}^n$ such that  $B^a_{\pi(a)} - p_{\pi(a)} \geq B^a_r - p_r$ for all $a$ and $r$ (i.e., $(\pi, p)$ is an $\EF$ solution). Next, we state and prove a multiplicative version of this lemma.  




\begin{lemma} \label{multiplicative quasi}
	Let $\mathcal{H} =([n] \cup [n], [n] \times [n], W) $ be a complete bipartite graph with nonnegative edge weights, $W_{(a, r)} \geq 0$ for all $a, r \in [n]$. If $\mathcal{H}$ admits a positive matching (i.e., there exists a matching $\pi$ such that $ W_{(a, \pi(a))} > 0$ for all $ a \in [n]$), then there exists a matching $\sigma$ of $\mathcal{H}$ and a positive vector $d \in \mathbb{R}^n_+$ such that 
	\begin{align*}
	d_{\sigma(a)} W_{(a, \sigma(a))} & \geq d_r W_{(a, r)} \qquad \text{for all $a,r \in [n]$}
	\end{align*}
\end{lemma}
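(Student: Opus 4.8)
The plan is to reduce the multiplicative statement to the additive (quasilinear) version that is already available through \cref{lemma:ef-quasi}, by taking logarithms. Since $\mathcal{H}$ admits a positive matching, there is at least one perfect matching using only strictly positive edges; I will restrict attention to those edges. Concretely, define a new bipartite graph $\mathcal{H}'$ on the same vertex set whose edge weights are $B^a_r \coloneqq \log W_{(a,r)}$ whenever $W_{(a,r)} > 0$, and $B^a_r \coloneqq -\infty$ (or, to stay finite, any number smaller than $n \min_{W_{(a,r)}>0} \log W_{(a,r)} - \max \log W_{(a,r)}$) when $W_{(a,r)} = 0$. The existence of a positive matching guarantees that $\mathcal{H}'$ has a perfect matching of finite weight, and by the choice of the penalty value any maximum weight perfect matching $\sigma$ in $\mathcal{H}'$ uses only edges with $W_{(a,\sigma(a))} > 0$.

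Next I would invoke \cref{lemma:ef-quasi}: treating $\{B^a_r\}$ as the base values of a quasilinear instance, any maximum weight perfect matching $\sigma$ of $\mathcal{H}'$ is the allocation of an envy-free solution, so there exists a price vector $p \in \mathbb{R}^n$ with $B^a_{\sigma(a)} - p_{\sigma(a)} \geq B^a_r - p_r$ for all $a, r$. Exponentiating gives $e^{-p_{\sigma(a)}} W_{(a, \sigma(a))} \geq e^{-p_r} W_{(a,r)}$ for all $a,r$ with $W_{(a,r)} > 0$; setting $d_r \coloneqq e^{-p_r} > 0$ yields the desired inequality for those pairs. For pairs with $W_{(a,r)} = 0$ the right-hand side $d_r W_{(a,r)} = 0$, while the left-hand side $d_{\sigma(a)} W_{(a,\sigma(a))} > 0$ since $\sigma$ uses only positive edges, so the inequality holds trivially. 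This produces the matching $\sigma$ and positive vector $d \in \mathbb{R}^n_+$ claimed in the lemma.

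The one point requiring care — and the main obstacle — is handling the zero-weight edges: \cref{lemma:ef-quasi} is stated for finite base values, so I cannot literally use $-\infty$. The fix is the finite penalty above: one checks that if a maximum weight perfect matching of $\mathcal{H}'$ used even a single penalized edge, its total weight would be below the weight of the all-positive matching guaranteed by hypothesis, a contradiction. This confirms $\sigma$ is supported on strictly positive edges, which is exactly what makes the $W_{(a,r)} = 0$ case of the conclusion automatic. Everything else is routine: the logarithm is monotone, so maximum weight perfect matchings of $\mathcal{H}'$ restricted to positive edges correspond to maximum weight perfect matchings of $\mathcal{H}$ in the multiplicative (product) sense, though we only need the additive envy-free certificate, not this correspondence. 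I would also remark that $d$ need not be a probability vector or bounded in any particular range — positivity is all the statement asks for — so no further normalization is needed.
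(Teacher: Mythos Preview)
Your proposal is correct and follows essentially the same approach as the paper: take logarithms of the positive weights, invoke \cref{lemma:ef-quasi} to obtain an additive envy-free price vector for the resulting quasilinear instance, and exponentiate to recover the multiplicative inequality with $d_r = e^{-p_r}$. Your treatment is in fact slightly more careful than the paper's, which simply uses $-\infty$ for zero-weight edges and does not separately justify the $W_{(a,r)}=0$ case.
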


\begin{proof}
	Construct a bipartite graph $\mathcal{G}=(\mathcal{A} \cup \mathcal{R}, \mathcal{A} \times \mathcal{R})$ with $\mathcal{A}=\mathcal{R}=[n]$ and edge weights $\{ B^a_r \}_{a,r}$ as follows
	\begin{align*}
	B^a_r  & :=
	\begin{cases}
	\log W_{(a, r)}  & \quad \text{if } W_{(a, r)} > 0 \ \text{ for } (a,r) \in \mathcal{A} \times \mathcal{R}  \\
	-\infty & \quad \text{else if} \ \ \ W_{(a,r)} = 0 
	\end{cases} 
	\end{align*}
	
	Write $\sigma$ to denote a maximum weight perfect matching in $\mathcal{G}$. Since $\mathcal{H}$ contains a positive perfect matching,  the weight of $\sigma$ is finite. In addition, \cref{lemma:ef-quasi} ensures that there exists $p \in \mathbb{R}^n$ such that 
	\begin{align*}
	\log W_{(a, \sigma(a))} - p_{\sigma(a)} & \geq \log W_{(a, r)} - p_r \qquad \text{ for all } a, r \in [n].
	\end{align*}
	
	Setting $d_r := \exp(-p_r) > 0$ and taking exponentials, we get the stated result. 
\end{proof}

We are now ready to prove the variant of Perturbation Lemma which identifies the allocations (bijections) which are realizable in the perturbed solutions. 

Recall that $\mathcal{F}(p)$ denotes the first-choice graph under a given price vector $p \in \mathbb{R}^n$, and $\mathcal{F}^w(p)$ denotes the weighted version on this bipartite graph. Also, note that in the following lemma the utilities are not confined to be quasilinear. 

\LemmaStatementPerturb*


\begin{proof}
	Let $(L,U)$ denote the linear domain of the given price vector $p$ and write $F$ to denote the edge set of the graph $\mathcal{F}(p)$, i.e.,  $F := \{ (a,r) \in \A \times \R \mid  v_a(r, p_r) = v_a(\pi(a), p_{\pi(a)}) \}$. Note that $(a, \pi(a))$ belongs to this set of first-choice edges ${F}$ for all agents $a \in \mathcal{A}$. 
	
	As before, we use $w_{(a,r)}$ to denote the edge weights of the graph $\mathcal{F}^w(p)$, i.e., $w_{(a,r)} := \log \lambda^a_r$,  for all $(a,r) \in F$. Here, $\lambda^a_r >0 $ is the (fixed) magnitude of the slope of the utility function $v_a(r, \cdot)$ in the linear domain $(L, U)$.
	
	Since the utility functions $v_a(r, \cdot)$ are continuous, monotone decreasing, and piecewise linear, the following equality holds for all $a \in \A$, $r \in \R$, and any vector $d \in \mathbb{R}^n_+$ with sufficiently small, positive components 
	\begin{align*}
	v_a(r, p_r-d_r) & = v_a(r, p_r) + \lambda^a_r \ d_r 
	\end{align*}
	
	Define a complete bipartite graph $\mathcal{H} = (\mathcal{A} \cup \mathcal{R}, \mathcal{A} \times \mathcal{R}, W)$ wherein the weight of each edge $(a,r) \in F$ is set to be $W_{(a,r)} = \exp( w_{(a,r)} ) = \lambda^a_r >0$, and $W_{(a,r)}= 0$ for all $(a,r) \in (\mathcal{A} \times \mathcal{R}) \setminus F$. The edge weights in $\mathcal{H}$ are nonnegative and it contains a positive matching, in particular $\pi$; specifically, only the first-choice edges (i.e., the edges in $F$) have positive weight in $\mathcal{H}$. 
	
	Lemma \ref{multiplicative quasi} implies that there exists a matching $\sigma$ and a positive vector $d \in \mathbb{R}_+^n$ such that 
	
	\begin{align}
	\label{ineq:mult-factor}
	d_{\sigma(a)}  \ W_{(a, \sigma(a))} & \geq d_r \  W_{(a,r)} \quad \text{ for all $a$ and $r$}.
	\end{align}
	
	It is relevant to note that the graph $\mathcal{G}$ considered in the proof of \cref{multiplicative quasi} corresponds to $\mathcal{F}^w(p)$. Hence, via the proof of \cref{multiplicative quasi}, we get that $\sigma$ is a maximum weight perfect matching in $\mathcal{F}^w(p)$. 
	
	Furthermore, for each agent  $a \in \A$, the edge $(a, \sigma(a))$ is a first-choice edges, i.e., $(a, \sigma(a)) \in F$: assume, for contradiction, that $(a, \sigma(a)) \notin F$, then the weight of this edge in $\mathcal{H}$ is zero, $W_{(a, \sigma(a))} =0$. Setting $r=\pi(a)$ and using (\ref{ineq:mult-factor}), we get $W_{(a, \pi(a))} = 0$. This contradicts that fact that $(a, \pi(a)) \in F$, since the weight (under $W$) of every edge in $F$ is positive.  
	
	Since $(a, \sigma(a)) \in F$ for all $a$, we have 
	\begin{align} 
	\label{multieq}
	d_{\sigma(a)}  \ \lambda^a_{\sigma(a)} &  \geq d_r \  \lambda^a_r  \quad \text{for all } (a, r) \in F.
	\end{align}
	
	Note that the components of the positive vector $d$ can be multiplicatively scaled to obtain $0 < d_r \leq \delta$, for any positive parameter $\delta >0$. A uniform scaling also ensures that inequality (\ref{multieq}) continues to hold. 
	
	Setting parameter $\delta >0$ to be small enough and price vector $q := p - d$, next we will complete the proof by establishing that $(\sigma, q)$ is an $\EF$ solution of the given instance. 
	
	For each agent $a$, we will consider two exhaustive cases (i) $(a,r) \in F$ and (ii) $(a,r) \notin F$. 
	If $(a,r) \in F$ (i.e., $(a, r)$ is a first choice edge), then 
	\begin{align*}
	v_a(r, q_r) & = v_a(r, p_r - d_r) & \text{(by definition of $q$)} \\
	& = v_a(r, p_r) + d_r \lambda^a_r & \text{(by peicewise linearity of $v_a(r, \cdot)$ and $d_r \leq \delta$)}\\
	& \leq v_a(r, p_r) + d_{\sigma(a)}  \ \lambda^a_{\sigma(a)} & \text{(by inequality (\ref{multieq}))} \\
	& = v_a(\sigma(a), p_{\sigma(a)}) + d_{\sigma(a)}  \ \lambda^a_{\sigma(a)} & \text{(since $(a, \sigma(a)), (a, r) \in F$)} \\
	& =  v_a(\sigma(a), p_{\sigma(a)} - d_{\sigma(a)}) & \text{(by peicewise linearity of $v_a(r, \cdot)$ and $d_{\sigma(a)} \leq \delta$)}\\
	& =  v_a(\sigma(a), q_{\sigma(a)})& \text{(by definition of $q$)} 
	\end{align*} 
	
	On the other hand, if $(a,r) \notin F$, then we have $v_a(r, p_r)  <  v_a (\pi(a), p_{\pi(a)}) = v_a(\sigma(a), p_{\sigma(a)})$ (recall that $(a, \sigma(a)) \in F$). Therefore, for a small enough $\delta >0$ the following inequality holds $v_a(r, p_r) + d_r \lambda^a_r \leq v_a(\sigma(a), p_{\sigma(a)}) + d_{\sigma(a)} \lambda^a_{\sigma(a)}$.\footnote{Here, $d_r, d_{\sigma(a)} \leq \delta$. Also, note that even if $(a,r) \notin F$, the magnitude of the slope $\lambda^a_r$ (of the utility function $v_a(r, \cdot)$) is well-defined and fixed though the linear domain $(L,U)$.} Since the utility functions are piecewise linear, we get $v_a(r, p_r - d_r) \leq  v_a( \sigma(a), p_{\sigma(a)} - d_{\sigma(a)})$. The definition of $q$ gives us the desired inequality
	
	\begin{align*}
	v_a(r, q_r) \leq v_a(\sigma(a), q_{\sigma(a)})
	\end{align*}

	Overall, we get that $(\sigma, q)$ is an $\EF$ solution wherein $\sigma$ is a maximum weight perfect matching in $\mathcal{F}^w(p)$ and $q$ is componentwise less than the given price vector $p$. This completes the proof. 
\end{proof}

  \section{Counterexamples}
  \label{section:counter}
  This section provides examples to show that  both continuity and boundedness are necessary assumptions on the utility functions to guarantee the existence of an envy-free solution. \\


  \noindent
  {\bf Unbounded utility functions:} Consider a rent-division instance $\I$ with two agents that have identical utilities for each of the two rooms $\{r_1, r_2\}$. In particular, for $a \in \{1, 2\}$, let $v_a(r_1, x) := e^{-x} +2$ and $v_a(r_2, x) :=   -e^x +2$. Therefore, for all $a$ and $r$, $v_a(r, \cdot) $ is continuous and monotone decreasing, but also \textit{unbounded}. Specifically, for any price vector $p \in \mathbb{R}^2$, we have $v_a(r_1, p_1) \neq v_a(r_2, p_2)$. 
  
  Note that in a rent-division instance wherein the utilities of agents are identical, a price vector $p$ induces an envy-free solution if and only if under $p$ the utilities of the agents are equal for all the rooms. Therefore, this instance does not admit an envy-free solution. \\
  
  \noindent 
  {\bf Discontinuous utility functions:}  Consider a rent-division instance $\I$ with two agents that have identical utilities for each of the two rooms $\{r_1, r_2\}$. Specifically, the utility function $v_a(r_1, \cdot)$ is a union of monotone decreasing, linear functions in the $1/10$-neighborhood of even integers and the utility function $v_a(r_2, \cdot)$ is a union of monotone decreasing, linear functions in the $1/10$-neighborhood of odd integers; see Figure \ref{fig:disc}. Therefore, for all $a$ and $r$, $v_a(r, \cdot) $ is monotone decreasing and bounded, but also \textit{discontinuous}. By construction, we have  $v_a(r_1, p_1) \neq v_a(r_2, p_2) $ at any price vector $p \in \mathbb{R}^2$. 
  
  Note that in a rent-division instance in which the utilities of agents are identical, a price vector $p$ induces an envy-free solution if and only if under $p$ the utilities of the agents are same for all the rooms. Therefore, this instance does not admit an envy-free solution.
  \begin{figure}[h]
  	\begin{center}
  		\includegraphics[scale=0.6]{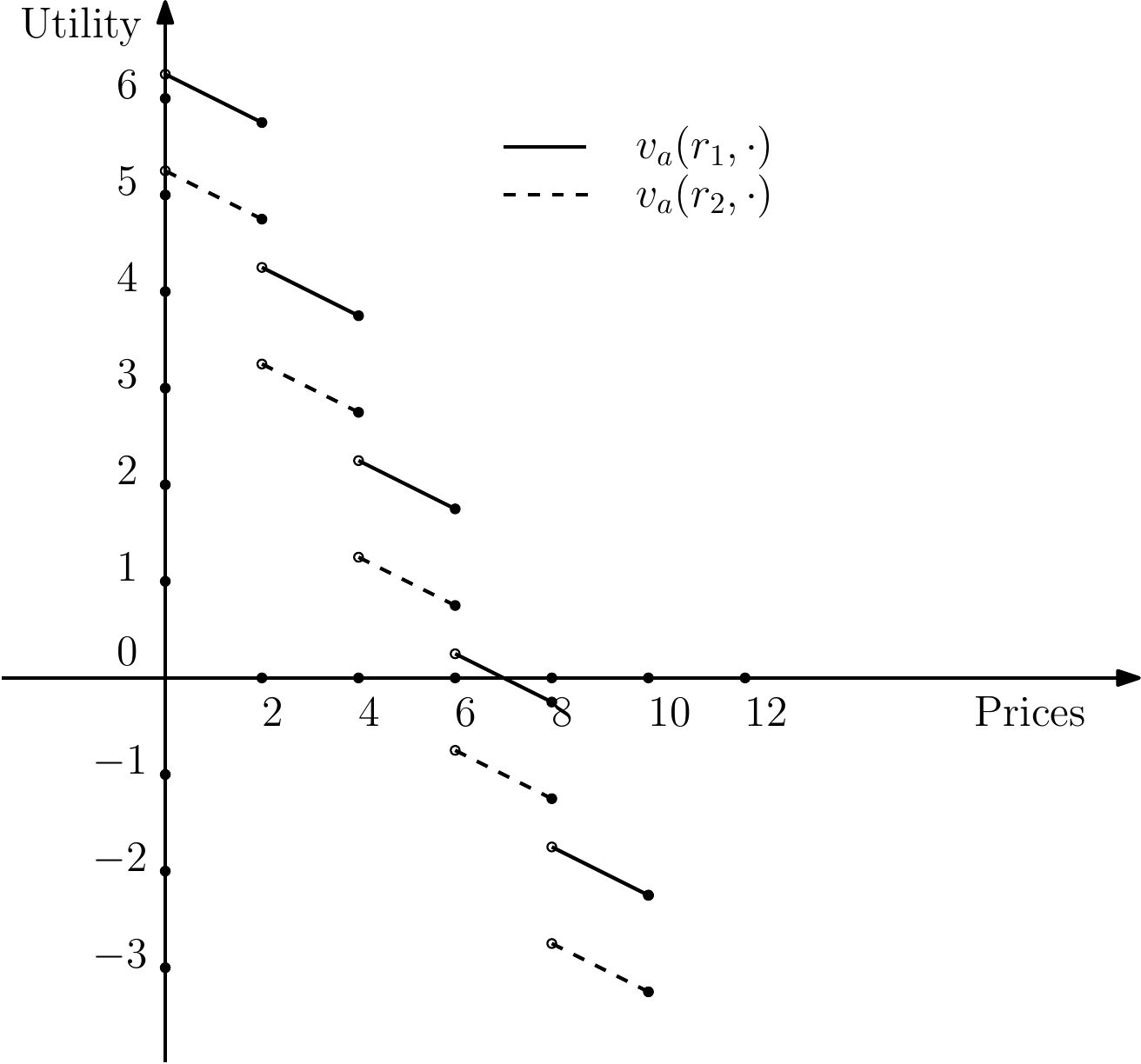}
  	\end{center}
  	\caption{Envy-free solutions might not exist under discontinuous utility functions }
  	\label{fig:disc}
  \end{figure}

\section{Example from \cref{figure:non-convex}}
\label{section: Non-convexity}
This section details an example which highlights that the set of prices that induce $\EF$-solutions can be nonconvex, though this set consists of a chain of polytopes which successively intersect; see \cref{figure:non-convex}.

The rent division instance consists of three agents with the following linear utilities for the three rooms; here, the rows of the $3 \times 3$ matrix given below correspond to the agents and the columns correspond to the rooms. 
\begin{align*}
\begin{bmatrix}
8-8x & 2-1.5x & 1-x  \\
1-x & 8-8x & 2-1.5x\\
2-1.5x & 1-x & 8-8x
\end{bmatrix}
\end{align*}
For this instance, the envy-free polytopes associated with the allocations  $\sigma_1 = (3,1,2), \sigma_2 =(2,3,1)$ and $\sigma_3 = (1,2,3)$ are nonempty. Note that \cref{algorithm:envy-free}, when executed on this instance, will consider these three permutations in order. In addition, as stated in  \cref{lemma:wts-drop}, we have $w({\sigma_1}) < w({\sigma_2}) < w({\sigma_3})$.

\end{document}